\newtheorem{theorem}{Theorem}[section]
\newtheorem{lemma}[theorem]{Lemma}
\newtheorem{proposition}[theorem]{Proposition}
\newtheorem{corollary}[theorem]{Corollary}
\newtheorem{claim}{Claim}
\newtheorem{QUE}{Question}
\newtheorem*{thm:main}{Theorem~\ref{thm:main}}
\newenvironment{clproof}{\begin{list}{}{%
              \setlength{\leftmargin}{5mm}%
              } \item {\it Proof.} }{\hfill$\lozenge$\end{list}\medskip}
\newcommand\abs[1]{\lvert #1\rvert}
\newcommand\cutrk{\operatorname{cutrk}}
\newcommand\tw{\operatorname{tw}}
\newcommand\rw{\operatorname{rw}}
\newcommand\simw{\operatorname{simw}}
\newcommand\lsimw{\operatorname{lsimw}}
\newcommand\lmimw{\operatorname{lmimw}}
\newcommand\simval{\operatorname{sim}}
\newcommand\mimw{\operatorname{mimw}}
\newcommand\mimval{\operatorname{mim}}
\newcommand\mat{\boxminus}
\newcommand\cB{\mathcal B}
\newcommand{\ie}{i.\,e.\@\xspace}
\begin{document}
\title{A width parameter useful for chordal and co-comparability graphs}

\author[1]{Dong Yeap Kang\thanks{Supported by TJ Park Science Fellowship of POSCO TJ Park Foundation.}}

\affil[1]{Department of Mathematical Sciences, KAIST,  Daejeon, South Korea.}

\author[2]{O-joung Kwon\thanks{Supported by the European Research Council (ERC) under the European Union's Horizon 2020 research and innovation programme (ERC consolidator grant DISTRUCT, agreement No. 648527).}}

\affil[2]{Logic and Semantics, Technische Universit\"at Berlin, Berlin, Germany.}

\author[3]{Torstein J.\,F. Str{\o}mme}

\author[3]{Jan Arne Telle}
\affil[3]{Department of Informatics, University of Bergen, Norway.}

\date\today
\maketitle

\footnotetext{E-mail addresses: \texttt{dynamical@kaist.ac.kr} (D. Kang), \texttt{ojoungkwon@gmail.com} (O. Kwon), \texttt{torstein.stromme@ii.uib.no} (T. Str{\o}mme), 
\texttt{Jan.Arne.Telle@uib.no} (J. A. Telle). \\
An extended abstract appeared in the proceedings of 11th International Workshop on Algorithms and Computation, 2017. }

\begin{abstract}

Belmonte and Vatshelle (TCS 2013) used mim-width, a graph width parameter bounded on interval graphs and permutation graphs, to explain existing algorithms for many domination-type problems on those graph classes.
We investigate new graph classes of bounded mim-width, strictly extending interval graphs and permutation graphs.
The graphs $K_t \mat K_t$ and $K_t \mat S_t$ are graphs obtained from the disjoint union of two cliques of size $t$, and one clique of size $t$ and one independent set of size $t$ respectively, by adding a perfect matching.
We prove that :
\begin{itemize}
\item interval graphs are $(K_3\mat S_3)$-free chordal graphs; and $(K_t \mat S_t)$-free chordal graphs have mim-width at most $t-1$, 
\item permutation graphs are $(K_3\mat K_3)$-free co-comparability graphs; and $(K_t \mat K_t)$-free co-comparability graphs have mim-width at most $t-1$, 
\item chordal graphs and co-comparability graphs have unbounded mim-width in general.
\end{itemize}
We obtain several algorithmic consequences; for instance, while \textsc{Minimum Dominating Set} is NP-complete on chordal graphs, 
 it can be solved in time $n^{\mathcal{O}(t)}$ on $(K_t\mat S_t)$-free chordal graphs. 
 The third statement strengthens a result of Belmonte and Vatshelle stating that either those classes 
 do not have constant mim-width or a 
 decomposition with constant mim-width cannot be computed in polynomial time unless $P=NP$.

We generalize these ideas to bigger graph classes. 
We introduce a new width parameter \emph{sim-width}, of stronger modelling power than mim-width, by making a small change in the definition of mim-width. We prove that chordal graphs and co-comparability graphs have sim-width at most 1.
We investigate a way to bound mim-width for graphs of bounded sim-width by excluding $K_t\mat K_t$ and $K_t\mat S_t$ as induced minors or induced subgraphs, and give algorithmic consequences.
Lastly, we show that circle graphs have unbounded sim-width, and thus also unbounded mim-width.
\end{abstract}

\section{Introduction}\label{sec:introduction}

The study of structural graph ``width'' parameters like tree-width and clique-width have been ongoing since at least the 1990's, and their algorithmic use has been steadily increasing~\cite{Cygan15}.
A parallel and somewhat older field of study gives algorithms for special graph classes, such as chordal graphs and permutation graphs. Since the introduction of algorithms based on tree-width, which generalized algorithms for trees and series-parallel graphs, these two fields have been connected. Recently the fields became even more intertwined with the introduction of mim-width in 2012~\cite{VatshelleThesis}, which yeilded generalized algorithms for quite large graph classes.
In the context of parameterized complexity a negative relationship holds between the modeling power of graph width parameters, \ie{} what graph classes have bounded parameter values, and their analytical power, \ie{} what problems become FPT or XP.
For example, the family of graph classes of bounded width is strictly larger for clique-width than for tree-width, while under standard complexity assumptions the class of problems solvable in FPT time on a decomposition of bounded width is strictly larger for tree-width than for clique-width. For a parameter like mim-width its algorithmic use must therefore be carefully evaluated against its modelling power, which is much stronger than clique-width.

A common framework for defining graph width parameters is by branch decompositions over the vertex set. This is a natural hierarchical clustering of a graph $G$, represented as an unrooted binary tree $T$ with the vertices of $G$ at its leaves. Any edge of the tree defines a cut of $G$ given by the leaves of the two subtrees that result from removing the edge from $T$. Judiciously choosing a cut-function to measure the complexity of such cuts, or rather of the bipartite subgraphs of $G$ given by the edges crossing the cuts, this framework then defines a graph width parameter by a minmax relation, minimum over all trees and maximum over all its cuts. Restricting $T$ to a path with added leaves we get a linear variant.
The first graph parameter defined this way was carving-width~\cite{DBLP:journals/combinatorica/SeymourT94}, whose cut-function is simply the number of edges crossing the cut, and whose modelling power is weaker than tree-width. The carving-width of a graph $G$ is thus the minimum, over all such branch decomposition trees, of the maximum number of edges crossing a cut given by an edge of the tree.
Several graph width parameters have been defined this way. For example the mm-width~\cite{VatshelleThesis}, whose cut function is the size of the maximum matching and has modelling power equal to tree-width; and the rank-width~\cite{OS2004}, whose cut function is the GF[2]-rank of the adjacency matrix and has modelling power equal to clique-width. 

This framework was used by Vatshelle in 2012~\cite{VatshelleThesis} to define the parameter mim-width whose cut function is the size of the maximum induced matching of the graph crossing the cut. Note that low mim-width allows quite complex cuts. Carving-width one allows just a single edge, mm-width one a star graph, and rank-width one
a complete bipartite graph with some isolated vertices. In contrast, mim-width one allows any cut where the neighborhoods of the vertices in a color class can be ordered linearly w.r.t. inclusion. The modelling power of mim-width is much stronger than clique-width. Belmonte and Vatshelle showed that interval graphs and permutation graphs have linear mim-width at most one, and circular-arc graphs and trapezoid graphs have linear mim-width at most two~\cite{BelmonteV2013} \footnote{In \cite{BelmonteV2013}, all the related results are stated in terms of $d$-neighborhood equivalence, but in the proof, they actually gave a bound on mim-width or linear mim-width.}, whereas the clique-width of such graphs can be proportional to the square root of the number of vertices. With such strong modelling power it is clear that the analytical power of mim-width must suffer. The cuts in a decomposition of constant mim-width are too complex to allow FPT algorithms for interesting NP-hard problems. Instead, what we get is XP algorithms, for the class of LC-VSVP problems~\cite{BinhminhJM2013} - locally checkable vertex subset and vertex partitioning problems - defined in Section \ref{subsec:lcvsvp}. For classes of bounded mim-width this gives a common explanation for many classical results in the field of algorithms on special graph classes.

In this paper we extend these results on mim-width in several ways. Firstly, we show that chordal graphs and co-comparability graphs have unbounded mim-width, thereby answering a question of ~\cite{BelmonteV2013}\footnote{Our result appeared on arXiv June 2016. In August 2016 a similar result by Mengel, developed independently, also appeared on arXiv \cite{Mengel}.}, and giving evidence to the intuition that mim-width is useful for large graph classes having a linear structure rather than those having a tree-like structure. Secondly, by excluding certain subgraphs, like $K_t \mat S_t$ obtained from the disjoint union of a clique of size $t$ and an independent set of size $t$ by adding a perfect matching, we find subclasses of chordal graphs and co-comparability graphs for which we can nevertheless compute a bounded mim-width decomposition in polynomial time and thereby solve LC-VSVP problems. See Figure~\ref{fig:construction} for illustrations.  Note that already for $K_3\mat S_3$-free chordal graphs the tree-like structure allowed by mim-width is necessary, as they have unbounded linear mim-width. Thirdly, we introduce sim-width, of modeling power even stronger than mim-width, and start a study of its properties. 

\begin{figure}[tb]
 \tikzstyle{v}=[circle, draw, solid, fill=black, inner sep=0pt, minimum width=3pt]
  \centering
    \begin{tikzpicture}[scale=0.4]
      \foreach \x in {1,...,5} {
        \node [v]  (v\x) at(0,-\x){};
        \node [v]  (w\x) at (2,-\x){};
      }
      \draw (v1)--(v5);
      \draw(v1) [in=120,out=-120] to (v3);
      \draw(v2) [in=120,out=-120] to (v4);
      \draw(v3) [in=120,out=-120] to (v5);
      \draw(v1) [in=120,out=-120] to (v4);
      \draw(v2) [in=120,out=-120] to (v5);
      \draw(v1) [in=120,out=-120] to (v5);
      \foreach \x in {1,...,5} 
      \foreach \y in {1,...,5} {
    \ifnum \x=\y  \draw (v\x)--(w\y);  \fi
      }
    \end{tikzpicture}
    \qquad \qquad \qquad
    \begin{tikzpicture}[scale=0.4]
      \foreach \x in {1,...,5} {
        \node [v]  (v\x) at(0,-\x){};
        \node [v]  (w\x) at (2,-\x){};
      }
      \draw (v1)--(v5);
      \draw(v1) [in=120,out=-120] to (v3);
      \draw(v2) [in=120,out=-120] to (v4);
      \draw(v3) [in=120,out=-120] to (v5);
      \draw(v1) [in=120,out=-120] to (v4);
      \draw(v2) [in=120,out=-120] to (v5);
      \draw(v1) [in=120,out=-120] to (v5);
      \draw (w1)--(w5);
      \draw(w1) [in=60,out=-60] to (w3);
      \draw(w2) [in=60,out=-60] to (w4);
      \draw(w3) [in=60,out=-60] to (w5);
      \draw(w1) [in=60,out=-60] to (w4);
      \draw(w2) [in=60,out=-60] to (w5);
      \draw(w1) [in=60,out=-60] to (w5);
      \foreach \x in {1,...,5} 
      \foreach \y in {1,...,5} {
    \ifnum \x=\y  \draw (v\x)--(w\y);  \fi
      }
    \end{tikzpicture}
      \caption{$K_5\mat S_5$ and $K_5\mat K_5$.}
  \label{fig:construction}
\end{figure}

The graph width parameter sim-width is defined within the same framework as mim-width with only a slight
change to its cut function, simply requiring that a special induced matching across a cut cannot contain edges between any pair of vertices on the same side of the cut. This exploits the fact that a cut function for branch decompositions over the vertex set of a graph need not be a parameter of the bipartite graph on edges crossing the cut. The cuts allowed by sim-width are more complex than for mim-width, making sim-width applicable to well-known graph classes having a tree-like structure. We show that chordal graphs and co-comparability graphs have sim-width one and that these decompositions can be found in polynomial time. See Figure~\ref{fig:diagram} for an inclusion diagram of some well-known graph classes illustrating these results.
\begin{figure}[tbh]
\centerline{\includegraphics[scale=0.65]{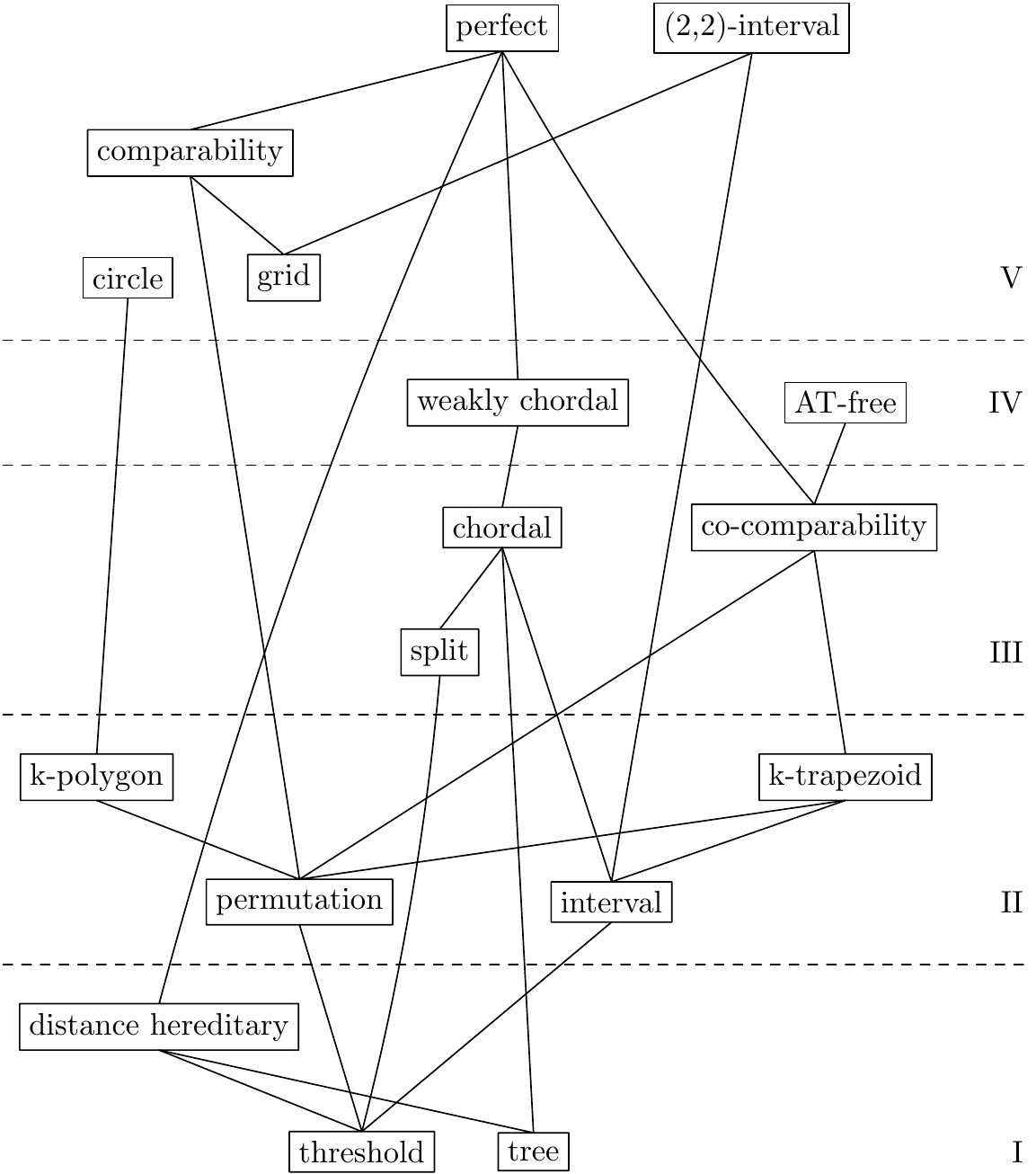}}
\caption{Inclusion diagram of some well-known graph classes.
(I) Classes where clique-width and rank-width are constant.
(II) Classes where mim-width is constant.
(III) Classes where sim-width is constant.
(IV) Classes where it is unknown if sim-width is constant.
(V) Classes where sim-width is unbounded.}
\label{fig:diagram}
\end{figure}
Since LC-VSVP problems like \textsc{Minimum Dominating Set} are NP-complete on chordal graphs we cannot expect XP algorithms parameterized by sim-width. However, for graphs of sim-width $w$, when excluding subgraphs $K_t \mat K_t$ or $K_t \mat S_t$, either as induced subgraphs or induced minors, we get graphs of bounded mim-width. 
The induced minor relation is natural since graphs of bounded sim-width are closed under induced minors, which might be of independent interest when taking the structural point of view. 
We show that by the alternate parametrization $w+t$ the LC-VSVP problems are  solvable in XP time on graphs excluding $K_t \mat K_t$ or $K_t \mat S_t$, when given with a decomposition of sim-width $w$.
The class of circle graphs is one of the classes listed in~\cite{BelmonteV2013} where 
either graphs in the class do not have constant mim-width, or it is NP-complete to find such a decomposition.
Using a technique recently introduced by Mengel~\cite{Mengel} we prove that sim-width of circle graphs is unbounded, which implies that mim-width of circle graphs is also unbounded.

Let us mention some more related work. Golovach et al~\cite{Golovach2015} applied linear mim-width in the context of enumeration algorithms, showing that 
all minimal dominating sets of graphs of bounded linear mim-width can be enumerated with polynomial delay using polynomial space.
The graph parameter mim-width has also been used for knowledge compilation and model counting in the field of satisfiability 
\cite{DBLP:conf/sat/BovaCMS15, DBLP:journals/jair/SaetherTV15, GPST}.
The LC-VSVP problems include the class of domination-type problems known as \textsc{$(\sigma, \rho)$-Domination} problems, whose intractability on chordal graphs is well known~\cite{BoothJ1982}.
For two subsets of natural numbers $\sigma, \rho$ a set $S$ of vertices is called $(\sigma, \rho)$-dominating if for every vertex $v \in S$, $|S \cap N(v)| \in \sigma$, and for every $v \not \in S$, $|S \cap N(v)| \in \rho$. Golovach et al \cite{GK-WG2007} showed that for fixed $\sigma, \rho$ the problem of deciding if a given chordal graph has a $(\sigma, \rho)$-dominating set, is NP-complete whenever there exists at least one chordal graph containing more than one $(\sigma, \rho)$-dominating set, as this graph can then be used as a gadget in a reduction. 
Golovach, Kratochv{\'{\i}}l and Such{\'{y}} \cite{GKS-DAM2012} extended these results to the parameterized setting, showing that 
the existence of a $(\sigma, \rho)$-dominating set of size $k$, and at most $k$, are W[1]-complete problems when parameterized by $k$ for any pair of finite sets $\sigma$ and $\rho$.
In contrast, combining our bounds on mim-width and algorithms of  Bui-Xuan, Telle, and Vatshelle~\cite{BinhminhJM2013} we obtain the following.
\begin{theorem}\label{thm:mainmimw2}
Let $t\ge 2$ be an integer. Given an $n$-vertex $(K_t \mat S_t)$-free chordal graph or an $n$-vertex $(K_t \mat K_t)$-free co-comparability graph,
every fixed LC-VSVP problem can be solved in time $n^{O(t)}$. 
\end{theorem}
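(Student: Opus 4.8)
The plan is to combine two ingredients that have (in constructive form) already been established in the paper: the structural bounds asserting that $(K_t \mat S_t)$-free chordal graphs and $(K_t \mat K_t)$-free co-comparability graphs have mim-width at most $t-1$, together with the algorithmic meta-theorem of Bui-Xuan, Telle and Vatshelle~\cite{BinhminhJM2013}, which solves every LC-VSVP problem on a graph supplied with a branch decomposition of bounded mim-width. The whole argument is then a two-step reduction, and the theorem follows by plugging $\mimw \le t-1$ into the running-time bound.

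First I would invoke the \emph{constructive} form of the mim-width bounds. Given the input graph $G$, which we are promised is either a $(K_t \mat S_t)$-free chordal graph or a $(K_t \mat K_t)$-free co-comparability graph, we run in polynomial time the procedure that produces a branch decomposition $(T,\delta)$ of $G$ of mim-width at most $t-1$. The point to verify carefully here is that the earlier results do not merely assert existence of such a decomposition but can actually be turned into a polynomial-time construction: in the chordal case by following the clique-separator / recursive decomposition used in the proof, and in the co-comparability case by first computing a suitable vertex ordering and reading a (caterpillar-type) decomposition off of it. This is the step I expect to require the most attention, since one must check that each structural lemma bounding the mim-width is effective and that the auxiliary orderings and separators can themselves be computed in polynomial time. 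Note that we need not verify the promised forbidden-subgraph conditions on $G$: the construction simply returns a decomposition of mim-width $\le t-1$ (and one may optionally make it self-certifying, so that failure witnesses $G \notin$ the class).

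Second, I would feed $(T,\delta)$ into the algorithm of~\cite{BinhminhJM2013}. For a \emph{fixed} LC-VSVP problem, the quantities on which that algorithm's complexity depends — the integer $d$ and the number of states of the problem — are constants. Hence the number of $d$-neighborhood equivalence classes across any cut of $(T,\delta)$ is $n^{O(d\cdot\mimw)}$, and the dynamic program over $T$ runs in time polynomial in $n$ and in this quantity, i.e. $n^{O(\mimw)}$. Substituting $\mimw(G) \le t-1$ gives total running time $n^{O(t)}$, and since the two cases of the statement differ only in which construction subroutine is called in the first step, this completes the proof in both cases.
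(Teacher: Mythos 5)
Your proposal is correct and follows essentially the same route as the paper: Propositions~\ref{prop:chordal} and~\ref{prop:cocomparability} construct, in linear time from a clique-tree (chordal case) or a co-comparability ordering (co-comparability case), a branch-decomposition of mim-width at most $t-1$, and this is then fed into Theorem~\ref{thm:mimalgvs} to get running time $n^{O(t)}$ for any fixed LC-VSVP problem. The only cosmetic difference is that the paper additionally observes that membership in the class can be tested in time $\mathcal{O}(n^{2t})$, whereas you treat the class membership as a promise, which is equally valid for the statement as given.
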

As a specific example, we show that \textsc{Minimum Dominating Set} can be solved in time $O(n^{3t+4})$ and \textsc{$q$-Coloring} can be solved in time $O(qn^{3qt+4})$. 
We note that given an $n$-vertex graph, one can test in time $\mathcal{O}(n^{2t})$ whether it contains an induced subgraph isomorphic to $K_t\mat S_t$ or not.
Therefore, a membership testing for $(K_t\mat S_t)$-free chordal graphs and the algorithm in Theorem~\ref{thm:mainmimw2} can be applied in time $n^{\mathcal{O}(t)}$.
The same argument holds for $(K_t\mat K_t)$-free co-comparability graphs.

The remainder of this paper is organized as follows. 
Section~\ref{sec:prelim} contains all the necessary notions required for our results.
In Section~\ref{sec:specialclass}, we prove that chordal graphs have sim-width at most $1$ and mim-width at most $t-1$ if they are $(K_t\mat S_t)$-free. 
Similarly we show that co-comparability graphs have linear sim-width at most $1$ and linear mim-width at most $t-1$ if they are $(K_t\mat K_t)$-free.
We provide polynomial-time algorithms to find such decompositions, and discuss their algorithmic consequences for LC-VSVP problems.
We also show that chordal graphs and co-comparability graphs have unbounded mim-width.
In Section \ref{sec:excludingtmatching}, we bound the mim-width of graphs of sim-width $w$ that do not contain $K_t\mat K_t$ and $K_t\mat S_t$ as induced subgraphs or induced minors, and give algorithmic consequences.
We also show that graphs of bounded sim-width are closed under induced minors. 
Lastly, we show in Section~\ref{sec:circlegraph} that circle graphs have unbounded sim-width.
We finish with some research questions related to sim-width in Section~\ref{sec:conclusion}.

\section{Preliminaries}\label{sec:prelim}

We denote the vertex set and edge set of a graph $G$ by $V(G)$ and $E(G)$, respectively.
Let $G$ be a graph. For a vertex $v$ of $G$, we denote by $N_G(v)$ the set of neighbors of $v$ in $G$.
For $v\in V(G)$ and $X\subseteq V(G)$, we denote by $G-v$ the graph obtained from $G$ by removing $v$, and 
denote by $G-X$ the graph obtained from $G$ by removing all vertices in $X$.
For $e\in E(G)$, we denote by $G-e$ the graph obtained from $G$ by removing $e$, and
denote by $G/e$ the graph obtained from $G$ by contracting $e$.
For a vertex $v$ of $G$ with exactly two neighbors $v_1$ and $v_2$ that are non-adjacent, 
the operation of removing $v$ and adding the edge $v_1v_2$ is called \emph{smoothing} the vertex $v$.
For $X\subseteq V(G)$, we denote by $G[X]$ the subgraph of $G$ induced by $X$.
A \emph{clique} in $G$ is a set of vertices of $G$ that are pairwise adjacent, and 
an \emph{independent set} in $G$ is a set of vertices that are pairwise non-adjacent.
A set of edges $\{v_1w_1, v_2w_2, \ldots, v_mw_m\}$ of $G$ is called an \emph{induced matching} in $G$ 
if 
there are no other edges in $G[\{v_1, \ldots, v_m, w_1, \ldots, w_m\}]$.
A matrix $M$ is called the \emph{adjacency matrix} of $G$ if 
the rows and columns of $M$ are indexed by $V(G)$,
and for $v, w\in V(G)$, 
$M[v,w]=1$ if $v$ and $w$ are adjacent in $G$, and $M[v,w]=0$ otherwise.

A pair of vertex subsets $(A,B)$ of a graph $G$ is called a \emph{vertex bipartition} if $A\cap B=\emptyset$ and $A\cup B=V(G)$.
For a vertex bipartition $(A, B)$ of a graph $G$,
we denote by $G[A, B]$ the bipartite graph on the bipartition $(A, B)$ where
for $a\in A$ and $b\in B$, $a$ and $b$ are adjacent in $G[A, B]$ if and only if they are adjacent in $G$.
For a vertex bipartition $(A,B)$ of $G$ and an induced matching $\{v_1w_1, v_2w_2, \ldots, v_mw_m\}$ in $G$ where $v_1, \ldots, v_m\in A$ and $w_1, \ldots, w_m\in B$, 
we say that it is an induced matching in $G$ between $A$ and $B$.

We denote by $\mathbb{N}$ the set of all non-negative integers, and let $\mathbb{N}^+:=\mathbb{N}\setminus \{0\}$.

\subsection{Graph classes}\label{subsec:graphclass}

A tree is called \emph{subcubic} if every internal node has exactly $3$ neighbors.
A tree $T$ is called a \emph{caterpillar} if it contains a path $P$ such that every vertex in $V(T)\setminus V(P)$ has a neighbor in $P$.
 The complete graph on $n$ vertices is denoted by $K_n$.

A graph is \emph{chordal} if it contains no induced subgraph isomorphic to a cycle of length 4 or more. 
A graph is a \emph{split} graph if it can be partitioned into two vertex sets $C$ and $I$ where $C$ is a clique and $I$ is an independent set.
A graph is an \emph{interval graph} if it is the intersection graph of a family of intervals on the real line.
Every split graph and interval graph is chordal.
An ordering $v_1, \ldots, v_n$ of the vertex set of a graph $G$ is called a \emph{co-comparability ordering} 
if for every integers $i,j,k$ with $1\le i<j<k\le n$, 
\begin{itemize}
\item if $v_i$ is adjacent to $v_k$, 
 then $v_j$ is adjacent to $v_i$ or $v_k$. 
 \end{itemize}
 This condition is equivalent to the following: for every integers $i,j,k$ with $1\le i<j<k\le n$, 
 $v_j$ has a neighbor in every path from $v_i$ to $v_k$ avoiding $v_j$.
 A graph is a \emph{co-comparability graph} if it admits a co-comparability ordering.
 Every co-comparability graph is the complement of some comparability graph, where comparability graphs 
 are graphs that can be obtained from some partial order by connecting pairs of elements that are comparable to each other.
A graph is a \emph{permutation graph} if it is the intersection graph of line segments whose endpoints lie on two parallel lines.
Permutation graphs are co-comparability graphs~\cite{Dushnik1941}.
A graph is a \emph{circle graph} if it is the intersection graph of chords in a circle.

For positive integer $n$, 
let $K_n\mat K_n$ be the graph on $\{v_1^1,\ldots,v_n^1, v_1^2,\ldots, v_n^2\}$
such that  for all $i,j\in \{1,\ldots,n\}$, 
\begin{itemize}
\item $\{v_1^1,\ldots,v_n^1\}$ and $\{v_1^2,\ldots,v_n^2\}$ are cliques,
\item $v^1_i$ is adjacent to $v^2_j$ if and only if $i=j$, 
\end{itemize}
and let $K_n\mat S_n$ be the graph on $\{v_1^1,\ldots,v_n^1, v_1^2,\ldots, v_n^2\}$
such that  for all $i,j\in \{1,\ldots,n\}$, 
\begin{itemize}
\item $\{v_1^1,\ldots,v_n^1\}$ is a clique, $\{v_1^2,\ldots,v_n^2\}$ is an independent set,
\item $v^1_i$ is adjacent to $v^2_j$ if and only if $i=j$.
\end{itemize}
Since $K_2\mat K_2$ is an induced cycle of length $4$, chordal graphs do not contain $K_2\mat K_2$ as an induced subgraph.
We observe that $K_3\mat S_3$ is not a co-comparability graph. 
\begin{lemma}\label{lem:obscocom}
The graph $K_3\mat S_3$ is not a co-comparability graph.
\end{lemma}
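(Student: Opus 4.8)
The plan is to argue directly from the ordering definition of co-comparability graphs. First I would identify $K_3\mat S_3$ as the \emph{net}: writing $a_i$ for the clique vertex $v_i^1$ and $b_i$ for the independent-set vertex $v_i^2$, the set $\{a_1,a_2,a_3\}$ is a triangle and each $b_i$ is a pendant with unique neighbour $a_i$. Suppose for contradiction that $\sigma$ is a co-comparability ordering of this graph, and use the equivalent ``umbrella-free'' condition from the excerpt: whenever $x$ and $z$ are adjacent and $y$ lies strictly between them in $\sigma$, then $y$ is adjacent to $x$ or to $z$. From this I extract the one structural fact I will reuse: \emph{if $b_\ell$ lies strictly between two adjacent vertices $u,w$ of $\sigma$, then $a_\ell\in\{u,w\}$}, since $a_\ell$ is the only neighbour of $b_\ell$. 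Informally, the only edges of the net that may ``span over'' $b_\ell$ in $\sigma$ are those incident to $a_\ell$.

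Next, relabelling if necessary, I may assume $a_1<a_2<a_3$ along $\sigma$. The edge $a_1a_3$ is not incident to $a_2$, so it cannot span over $b_2$; hence $b_2<a_1$ or $b_2>a_3$, and after possibly reversing $\sigma$ (reversal preserves the umbrella-free condition) and swapping the indices $1$ and $3$, I may assume $b_2<a_1<a_2<a_3$. Applying the spanning fact to the edge $a_2b_2$ shows $b_1,b_3\notin(b_2,a_2)$, and applying it to $a_2a_3$ shows $b_1\notin(a_2,a_3)$. Combining these, $b_1$ must lie to the left of $b_2$ or to the right of $a_3$, while $b_3$ must lie to the left of $b_2$ or to the right of $a_2$.

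What remains is a short finite case analysis on the positions of $b_1$ and $b_3$, in which every branch produces an edge of the net spanning over a pendant vertex in a way the spanning fact forbids. For instance, if $b_1<b_2$ then, since $b_2<a_1$, the vertex $b_2$ lies strictly between the adjacent pair $b_1,a_1$, so the fact with $\ell=2$ would force $a_2\in\{b_1,a_1\}$, impossible; hence $b_1>a_3$. The same type of step rules out $b_3<b_2$, so $b_3>a_2$; and now if $b_3$ precedes $b_1$ the edge $a_1b_1$ spans over $b_3$ (forcing $a_3\in\{a_1,b_1\}$), while otherwise the edge $a_3b_3$ spans over $b_1$ (forcing $a_1\in\{a_3,b_3\}$) --- both impossible. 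So no co-comparability ordering exists. I expect this enumeration to be the only tedious part, but it is small and each branch closes in a single line. (Alternatively one could invoke that co-comparability graphs are asteroidal-triple-free and observe that $\{b_1,b_2,b_3\}$ is an asteroidal triple of the net, but I prefer the self-contained argument above since the paper introduces co-comparability graphs only through the ordering.)
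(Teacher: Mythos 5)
Your proof is correct, but it takes a different and noticeably longer route than the paper's. The paper exploits the second formulation it gives right after defining co-comparability orderings: if $i<j<k$ then $v_j$ has a neighbour on \emph{every} path from $v_i$ to $v_k$ avoiding $v_j$. Assuming the three independent-set vertices $w_1,w_2,w_3$ appear in that order, the path $w_1v_1v_3w_3$ avoids $w_2$ and contains none of its neighbours (its only neighbour is $v_2$), giving an immediate contradiction --- this is precisely the asteroidal-triple argument you mention and set aside at the end of your proposal, and it is available ``for free'' because the paper has already recorded the path-based condition as equivalent to the ordering condition. Your argument instead works from the raw umbrella condition and runs a positional case analysis on where the pendant vertices $b_1,b_2,b_3$ can sit relative to $a_1<a_2<a_3$; I checked each branch ($b_2$ forced outside $[a_1,a_3]$, then normalised to $b_2<a_1$; $b_1$ forced right of $a_3$; $b_3$ forced right of $a_2$; final dichotomy on the order of $b_1,b_3$) and every one closes as you claim. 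What your version buys is self-containedness --- it never needs the equivalence between the two formulations --- at the cost of a half-page of bookkeeping where the paper spends two lines; what the paper's version buys is brevity and a reusable insight (the three pendant vertices of $K_3\mat S_3$ form an asteroidal triple, so no vertex among them can be ``in the middle'' of any linear order with the umbrella property).
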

\begin{proof}
Let $G$ be a graph on $\{v_1, v_2, v_3\}\cup \{w_1, w_2, w_3\}$ such that
\begin{itemize}
\item $\{v_1, v_2, v_3\}$ is a clique, $\{w_1, w_2, w_3\}$ is an independent set, and
\item $v_i$ is adjacent to $w_j$ if and only if $i=j$.
\end{itemize}
It is clear that $G$ is isomorphic to $K_3\mat S_3$.
Suppose $G$ admits a co-comparability ordering.
By relabeling if necessary, we may assume $w_1, w_2, w_3$ appear in the co-comparability ordering in that order.
However, there is a path $w_1v_1v_3w_3$ from $w_1$ to $w_3$ avoiding $v_2$ and $w_2$, and thus, 
it contradicts to the assumption.
We conclude that $K_3\mat S_3$ is not a co-comparability graph.
\end{proof}

\subsection{Graph relations}\label{subsec:graphrelation}

Let $G$ be a graph.
A graph $H$ is a \emph{subgraph} of $G$ if $H$ can be obtained from $G$ by removing some vertices and edges.
A graph $H$ is an \emph{induced subgraph} of $G$ if $H=G[X]$ for some $X\subseteq V(G)$.
A graph $H$ is an \emph{induced minor} of $G$ if $H$ can be obtained from $G$ by a sequence of removing vertices and contracting edges.
A graph $H$ is a \emph{minor} of $G$ if $H$ can be obtained from $G$ by a sequence of removing vertices, removing edges, and contracting edges.
For a graph $H$, we say a graph is \emph{$H$-free} if it contains no induced subgraph isomorphic to $H$.

A \emph{minor model} of a graph $H$ in $G$ is a function $\eta$ with the domain $V(H)\cup E(H)$, where
\begin{itemize}
\item for every $v\in V(H)$, $\eta(v)$ is a non-empty connected subgraph of $G$, all pairwise vertex-disjoint
\item for every edge $e$ of $H$, $\eta(e)$ is an edge of $G$, all distinct
\item if $e\in E(H)$ and $v\in V(H)$ then $\eta(e)\notin E(\eta(v))$,
\item for every edge $e=uv$ of $H$,  $\eta(e)$ has one end in $V(\eta(u))$ and
     the other in $V(\eta(v))$.
\end{itemize}
It is well known that $H$ is a minor of $G$ if and only if there is a minor model of $H$ in $G$.
A minor model $\eta$ of a graph $H$ in $G$ is an \emph{induced minor model} if 
for every distinct vertices $v_1$ and $v_2$ in $H$ that are non-adjacent, 
there are no edges between $V(\eta(v_1))$ and $V(\eta(v_2))$.
A graph $H$ is an induced minor of $G$ if and only if there is an induced minor model of $H$ in $G$.

\subsection{Width parameters}\label{subsec:widthparameter}

For sets $A$ and $B$, a function $f:2^A\rightarrow B$ is \emph{symmetric} if for every $Z\subseteq A$, $f(Z)=f(A\setminus Z)$.

For a graph $G$ and a vertex set $A \subseteq V(G)$, we define functions $\cutrk_{G}$, $\mimval_{G}$, and $\simval_{G}$ from $2^{V(G)}$ to $\mathbb{N}$ such that 
\begin{itemize}
\item $\cutrk_{G}(A)$ is the rank of the matrix $M[A, V(G)\setminus A]$ where $M$ is the adjacency matrix of $G$ and the rank is computed over the binary field,
\item $\mimval_{G}(A)$ is the maximum size of an induced matching of $G[A, V(G)\setminus A]$,
\item $\simval_G(A)$ is the maximum size of an induced matching between $A$ and $V(G)\setminus A$ in $G$. 
\end{itemize}
For a graph $G$, a pair $(T,L)$ of a subcubic tree $T$ and a bijection $L$ from $V(G)$ to the set of leaves of $T$ is called a \emph{branch-decomposition}.
For each edge $e$ of $T$, 
let $T^e_1$ and $T^e_2$ be the two connected components of $T-e$, and 
let $(A^e_1, A^e_2)$ be the vertex bipartition of $G$ such that for each $i\in \{1,2\}$, 
$A^e_i$ is the set of all vertices in $G$ mapped to leaves contained in $T^e_i$ by $L$. 
We call $(A^e_1, A^e_2)$ the vertex bipartition of $G$ associated with $e$. 
For a branch-decomposition $(T,L)$ of a graph $G$ and an edge $e$ in $T$ and a symmetric function $f:2^{V(G)}\rightarrow \mathbb{N}$, 
the \emph{$f$-width} of $e$ is define as $f(A^e_1)$ where $(A^e_1, A^e_2)$ is the vertex bipartition associated with $e$.
The \emph{$f$-width} of $(T,L)$ is the maximum $f$-width over all edges in $T$, and
the \emph{$f$-width} of $G$ is the minimum $f$-width over all branch-decompositions of $G$.
If $\abs{V(G)}\le 1$, then $G$ does not admit a branch-decomposition, and the $f$-width of $G$ is defined to be $0$.

The \emph{rank-width} of a graph $G$, denoted by $\rw(G)$, is the $\cutrk_G$-width of $G$, and
the \emph{mim-width} of a graph $G$, denoted by $\mimw(G)$, is the $\mimval_G$-width of $G$, and
the \emph{sim-width} of a graph $G$, denoted by $\simw(G)$, is the $\simval_G$-width of $G$.
For convenience, the $f$-width of a branch-decomposition is also called a rank-width, mim-width, or sim-width depending on the function $f$.

If $T$ is a subcubic caterpillar tree, then a branch-decomposition $(T,L)$ is called a \emph{linear branch-decomposition}.
The \emph{linear $f$-width} of $G$ is the minimum $f$-width over all linear branch-decompositions of $G$.
The \emph{linear mim-width} of a graph $G$, denoted by $\lmimw(G)$, is the linear $\mimval_G$-width of $G$, and
the \emph{linear sim-width} of a graph $G$, denoted by $\lsimw(G)$, is the linear $\simval_G$-width of $G$.

By definitions we have the following.

\begin{lemma}\label{lem:inequality}
For a graph $G$, we have $\simw(G)\le \mimw(G)\le \rw(G)$.
\end{lemma}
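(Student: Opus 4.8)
The plan is to establish the two inequalities $\simw(G)\le\mimw(G)$ and $\mimw(G)\le\rw(G)$ separately, and in each case it suffices—by the definition of $f$-width as a minimum over branch-decompositions—to show the pointwise inequality $\simval_G(A)\le\mimval_G(A)\le\cutrk_G(A)$ for every $A\subseteq V(G)$; then the same branch-decomposition that witnesses the smaller value of the right-hand parameter witnesses at most that value for the left-hand parameter. So the whole argument reduces to comparing the three cut functions set-by-set.

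For the first inequality, fix $A\subseteq V(G)$ and let $B=V(G)\setminus A$. Any induced matching between $A$ and $B$ in $G$ (in the sense used to define $\simval_G$) is in particular a set of edges crossing the cut whose endpoints induce no other edges of $G$; deleting the edges inside $A$ and inside $B$ only removes edges, so it remains an induced matching of the bipartite graph $G[A,B]$. Hence every configuration counted by $\simval_G(A)$ is also counted by $\mimval_G(A)$, giving $\simval_G(A)\le\mimval_G(A)$. (Indeed this is essentially immediate from the two definitions, since the sim-condition is a strict strengthening of the mim-condition.)

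For the second inequality, fix $A$ and $B$ as above and let $\{v_1w_1,\dots,v_mw_m\}$ be an induced matching of $G[A,B]$ with $v_i\in A$, $w_i\in B$, realizing $\mimval_G(A)=m$. Consider the $m\times m$ submatrix of $M[A,B]$ with rows indexed by $v_1,\dots,v_m$ and columns by $w_1,\dots,w_m$: by the induced-matching property its $(i,j)$ entry is $1$ iff $i=j$, so this submatrix is the $m\times m$ identity over $\mathrm{GF}(2)$, which has rank $m$. Since the rank of a matrix is at least the rank of any submatrix, $\cutrk_G(A)\ge m=\mimval_G(A)$. Combining the two pointwise bounds and taking minima over branch-decompositions yields $\simw(G)\le\mimw(G)\le\rw(G)$. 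The only place that needs a tiny bit of care is confirming that the minimum in the $f$-width definition is over the same set of branch-decompositions for all three parameters (it is, by definition), so that a decomposition optimal for one parameter is a legal—if not necessarily optimal—competitor for another; there is no real obstacle here, as the statement is a routine consequence of unravelling the definitions.
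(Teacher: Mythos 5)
Your proof is correct, and it is exactly the argument the paper has in mind: the paper states the lemma with only the remark ``By definitions we have the following,'' and your pointwise comparison of the three cut functions (the sim-condition strengthens the mim-condition, and an induced matching yields an identity submatrix of full GF(2)-rank), followed by taking the minimum over the common set of branch-decompositions, is the standard unpacking of those definitions. No issues.
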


We frequently use the following fact.
\begin{lemma}\label{lem:balancedpartition}
Let $G$ be a graph, let $w$ be a positive integer, and let $f:2^{V(G)}\rightarrow \mathbb{N}$ be a symmetric function.
If $G$ has $f$-width at most $w$, then 
$G$ admits a vertex bipartition $(A_1, A_2)$ where $f(A_1)\le w$ and for each $i\in \{1, 2\}$, $\frac{\abs{V(G)}}{3}<\abs{A_i}\le \frac{2\abs{V(G)}}{3}$.
\end{lemma}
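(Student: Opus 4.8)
The plan is to use the standard balanced-separator argument for branch decompositions, applied to any branch-decomposition $(T,L)$ of $G$ witnessing $f$-width at most $w$. First I would dispose of the trivial cases: if $\abs{V(G)}\le 1$ there is nothing to do (and the hypothesis that $G$ has a branch-decomposition forces $\abs{V(G)}\ge 2$), and if $\abs{V(G)}=2$ or $3$ any edge of $T$ already splits the leaves into parts of the desired sizes, so assume $\abs{V(G)}=:n\ge 4$. For each edge $e$ of $T$ with associated bipartition $(A_1^e,A_2^e)$, orient $e$ toward the side containing strictly more than $n/2$ leaves; if both sides have exactly $n/2$ leaves (possible only when $n$ is even) then that edge already gives the balanced partition we want, since $n/2 \le 2n/3$ and $n/2 > n/3$ for $n\ge 4$, and $f(A_1^e)\le w$ by assumption — so we may assume every edge receives a well-defined orientation.

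The key step is the classical ``sink-finding'' argument. Root $T$ informally by following the orientations: since $T$ is a finite tree, there must be a node $v$ of $T$ such that every edge incident with $v$ is oriented toward $v$ (a standard fact — in any orientation of a tree there is at least one sink). Because $T$ is subcubic, $v$ has at most three incident edges $e_1,e_2,e_3$ (if $v$ is a leaf it has one, and then the unique leaf of $G$ mapped to it is the ``large'' side, impossible for $n\ge 2$; so $v$ is internal with exactly three incident edges). Let $(B_i, V(G)\setminus B_i)$ be the bipartition associated with $e_i$, where $B_i$ is chosen to be the part \emph{not} containing $v$'s side — equivalently, $B_1,B_2,B_3$ are the leaf sets of the three subtrees hanging off $v$, so they partition $V(G)$ and each satisfies $\abs{B_i}\le n/2$ since $e_i$ points toward $v$, i.e.\ away from $B_i$.

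Now I would argue that one of $B_1,B_2,B_3$, or a union of two of them, has size in the window $(n/3, 2n/3]$. Order them so $\abs{B_1}\le \abs{B_2}\le \abs{B_3}$. If $\abs{B_3} > n/3$, then since also $\abs{B_3}\le n/2 \le 2n/3$, take $A_1:=B_3$. Otherwise $\abs{B_3}\le n/3$, hence $\abs{B_1}+\abs{B_2} = n - \abs{B_3} \ge 2n/3 > n/3$; and $\abs{B_1}+\abs{B_2} = n-\abs{B_3} \le n \le$ — wait, we need $\le 2n/3$, which follows from $\abs{B_3}\ge n/3$; but we only have $\abs{B_3}\le n/3$, so instead use: since $\abs{B_1}\le\abs{B_2}\le\abs{B_3}\le n/3$ we get $\abs{B_2}+\abs{B_3}\le 2n/3$ and $\abs{B_2}+\abs{B_3}\ge \tfrac{2}{3}(\abs{B_1}+\abs{B_2}+\abs{B_3}) = 2n/3 > n/3$ unless one part is much larger, but all parts are $\le n/3$ here, so actually $\abs{B_2}+\abs{B_3}$ could be as small as... — the clean way is: among the three pairwise unions $B_i\cup B_j$, each equals $V(G)\setminus B_k$ with $\abs{B_k}\le n/3$ (in this case), so each union has size $\ge 2n/3 > n/3$; and the union $B_2\cup B_3$ has size $n-\abs{B_1}$, while taking instead $n - \abs{B_3} \ge 2n/3$ and $n-\abs{B_1}\le n$ doesn't immediately cap at $2n/3$. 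So choose $k$ maximizing $\abs{B_k}$; then $\abs{V(G)\setminus B_k} = n-\abs{B_k} \le n - n/3 = 2n/3$ (since $\abs{B_k}\ge n/3$ as it is the largest of three parts summing to $n$), and $n - \abs{B_k} \ge n - n/2 = n/2 > n/3$ (using $\abs{B_k}\le n/2$). Set $A_1 := V(G)\setminus B_k = B_i\cup B_j$. In either case, $A_1$ is the leaf set of either one subtree off $v$ or of the union of two such subtrees; the latter is exactly the bipartition associated with the edge $e_k$. Thus $A_1 = A_1^{e_k}$ (up to swapping sides) in the first sub-case and $A_1$ equals one of the $B$'s in the other, so in both cases $A_1$ is the small or large side of the bipartition associated with some edge of $T$, whence $f(A_1)\le w$ by symmetry of $f$ and the $f$-width bound. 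Taking $A_2 := V(G)\setminus A_1$ completes the proof.

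The main obstacle is purely bookkeeping: verifying that the three subtree-leaf-sets $B_1,B_2,B_3$ partition $V(G)$ with each of size $\le n/2$, and then checking the arithmetic that forces one part or one pairwise union into the interval $(n/3,2n/3]$. There is no conceptual difficulty — the sink of the edge-orientation exists because $T$ is a finite tree, and subcubicity caps the branching at three, which is exactly what makes the pigeonhole-style size argument work. I would present the size argument crisply by letting $B_k$ be a largest of the three parts and noting $n/3 \le \abs{B_k} \le n/2$, so that $V(G)\setminus B_k$ lies in the required window and is the bipartition side of edge $e_k$.
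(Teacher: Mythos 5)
Your proof is correct and takes essentially the same approach as the paper: the paper roots $T$ and picks a deepest node with more than $\abs{V(G)}/3$ leaf-descendants, while you find a sink of the majority edge-orientation and take the largest of the three subtrees hanging off it --- two standard formulations of the same ``balanced edge of the decomposition tree'' argument, both concluding that some edge of the given branch-decomposition already induces the required bipartition, so $f(A_1)\le w$ comes for free. (Minor shared caveat, not a gap on your part: when the largest part is exactly $\abs{V(G)}/3$ the strict lower bound on $\abs{A_2}$ is not met, and for $\abs{V(G)}=3$ no bipartition can satisfy the stated inequalities at all; the paper's own proof has the identical boundary imprecision and it is harmless in all applications.)
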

\begin{proof}
Let $(T,L)$ be a branch-decomposition of $G$ of $f$-width at most $w$.
We subdivide an edge of $T$, and regard the new vertex as a root node.
For each node $t\in V(T)$, let $\mu(t)$ be the number of leaves of $T$ that are descendants of $t$.
Now, we choose a node $t$ that is farthest from the root node such that 
$\mu(t)> \frac{\abs{V(G)}}{3}$.
By the choice of $t$, for each child $t'$ of $t$, $\mu(t')\le \frac{\abs{V(G)}}{3}$.
Therefore, $\frac{\abs{V(G)}}{3}<\mu(t)\le \frac{2\abs{V(G)}}{3}$.
Let $e$ be the edge connecting the node $t$ and its parent.
By the construction, the vertex bipartition associated with $e$ is a required bipartition.
\end{proof}

We also use tree-decompositions in Section~\ref{sec:specialclass}.
A \emph{tree-decomposition} of a graph $G$ is a pair $(T, \cB=\{B_t\}_{t\in V(T)})$ such that  
\begin{enumerate}[(1)]
\item $\bigcup_{t\in V(T)}B_t=V(G)$,
\item for every edge in $G$, there exists $B_t$ containing both end vertices, 
\item for $t_1, t_2, t_3\in V(T)$, $B_{t_1}\cap B_{t_3}\subseteq B_{t_2}$ whenever $t_2$ is on the path from $t_1$ to $t_3$. 
\end{enumerate}
Each vertex subset $B_t$ is called a \emph{bag} of the tree-decomposition. The \emph{width} of a tree-decomposition is $w-1$ where $w$ is the maximum size of a bag in the tree-decomposition, and the \emph{tree-width} of a graph is the minimum width over all tree-decompositions of the graph. It is well known that 
a graph has tree-width at most $w$ if and only if it is a subgraph of a chordal graph with maximum clique size at most $w+1$; see for instance \cite{Bodlaender98}. Furthermore, chordal graphs admit a tree-decomposition where each bag induces a maximal clique of the graph. We will use this fact in Section~\ref{sec:specialclass}.

\subsection{LC-VSVP problems}\label{subsec:lcvsvp}
Telle and Proskurowski~\cite{TelleP1997} classified a class of problems called \emph{locally checkable vertex subset and vertex partitioning problems}, which is a subclass of $\operatorname{MSO}_1$ problems. These problems generalize problems like \textsc{Maximum Independent Set}, \textsc{Minimum Dominating Set}, \textsc{$q$-Coloring} etc.

Let $\sigma$ and $\rho$ be finite or co-finite subsets of $\mathbb{N}$. For a graph $G$ and $S\subseteq V(G)$, 
we call $S$ a \emph{$(\sigma, \rho)$-dominating set} of $G$ if 
\begin{itemize}
\item for every $v\in S$, $\abs{N_G(v)\cap S}\in \sigma$, and
\item for every $v\in V(G)\setminus S$, $\abs{N_G(v)\cap S}\in \rho$.
\end{itemize}
For instance, a $(0, \mathbb{N})$-set is an independent set as there are no edges inside of the set, and we do not care about the adjacency between $S$ and $V(G)\setminus S$.
Another example is that an $(\mathbb{N}, \mathbb{N}^+)$-set is a dominating set as we require that for each vertex in $V(G)\setminus S$, it has at least one neighbor in $S$.
See \cite[Table 4.1]{TelleP1997} for more examples.
The \textsc{Min-(or Max-)$(\sigma, \rho)$-domination} problem is a problem to find a minimum (or maximum) $(\sigma,\rho)$-dominating set in an input graph $G$, and possibly on vertex-weighted graphs. These problems also called as  \emph{locally checkable vertex subset} problems.

For a positive integer $q$, a $(q\times q)$-matrix $D_q$ is called a \emph{degree constraint} matrix if each element is either a finite or co-finite subset of $\mathbb{N}$.
A partition $\{V_1, V_2, \ldots, V_q\}$ of the vertex set of a graph $G$ is called a \emph{$D_q$-partition} if
\begin{itemize}
\item for every $i,j\in \{1, \ldots, q\}$ and $v\in V_i$, $\abs{N_G(v)\cap V_j}\in D_q[i,j]$.
\end{itemize}
For instance, if we take a matrix $D_q$ where all diagonal entries are $0$, and all other entries are $\mathbb{N}$, 
then a $D_q$-partition is a partition into $q$ independent sets, which corresponds to a $q$-coloring of the graph.
The \textsc{$D_q$-partitioning} problem is a problem deciding if an input graph admits a $D_q$-partition or not.
These problems are also called as \emph{locally checkable vertex partitioning} problems.

All these problems will be called \emph{locally checkable vertex subset and vertex partitioning problems}, shortly LC-VSVP problems.
As shown in \cite{BinhminhJM2013} the runtime solving an LC-VSVP problem by dynamic programming relates to the finite or co-finite subsets of $\mathbb{N}$ used in its definition. 
The following function $d$ is central.
\begin{enumerate}
\item Let $d(\mathbb{N})=0$.
\item For every finite or co-finite set $\mu\subseteq \mathbb{N}$, let $d(\mu)=1+ \min( \max \{x\in \mathbb{N}:x\in \mu\}, \max \{x\in \mathbb{N}:x\notin \mu\})$. 
\end{enumerate}
For example, for \textsc{Minimum Dominating Set} and \textsc{$q$-Coloring} problems we plug in $d=1$ because $\max( d(\mathbb{N}), d(\mathbb{N}^+))=1$ and  $\max( d(0), d(\mathbb{N}))=1$.

Belmonte and Vatshelle~\cite{BelmonteV2013}  proved the following application of mim-width.

\begin{theorem}[Belmonte and Vatshelle~\cite{BelmonteV2013} and Bui-Xuan, Telle, and Vatshelle~\cite{BinhminhJM2013}]\label{thm:mimalgvs}
\begin{enumerate}[(1)]
\item Given an $n$-vertex graph and its branch-decomposition $(T,L)$ of mim-width $w$ and $\sigma, \rho \subseteq \mathbb{N}$,
one can solve \textsc{Min-(or Max-)$(\sigma, \rho)$-domination}  in time $\mathcal{O}(n^{3dw+4})$ where $d=\max( d(\sigma), d(\rho))$.
\item Given an $n$-vertex graph and its branch-decomposition $(T,L)$ of mim-width $w$ and a $(q\times q)$-matrix $D_q$,
one can solve \textsc{$D_q$-partitioning} in time $\mathcal{O}(qn^{3dwq+4})$ where $d=\max_{i,j} d(D_q[i,j])$.
\end{enumerate}
\end{theorem}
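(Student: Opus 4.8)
The plan is to reduce the statement to the \emph{$d$-neighborhood equivalence} dynamic programming scheme of Bui-Xuan, Telle, and Vatshelle~\cite{BinhminhJM2013}, the point being that mim-width enters only through a single counting lemma of Belmonte and Vatshelle~\cite{BelmonteV2013}. First I would recall the relevant equivalence: for $A\subseteq V(G)$ and $X,Y\subseteq A$, write $X\equiv_d^A Y$ if $\min(d,\abs{N_G(v)\cap X})=\min(d,\abs{N_G(v)\cap Y})$ for every $v\in V(G)\setminus A$, and let $\nec_d(A)$ be the number of classes. Then the proof splits into two ingredients: (a) $\nec_d(A)\le n^{d\cdot\mimval_G(A)}$ for every $A$, so that along a branch-decomposition of mim-width $w$ every cut $A^e_i$ (and, since $\mimval_G$ is symmetric, its complement) has at most $n^{dw}$ classes; and (b) a bottom-up dynamic program over $(T,L)$ whose tables are indexed by such classes.

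For ingredient (a), consider first $d=1$: the $\equiv_1^A$-class of $X$ is determined by $U_X:=N_G(X)\cap(V(G)\setminus A)$, so $\nec_1(A)$ is the number of distinct such sets. Given $X$, take $X'\subseteq X$ inclusion-minimal with $U_{X'}=U_X$; by minimality each $x\in X'$ has a private neighbour $v_x\in V(G)\setminus A$ with $v_x\notin N_G(x')$ for all $x'\in X'\setminus\{x\}$. Since $G[A,V(G)\setminus A]$ has no edges inside $A$, the set $\{xv_x:x\in X'\}$ is an induced matching between $A$ and $V(G)\setminus A$, so $\abs{X'}\le\mimval_G(A)$; hence every achievable $U_X$ equals $N_G(X')\cap(V(G)\setminus A)$ for some $X'$ of size at most $\mimval_G(A)$, giving at most $\binom{n}{\le\mimval_G(A)}\le n^{\mimval_G(A)}$ of them. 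The general-$d$ case is the same argument run with multiplicities (a representative multiset of at most $d\cdot\mimval_G(A)$ vertices of $A$), yielding $\nec_d(A)\le n^{d\cdot\mimval_G(A)}$.

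For ingredient (b), root $T$; for a node $t$ let $V_t$ be the set of vertices mapped to leaves below $t$, and set $d=\max(d(\sigma),d(\rho))$. I would maintain a table $\mathrm{Tab}_t$ indexed by a class $\mathcal R$ of $\equiv_d^{V_t}$ (the ``type'' of the partial solution $S\cap V_t$ as seen from outside) and a class $\mathcal R'$ of $\equiv_d^{V(G)\setminus V_t}$ (a guessed type of $S\cap(V(G)\setminus V_t)$ as seen from $V_t$), storing the minimum (or maximum) of $\abs{S\cap V_t}$ over all $S\cap V_t$ in class $\mathcal R$ such that every $v\in V_t$ already meets its $(\sigma,\rho)$-constraint, the contribution of the outside to $v$'s neighbour count being read off from $\mathcal R'$. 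This is well defined since membership of $\abs{N_G(v)\cap S}$ in a finite or co-finite set is decided by $\min(d,\abs{N_G(v)\cap S})$ and $\min(d,a+b)$ is a function of $\min(d,a),\min(d,b)$, so the entry does not depend on chosen representatives. Leaves are initialised by inspecting the single vertex; at an internal node $t$ with children $a,b$ one ranges over compatible quadruples of child indices (compatibility: the guess on $V(G)\setminus V_a$ agrees, on its $V_b$-part, with the behaviour recorded by $\mathcal R_b$, and symmetrically), derives the parent indices via the natural union/projection operations on classes, and keeps the best sum; since $V(G)\setminus V_r=\emptyset$ at the root $r$ the answer is the best entry there.

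The rest is bookkeeping: every cut arising has $\nec_d\le n^{dw}$, and a careful implementation (processing the three cuts $V_a,V_b,V_t$ rather than naively all of $V_a,V_b,V(G)\setminus V_a,V(G)\setminus V_b$) visits $O(n^{3dw})$ index-combinations per node, each handled in $\mathrm{poly}(n)$ time, for a total of $O(n^{3dw+4})$. For \textsc{$D_q$-partitioning} one replaces a single-set type by a $q$-tuple of types (one per part), so the relevant class count becomes at most $n^{dwq}$ and the same analysis gives $O(qn^{3dwq+4})$. I expect the main obstacles to be ingredient (a) — the induced-matching argument above is exactly what converts a mim-width bound into a polynomial class count — together with verifying that the combine step at internal nodes is simultaneously representative-independent and complete (misses no optimal solution); this is precisely why the $d$-neighborhood equivalence, rather than coarser neighborhood equivalence, is the right granularity, and where most of the care in~\cite{BinhminhJM2013} is spent.
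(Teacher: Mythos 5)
The paper does not prove this theorem; it is quoted from Belmonte--Vatshelle and Bui-Xuan--Telle--Vatshelle, and your reconstruction is a faithful account of exactly the argument in those references: the bound $\nec_d(A)\le n^{d\cdot\mimval_G(A)}$ via minimal representatives yielding an induced matching across the cut, plugged into the $d$-neighborhood-equivalence dynamic program whose per-node work involves three cuts, giving the exponent $3dw+4$ (and the $q$-tuple version for partitioning). This is correct and is essentially the same approach as the cited proof.
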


\section{Mim-width of chordal graphs and co-comparability graphs}\label{sec:specialclass}

In this section, we show that chordal graphs admit a branch-decomposition $(T,L)$ such that
\begin{enumerate}
\item $(T, L)$ has sim-width at most $1$, and
\item $(T,L)$ has mim-width at most $t-1$ if the given graph is $(K_t\mat S_t)$-free for some $t\ge 2$.
\end{enumerate}
We combine the second statement with Theorem~\ref{thm:mimalgvs}, and show that LC-VSVP problems can be efficiently solved for $(K_t\mat S_t)$-free chordal graphs (Corollary~\ref{cor:mainchordal}).
In the same context, we show that co-comparability graphs admit a linear branch-decomposition such that
\begin{enumerate}
\item $(T,L)$ has linear sim-width at most $1$, and
\item $(T,L)$ has linear mim-width at most $t-1$ if the given graph is $(K_t\mat K_t)$-free for some $t\ge 2$.
\end{enumerate}
We combine the second statement with Theorem~\ref{thm:mimalgvs}, and show that LC-VSVP problems can be efficiently solved for $(K_t\mat K_t)$-free co-comparability graphs (Corollary~\ref{cor:maincocomparability}).
We also prove that chordal graphs and co-comparability graphs have unbounded mim-width in general.

\subsection{Mim-width of chordal graphs}\label{subsec:chordal}

We prove the following.
\begin{proposition}\label{prop:chordal}
Given a chordal graph $G$, one can in time $\mathcal{O}(\abs{V(G)}+\abs{E(G)})$ output a branch-decomposition $(T,L)$ with the following property:
\begin{itemize}
\item $(T,L)$ has sim-width at most $1$,
\item $(T,L)$ has mim-width at most $t-1$ if $G$ is $(K_t\mat S_t)$-free for some integer $t\ge 2$.
\end{itemize}
\end{proposition}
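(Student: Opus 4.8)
The plan is to build the branch-decomposition directly from a tree-decomposition of $G$ in which every bag is a maximal clique — such a decomposition exists for chordal graphs and can be computed in linear time (e.g.\ via a perfect elimination ordering or lexicographic BFS, together with clique-tree construction). From this clique tree $(T', \cB)$ I would construct the subcubic tree $T$ of a branch-decomposition by attaching each vertex $v$ of $G$ as a leaf to a node $t$ of $T'$ whose bag $B_t$ contains $v$; one then needs to massage $T'$ into a subcubic tree (subdividing and smoothing nodes as needed so that every internal node has degree $3$ and every original vertex hangs off exactly one place), and pick, for each $v$, a canonical bag — say the node closest to some fixed root whose bag contains $v$. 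The key structural fact to exploit is the Helly-type / path-intersection property of clique trees: for each vertex $v$, the set of nodes whose bags contain $v$ induces a connected subtree $T_v$ of $T'$, and for each edge $e$ of $T'$ (now of $T$) the set of vertices "living on both sides" of $e$ — those $v$ with $T_v$ crossing $e$ — forms a clique, since all such $v$ lie in a common bag adjacent to $e$.

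The main work is then to bound $\simval_G(A^e_1)$ and $\mimval_G(A^e_1)$ for the cut induced by an edge $e$ of $T$. For sim-width: suppose $\{a_1b_1, a_2b_2\}$ is an induced matching between the two sides $A_1, A_2$ of the cut with the $a_i$'s on one side and $b_i$'s on the other, and $|M|\ge 2$. Because $a_ib_i\in E(G)$ and the clique-tree subtrees $T_{a_i}, T_{b_i}$ must then share a node, and because the endpoints straddle the cut, each pair $a_i,b_i$ forces the "separator clique" $S_e$ associated with $e$ to meet $\{a_i,b_i\}$; tracing which of $a_i, b_i$ lands in $S_e$ and using that $S_e$ is a clique, I expect to derive an edge inside $\{a_1,a_2,b_1,b_2\}$ other than the two matching edges — contradicting inducedness — unless $|M|\le 1$. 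This gives sim-width $\le 1$. For mim-width under $(K_t\mat S_t)$-freeness: take an induced matching $\{a_ib_i\}_{i=1}^{m}$ of $G[A_1,A_2]$ with all $a_i\in A_1$, $b_i\in A_2$, and suppose $m\ge t$. The $a_i$'s all straddle or sit near $e$; I would argue the $a_i$'s must form a clique (they all belong to, or force, the common separator bag $S_e$), while the $b_i$'s form an independent set (being the far endpoints of an \emph{induced} matching, pairwise non-adjacent, and the non-edges $a_ib_j$ for $i\ne j$ coming for free from inducedness). Together with the matching edges $a_ib_i$, the set $\{a_1,\dots,a_t\}\cup\{b_1,\dots,b_t\}$ then induces exactly $K_t\mat S_t$, contradicting $(K_t\mat S_t)$-freeness; hence $m\le t-1$.

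The step I expect to be the genuine obstacle is pinning down exactly why the "left endpoints'' $a_i$ of the crossing (induced) matching must form a clique. Naively one only knows each $T_{a_i}$ crosses $e$, which puts $a_i$ in \emph{some} bag on each side of $e$, but not necessarily in one single common bag. The right fix is to note that $e$ corresponds to a separator $S_e = B_{t}\cap B_{t'}$ (for the two endpoints $t,t'$ of $e$ in the clique tree) which is itself a clique, and that any vertex whose subtree crosses $e$ lies in $S_e$; so the $a_i$ that lie on, say, the $T'$-side but are matched across actually need a companion bag — here I must be careful about the difference between "vertex $a_i$ is on side $A_1$" (determined by where its \emph{leaf} was attached) and "subtree $T_{a_i}$ crosses $e$" (determined by adjacency across the cut). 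Reconciling these two notions — i.e.\ choosing the leaf-attachment rule so that a vertex with a neighbor on the other side necessarily has its subtree crossing $e$, and hence sits in $S_e$ — is the crux, and once set up correctly both the clique claim for the $a_i$'s and the independence claim for the $b_i$'s fall out, and the linear-time bound is immediate from the linear-time clique-tree construction.
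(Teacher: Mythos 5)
Your overall plan is the paper's: build the branch-decomposition from a clique tree, use separator cliques to show that one side of every crossing induced matching is a clique, and then read off $K_t\mat S_t$. But there are two genuine gaps. The first is your justification for the independence of the $b_i$'s. For mim-width the matching is induced only in the bipartite graph $G[A_1,A_2]$ of crossing edges, so edges \emph{inside} $A_2$ — in particular among the $b_i$'s — are not forbidden by inducedness at all; they do not come ``for free''. Without a further argument the $b_i$'s could just as well form a clique, and you would only exhibit $K_t\mat K_t$, which is not excluded. The missing step is chordality: if $b_ib_j\in E(G)$ then $a_ia_jb_jb_i$ is an induced $4$-cycle, using $a_ia_j\in E(G)$ from the clique claim and the non-edges $a_ib_j$, $a_jb_i$ from inducedness of the crossing matching — a contradiction. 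This is exactly how the paper obtains the independent side.

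The second gap is the crux you yourself flag, and your proposed fix does not close it. It is not true that a vertex with a neighbour across the cut necessarily has its clique-tree subtree crossing $e$: if $a\in A_1$ and $b\in A_2$ are adjacent they share a bag, that bag lies on one side only, and so only one of $a,b$ is guaranteed to lie in the separator. To finish you need either (i) the extra observation that if $a_i$ and $b_j$ with $i\neq j$ both land in the separator clique then $a_ib_j$ is a crossing edge contradicting inducedness, so all representatives may be taken on one fixed side; or (ii) the paper's stronger structural claim that for \emph{every} edge of the constructed tree — including the new edges created when a bag is split into individual leaves and when high-degree nodes are resolved into paths, which your sketch ignores — one of $N_G(A_1)\cap A_2$, $N_G(A_2)\cap A_1$ is contained in a single bag $B_t$ and hence is a clique. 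Either route works, but as written your argument does not establish that all the $a_i$'s are pairwise adjacent, and the sim-width case (``I expect to derive an edge'') inherits the same incompleteness, though it is easily completed by the same case analysis once the clique claim is in place.
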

To show Proposition~\ref{prop:chordal}, we use the fact that chordal graphs admit a tree-decomposition whose bags are maximal cliques.
Such a tree-decomposition can be easily transformed from a \emph{clique-tree} of a chordal graph.  
A \emph{clique-tree} of a chordal graph $G$ is a pair $(T, \{C_t\}_{t\in V(T)})$, where 
\begin{enumerate}[(1)]
\item $T$ is a tree, 
\item each $C_t$ is a maximal clique of $G$, and 
\item for each $v\in V(G)$, the vertex set $\{t\in V(T):v\in C_t\}$ induces a subtree of $T$.
\end{enumerate}
Gavril~\cite{Gavril1974} showed that chordal graphs admit a clique-tree, and
it is known that given a chordal graph $G$, its clique-tree can be constructed in time $\mathcal{O}(\abs{V(G)}+\abs{E(G)})$~\cite{BlairP1993, HsuM1991}.

\begin{proof}[Proof of Proposition~\ref{prop:chordal}]
Let $G$ be a chordal graph.
We compute a tree-decomposition $(F,\cB=\{B_t\}_{t\in V(F)})$ of $G$ whose 
bags induce maximal cliques of $G$ in time $\mathcal{O}(\abs{V(G)}+\abs{E(G)})$. 
Let us choose a root node $r$ of $F$. 

\begin{figure}[t]
\centerline{\includegraphics[scale=0.4]{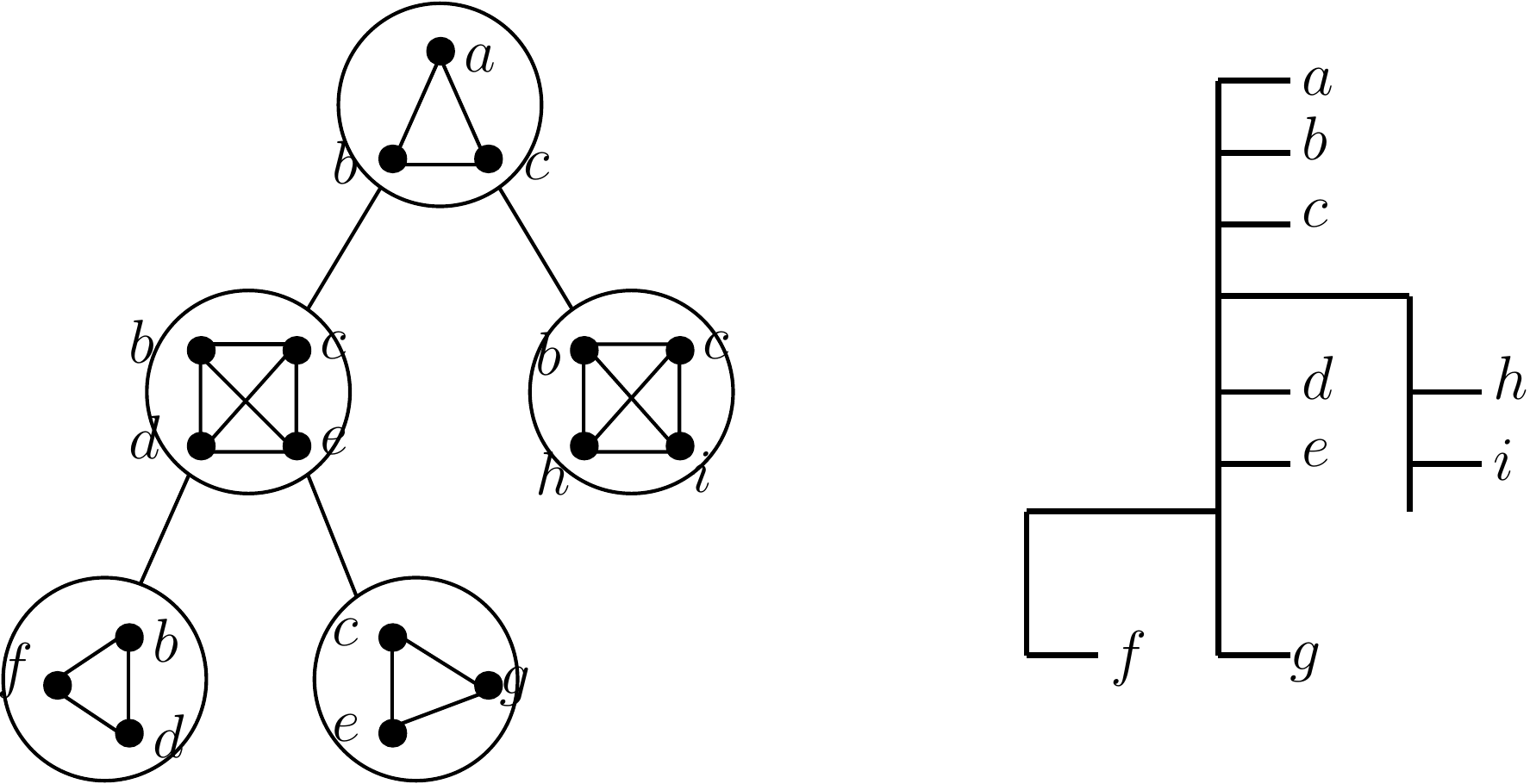}}
\caption{Constructing a branch-decomposition $(T,L)$ of a chordal graph $G$ of sim-width at most $1$ from its tree-decomposition. }
\label{fig:rootedtree}
\end{figure}

We construct a tree $(T, L)$ from $F$ as follows.
\begin{enumerate}
\item We attach a leaf $r'$ to the root node $r$ of $F$ and regard it as the parent of $r$ and let $B_{r'}:=\emptyset$. 
\item For every $t\in V(F)$ with its parent $t'$, we subdivide the edge $tt'$ into a path $tv^t_1 \cdots v^t_{\abs{B_t\setminus B_{t'}}}t'$, 
and for each $j\in \{1, \ldots, \abs{B_t\setminus B_{t'}}\}$, 
we attach a leaf $z^t_j$ to $v^t_j$ and assign the vertices of $B_t\setminus B_{t'}$ to
$L(z^t_1), \ldots, L(z^t_{\abs{B_t\setminus B_{t'}}})$ in any order.  Then remove $r'$.
\item We transform the resulting tree into a tree of maximum degree at most $3$. For every $t\in V(F)$, we do the following. Let $t_1, \ldots, t_m$ be the children of $t$ in $F$. 
We remove $t$ and introduce a path $w^t_1w^t_2 \cdots w^t_{m}$. If $t$ is a leaf, then we just remove it.
We add an edge $w^t_1v^t_1$, 
and for each $i\in \{1, \ldots, m\}$, add an edge $w^t_i v^{t_i}_{\abs{B_{t_i}\setminus B_{t}}}$.
\item Let $T'$ be the resulting tree, and we obtain a tree $T$ from $T'$ by smoothing all nodes of degree $2$.
Let $(T,L)$ be the resulting branch-decomposition. See Figure~\ref{fig:rootedtree} for an illustration of the construction. 
\end{enumerate}
We can construct $(T,L)$ in linear time as the number of nodes in $T$ is $\mathcal{O}(\abs{V(G)})$.
We consider $T$ as a rooted tree with the root $z^r_{B_r\setminus B_{r'}}$.

We claim that $(T,L)$ has sim-width at most $1$. We prove a stronger result that for every edge $e$ of $T$ with a vertex bipartition $(A,B)$ associated with $e$, 
either $N_G(A)\cap B$ or $N_G(B)\cap A$ is a clique.

\begin{claim}\label{claim:clique}
Let $e$ be an edge of $T$ and let $(A,B)$ be the vertex bipartition of $G$ associated with $e$.
Then either $N_G(A)\cap B$ or $N_G(B)\cap A$ is a clique.
\end{claim}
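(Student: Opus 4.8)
The plan is to analyze the structure of the edge $e$ in the constructed tree $T$ and relate the bipartition $(A,B)$ back to the tree-decomposition $(F,\cB)$. Every edge of $T$ arises (after smoothing degree-$2$ nodes) from an edge on one of the subdivided paths $t v^t_1 \cdots v^t_{|B_t\setminus B_{t'}|} t'$, or from an edge $w^t_i v^{t_i}_{\cdot}$ on a path $w^t_1 \cdots w^t_m$ replacing a node $t$ of $F$. In either case, cutting $e$ separates the leaves of $T$ — hence $V(G)$ — in a way that is controlled by a single bag $B_s$ of the tree-decomposition: one side $A$ consists of vertices introduced ``below'' a certain point and the other side $B$ of those introduced ``above'', and by property (3) of tree-decompositions (and the clique-tree structure), the only vertices of $G$ that can have neighbors on the other side lie in $B_s$. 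So the first step is to make precise the claim: for each edge $e$ of $T$ there is a bag $B_s$ of $F$ such that every edge of $G$ crossing the cut $(A,B)$ has both endpoints in $B_s$.

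Once that is established, the conclusion is immediate: since $B_s$ induces a clique in $G$, the set of crossing edges forms a subgraph of a clique. In particular $N_G(A)\cap B \subseteq B_s$ and $N_G(B)\cap A \subseteq B_s$, and both are subsets of a clique, hence cliques themselves. (Actually the stronger claim that at least one — indeed both — of $N_G(A)\cap B$, $N_G(B)\cap A$ is a clique follows with room to spare.) Then, since a clique contains no induced matching of size $\ge 2$ that is ``simple'' on one side — more precisely, any induced matching between $A$ and $B$ in $G$ has all its $A$-endpoints in $N_G(B)\cap A$ and all its $B$-endpoints in $N_G(A)\cap B$, and if one of these is a clique then the matching has size at most $1$ — we get $\simval_G(A)\le 1$, giving sim-width at most $1$. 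The $(K_t\mat S_t)$-free refinement is handled separately (it is the second bullet of the Proposition), so for this claim I only need the clique statement.

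The key steps, in order: (i) set up notation for how edges of $T$ correspond to ``frontiers'' in the rooted tree $F$ — distinguish the two types of tree edges (path-edges from subdividing $tt'$, and the $w$-path edges from splitting a high-degree node); (ii) for each type, identify the relevant bag $B_s$ (for a path-edge inside the subdivision of $tt'$, the bag $B_t$ works, after accounting for which of the newly-introduced vertices $B_t\setminus B_{t'}$ fall on which side; for a $w$-path edge the bag $B_t$ again works); (iii) argue that any $G$-edge with one endpoint among the ``descendant'' leaves and one among the ``non-descendant'' leaves must have both endpoints in $B_s$, using the running-intersection property (3) of the tree-decomposition; (iv) conclude $N_G(A)\cap B$ and $N_G(B)\cap A$ are both contained in the clique $B_s$, hence are cliques.

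The main obstacle I expect is bookkeeping in step (ii): the construction replaces each node $t$ of $F$ by a path $w^t_1\cdots w^t_m$ and subdivides each tree edge into a path whose internal vertices $v^t_j$ carry pendant leaves $z^t_j$ receiving the vertices of $B_t\setminus B_{t'}$. So ``the cut associated with an edge of $T$'' does not cleanly split $F$ into two subtrees — it may split a single bag $B_t$ into a part already placed (via its $z^t_j$ leaves on one side of the cut) and a part not yet placed. I need to verify that even in this case the crossing edges of $G$ still lie within $B_t$: a vertex $v\in B_t$ placed on the $A$-side and a vertex $u$ placed on the $B$-side can be adjacent only if they share a bag, and by (3) the bags containing $v$ form a subtree; tracing where $u$ was introduced relative to $t$ forces $u\in B_t$ as well. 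Making this airtight — especially across the ``smoothing'' of degree-$2$ nodes in step 4, which can merge several $T'$-edges into one $T$-edge — is the delicate part, but it is essentially a careful case check rather than a deep argument.
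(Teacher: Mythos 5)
Your overall strategy---tracing each edge of $T$ back to the tree-decomposition and locating a bag that controls the cut---is the right one and is the same as the paper's. But the key statement you build everything on, namely that for each edge $e$ of $T$ there is a single bag $B_s$ such that \emph{every} crossing edge of $G$ has \emph{both} endpoints in $B_s$ (whence both $N_G(A)\cap B$ and $N_G(B)\cap A$ would be cliques), is false, and the claim is phrased as ``either \dots or'' precisely because only one side is guaranteed. Concretely, let $G$ be the tree on $\{c_1,c_2,x_1,x_2,x_3\}$ with edges $c_1c_2,\,c_1x_1,\,c_1x_2,\,c_2x_3$. Its clique-tree is a star with centre $B_t=\{c_1,c_2\}$ and leaf bags $\{c_1,x_1\},\{c_1,x_2\},\{c_2,x_3\}$. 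In the construction, $t$ becomes a path $w^t_1w^t_2w^t_3$, and the cut at the edge $w^t_1w^t_2$ puts the private vertices of two of the three leaf bags, say $x_2$ and $x_3$, into $A$ and everything else into $B$. The crossing edges are $x_2c_1$ and $x_3c_2$; they lie in no common bag, and $N_G(B)\cap A=\{x_2,x_3\}$ is \emph{not} a clique (whichever two children land on the same side, the analogous failure occurs). What is true, and what the paper proves, is the asymmetric containment: for each edge of $T'$ one can designate a side, say $B$, such that the relevant separator bag $B_s$ contains $N_G(A)\cap B$; since $B_s$ is a clique, that one neighbourhood is a clique. The vertices of $A$ with neighbours across the cut may be scattered over many bags below $s$ and need not be pairwise adjacent.

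This is not merely a stylistic slip: your step (iii) as stated (``any $G$-edge with one endpoint among the descendant leaves and one among the non-descendant leaves must have both endpoints in $B_s$'') would fail in the case analysis for the $w$-path edges, and your parenthetical ``indeed both'' is wrong. The repair is exactly the paper's argument: for edges of type $v^t_iv^t_{i+1}$ (and the adjacent types), $B_t\cap B_{t'}$ separates $A$ from $B\setminus B_t$, so $N_G(A)\cap B\subseteq B_t$; for edges $w^t_iw^t_{i+1}$, $B_t$ separates $L_{j_1}$ from $L_{j_2}$ for $j_1\le i<j_2$, so again $N_G(A)\cap B\subseteq B_t$ for the side $A$ not containing $B_t$. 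The one-sided conclusion still gives $\simval_G(A)\le 1$ exactly as you observe at the end, so your final deduction is fine once the clique statement is weakened to the form actually claimed. (Your concern about the smoothing step is harmless: one argues for $T'$ and notes that smoothing only removes edges, never creates new cuts.)
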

\begin{clproof}
For convenience, we prove for $T'$, which is the tree before smoothing.
We may assume that both end nodes of $e$ are internal nodes of $T'$, otherwise, it is trivial.
There are four types of $e$:
\begin{enumerate}
\item $e=v^t_iv^t_{i+1}$ for some $t\in V(F)$, its parent $t'$, and $i\in \{1, \ldots, \abs{B_t\setminus B_{t'}}-1\}$.
\item $e=w^t_1v^t_1$ for some $t\in V(F)$.
\item $e=v^{t_i}_{\abs{B_{t_i}\setminus B_t}}w^t_i$ for some $t\in V(F)$ and its child $t_i$.
\item $e=w^t_iw^t_{i+1}$ for some $t\in V(F)$ and its child $t_i$.
\end{enumerate}
Suppose $e=v^t_iv^t_{i+1}$ for some $t$ and $i$, and let $t'$ be the parent of $t$. 
We may assume that $A$ is the set of all vertices assigned to the descendants of $v^t_i$.
Note that $B_t\cap B_{t'}$ separates $A$ and $B\setminus B_t$  in $G$.
Therefore, for each $v\in A$, $N_G(v)\cap B$ is contained in $B_t$, and 
$N_G(A)\cap B$ is a clique.
We can similarly prove for Cases 2 and 3.

Now, let $t_1, \ldots, t_m$ be the children of $t$, and let $w^t_1, \ldots, w^t_m$ be the vertices that were replaced from $t$ in the algorithm. 
We assume $e=w^t_iw^t_{i+1}$ for some $i\in \{1, \ldots, m-1\}$. For each $j\in \{1, \ldots, m\}$, let $L_j$ be the set of all vertices assigned to the descendants of $v^{t_j}_{\abs{B_{t_j}\setminus B_t}}$. Without loss of generality, we may assume that $B_t\subseteq B$.
We can observe that $A=\bigcup_{i+1\le j\le m} L_j$.

Let $j_1$ and $j_2$ be integers such that $1\le j_1\le i<j_2\le m$. Note that $B_t$ separates $L_{j_1}$ and $L_{j_2}$ in $G$.
Therefore, for every $v\in A$, $N_G(v)\cap B$ is contained in $B_t$, and $N_G(A)\cap B$ is a clique, as required.
\end{clproof}

We prove that when $G$ is $(K_t\mat S_t)$-free for some $t\ge 2$, $(T,L)$ has mim-width at most $t-1$.

\begin{claim}
Let $t\ge 2$ be an integer. If $G$ is $(K_t\mat S_t)$-free, then $(T,L)$ has mim-width at most $t-1$.
\end{claim}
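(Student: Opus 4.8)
The plan is to bound, by $t-1$, the size of any induced matching that crosses a cut of the branch-decomposition $(T,L)$, using Claim~\ref{claim:clique} together with the fact that $G$ is chordal. Fix an edge $e$ of $T$, let $(A,B)$ be the associated vertex bipartition, and suppose $\{a_1b_1,\dots,a_mb_m\}$ is an induced matching in $G$ between $A$ and $B$ with $a_i\in A$, $b_i\in B$. By Claim~\ref{claim:clique}, one of $N_G(A)\cap B$ and $N_G(B)\cap A$ is a clique; after possibly swapping the names of the two sides of the cut (which merely swaps the roles of the $a_i$ and the $b_i$) we may assume $N_G(A)\cap B$ is a clique.

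Since each $a_i\in A$ is adjacent to $b_i$, every $b_i$ lies in $N_G(A)\cap B$, so $\{b_1,\dots,b_m\}$ is a clique. The key step is to show that $\{a_1,\dots,a_m\}$ is independent, and this is exactly where chordality is used: for $i\neq j$ we have the edges $a_ib_i$, $a_jb_j$, $b_ib_j$ while $a_ib_j,a_jb_i\notin E(G)$ since the matching is induced, so an edge $a_ia_j$ would make $a_ib_ib_ja_j$ an induced $4$-cycle, contradicting that $G$ is chordal. Hence $\{a_1,\dots,a_m\}$ is independent, and therefore $G[\{a_1,\dots,a_m,b_1,\dots,b_m\}]$ is isomorphic to $K_m\mat S_m$.

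Finally, $K_t\mat S_t$ occurs as an induced subgraph of $K_m\mat S_m$ whenever $m\ge t$ (keep only the first $t$ vertices on each side), so $(K_t\mat S_t)$-freeness of $G$ forces $m\le t-1$. Since $e$ was an arbitrary edge of $T$, every edge has mim-width at most $t-1$, hence $(T,L)$ has mim-width at most $t-1$.

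I expect the only subtle point to be the independence of $\{a_1,\dots,a_m\}$: Claim~\ref{claim:clique} controls only one side of the matching, and it is the absence of induced $C_4$'s in the chordal graph $G$ that forces the other side to be an independent set, turning an arbitrary crossing induced matching into an exact copy of $K_m\mat S_m$. The remaining ingredients — the clique among the $b_i$, the harmless symmetry between $A$ and $B$, and the induced-subgraph containment of $K_t\mat S_t$ in $K_m\mat S_m$ — are routine.
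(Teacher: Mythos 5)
Your proof is correct and follows essentially the same route as the paper: invoke Claim~\ref{claim:clique} to make the matched endpoints on one side a clique, use chordality (no induced $C_4$) to force the endpoints on the other side to be independent, and conclude that a crossing induced matching of size $t$ would yield an induced $K_t\mat S_t$. The only cosmetic difference is that the paper phrases this as a contradiction starting from $\mimval_G(A)\ge t$ rather than bounding $m$ directly.
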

\begin{clproof}
We show that $(T, L)$ has mim-width at most $t-1$. 
Suppose for contradiction that there is an edge $e$ of $T$ with the vertex bipartition $(A,B)$ of $G$ associated with $e$
such that $\mimval_{G}(A)\ge t$. 
We may assume both end nodes of $e$ are internal nodes of $T$.

By Claim~\ref{claim:clique}, one of $N_G(A)\cap B$ and $N_G(B)\cap A$ is a clique. Without loss of generality we assume $N_G(B)\cap A$ is a clique $C$.
Since $\mimval_{G}(A)\ge t$, 
there is an induced matching $\{a_1b_1, \ldots, a_tb_t\}$ in $G[A, B]$ where $a_1, \ldots, a_t\in A$. 
Since $N_G(B)\cap A$ is a clique $C$, there are no edges between vertices in $\{b_1, \ldots, b_t\}$, otherwise, it creates an induced cycle of length $4$.
Thus, we have an induced subgraph isomorphic to $K_t\mat S_t$, which contradicts to our assumption.
We conclude that  $(T,L)$ has mim-width at most $t-1$.
\end{clproof}

This concludes the proposition.
\end{proof}

As a corollary of Proposition~\ref{prop:chordal}, we obtain the following.

\begin{corollary}\label{cor:mainchordal}
Let $t\ge 2$ be an integer.
\begin{enumerate}[(1)]
\item Given an $n$-vertex $(K_t\mat S_t)$-free chordal graph and $\sigma, \rho \subseteq \mathbb{N}$,
one can solve \textsc{Min-(or Max-) $(\sigma, \rho)$-domination}  in time $\mathcal{O}(n^{3d(t-1)+4})$ where $d=\max( d(\sigma), d(\rho))$.
\item Given an $n$-vertex $(K_t\mat S_t)$-free chordal graph and a $(q\times q)$-matrix $D_q$,
one can solve \textsc{$D_q$-partitioning} in time $\mathcal{O}(qn^{3dq(t-1)+4})$ where $d=\max_{i,j} d(D_q[i,j])$.
\end{enumerate}
\end{corollary}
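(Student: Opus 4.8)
The plan is to derive the corollary directly by composing Proposition~\ref{prop:chordal} with Theorem~\ref{thm:mimalgvs}. First, given an $n$-vertex $(K_t\mat S_t)$-free chordal graph $G$, I would invoke Proposition~\ref{prop:chordal} to compute, in time $\mathcal{O}(\abs{V(G)}+\abs{E(G)})$, a branch-decomposition $(T,L)$ of $G$. Since $G$ is assumed $(K_t\mat S_t)$-free, the second bullet of that proposition certifies that $(T,L)$ has mim-width at most $t-1$. So in linear time we already hold exactly the kind of object Theorem~\ref{thm:mimalgvs} takes as input: a branch-decomposition of bounded mim-width.

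Next I would feed $(T,L)$ together with the problem data into Theorem~\ref{thm:mimalgvs}, instantiated with $w=t-1$. For statement (1), part~(1) of Theorem~\ref{thm:mimalgvs} solves \textsc{Min-(or Max-)$(\sigma,\rho)$-domination} in time $\mathcal{O}(n^{3dw+4})=\mathcal{O}(n^{3d(t-1)+4})$ with $d=\max(d(\sigma),d(\rho))$, as claimed. For statement (2), part~(2) of Theorem~\ref{thm:mimalgvs} solves \textsc{$D_q$-partitioning} in time $\mathcal{O}(qn^{3dwq+4})=\mathcal{O}(qn^{3dq(t-1)+4})$ with $d=\max_{i,j}d(D_q[i,j])$. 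Since $\mathcal{O}(\abs{V(G)}+\abs{E(G)})=\mathcal{O}(n^2)$ and, for $t\ge 2$, the exponents appearing above are at least $4$, the linear-time construction of $(T,L)$ is absorbed into the stated running times and does not affect the asymptotics.

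There is essentially no substantive obstacle: the mathematical content sits entirely in Proposition~\ref{prop:chordal} and in the cited dynamic-programming result, and the corollary is simply their composition. The only points that merit a sentence of care are (i) that Theorem~\ref{thm:mimalgvs} is phrased to accept an arbitrary branch-decomposition of a given mim-width — so a possibly non-optimal decomposition of mim-width $t-1$ is enough, and we never need to compute $\mimw(G)$ exactly — and (ii) the trivial domination of the preprocessing cost noted above. If desired, one can additionally record the concrete specializations with $d=1$ for \textsc{Minimum Dominating Set} and for \textsc{$q$-Coloring}, which is where the $n^{\mathcal{O}(t)}$ bounds advertised in the introduction come from.
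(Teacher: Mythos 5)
Your proposal is correct and is exactly the derivation the paper intends: the corollary is stated as an immediate consequence of Proposition~\ref{prop:chordal} (which supplies, in linear time, a branch-decomposition of mim-width at most $t-1$ for $(K_t\mat S_t)$-free chordal graphs) plugged into Theorem~\ref{thm:mimalgvs} with $w=t-1$. Your remarks on absorbing the preprocessing cost and on not needing an optimal decomposition are accurate but routine.
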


This generalizes the algorithms for interval graphs, as interval graphs are $(K_3\mat S_3)$-free chordal graphs~\cite{LekkerkerkerB1962}.

We now prove the lower bound on the mim-width of general chordal graphs.
We show this for split graphs, which form a subclass of the class of chordal graphs.
The Sauer-Shelah lemma~\cite{Sauer1972, Shelah1972} is essential in the proof.

\begin{theorem}[Sauer-Shelah lemma~\cite{Sauer1972, Shelah1972}]\label{thm:sauershelah}
Let $t$ be a positive integer and let $M$ be an $X \times Y$ $(0,1)$-matrix such that $\abs{Y}\ge 2$ and any two row vectors of $M$ are distinct.
If $\abs{X}\ge \abs{Y}^t$, then there are $X'\subseteq X$, $Y'\subseteq Y$ such that $\abs{X'}=2^t$, $\abs{Y'}=t$, and all possible row vectors of length $t$ appear in $M[X',Y']$.
\end{theorem}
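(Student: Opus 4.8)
The plan is to recognize the statement as the classical Sauer–Shelah lemma rephrased in matrix language, and to prove it in the equivalent \emph{shattering} formulation. First I would identify each row of $M$ with the subset of $Y$ consisting of the columns in which that row carries a $1$; since the rows are pairwise distinct, this produces a family $\mathcal{F}$ of exactly $\abs{X}$ distinct subsets of $Y$. Call a set $Y'\subseteq Y$ \emph{shattered} by $\mathcal{F}$ if every one of the $2^{\abs{Y'}}$ subsets of $Y'$ occurs as $A\cap Y'$ for some $A\in\mathcal{F}$; in matrix terms this says precisely that all possible row vectors of length $\abs{Y'}$ appear in the columns indexed by $Y'$. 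Thus the conclusion is equivalent to producing a shattered set $Y'$ of size exactly $t$, after which $X'$ is obtained by selecting, for each of the $2^t$ patterns on $Y'$, one witnessing row; these rows are distinct because their restrictions to $Y'$ differ, so $\abs{X'}=2^t$.

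The heart of the argument is the bound asserting that $\mathcal{F}$ shatters at least $\abs{\mathcal{F}}$ sets (Pajor's inequality), which I would prove by induction on $\abs{Y}$. Fixing an element $y\in Y$, I split $\mathcal{F}$ into the projection $\mathcal{F}_1=\{A\setminus\{y\}:A\in\mathcal{F}\}$ onto $Y\setminus\{y\}$ and the family $\mathcal{F}_2=\{A\subseteq Y\setminus\{y\}: A\in\mathcal{F}\text{ and }A\cup\{y\}\in\mathcal{F}\}$. Grouping the members of $\mathcal{F}$ according to their projection onto $Y\setminus\{y\}$ yields the counting identity $\abs{\mathcal{F}}=\abs{\mathcal{F}_1}+\abs{\mathcal{F}_2}$. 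The key structural observations are that a set avoiding $y$ is shattered by $\mathcal{F}_1$ exactly when it is shattered by $\mathcal{F}$, and that if $S\subseteq Y\setminus\{y\}$ is shattered by $\mathcal{F}_2$ then $S\cup\{y\}$ is shattered by $\mathcal{F}$, since each witness $A\in\mathcal{F}_2$ supplies both $A$ and $A\cup\{y\}$ in $\mathcal{F}$. The map $S\mapsto S\cup\{y\}$ then injects the shattered sets of $\mathcal{F}_2$ into the shattered sets of $\mathcal{F}$ that contain $y$, so the number of sets shattered by $\mathcal{F}$ is at least the sum of the corresponding counts for $\mathcal{F}_1$ and $\mathcal{F}_2$; the induction hypothesis applied to these two families on the smaller ground set $Y\setminus\{y\}$ closes the loop. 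The base case $\abs{Y}=0$ is immediate.

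With Pajor's inequality in hand, the conclusion follows from a counting estimate. Since any subset of a shattered set is again shattered, if no subset of size $t$ were shattered then every shattered set would have size at most $t-1$, giving $\abs{\mathcal{F}}\le\sum_{i=0}^{t-1}\binom{\abs{Y}}{i}$. Using $\binom{\abs{Y}}{i}\le\abs{Y}^i$ together with $\abs{Y}\ge 2$, this sum is at most $\frac{\abs{Y}^t-1}{\abs{Y}-1}\le\abs{Y}^t-1<\abs{Y}^t$, which contradicts $\abs{X}=\abs{\mathcal{F}}\ge\abs{Y}^t$. Hence some $Y'$ of size exactly $t$ must be shattered, and extracting one row per pattern on $Y'$ delivers the required $X'$.

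I anticipate that the main obstacle is the inductive step for Pajor's inequality, specifically the careful verification of the identity $\abs{\mathcal{F}}=\abs{\mathcal{F}_1}+\abs{\mathcal{F}_2}$ and the injectivity of $S\mapsto S\cup\{y\}$ on the shattered sets of $\mathcal{F}_2$; once these are established, the closing estimate is routine, with the hypothesis $\abs{Y}\ge 2$ entering only to guarantee $\frac{\abs{Y}^t-1}{\abs{Y}-1}<\abs{Y}^t$.
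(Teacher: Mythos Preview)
Your argument is correct: the translation into the shattering language, the inductive proof of Pajor's inequality via the splitting $\abs{\mathcal{F}}=\abs{\mathcal{F}_1}+\abs{\mathcal{F}_2}$, and the closing estimate $\sum_{i=0}^{t-1}\binom{\abs{Y}}{i}\le\frac{\abs{Y}^t-1}{\abs{Y}-1}<\abs{Y}^t$ for $\abs{Y}\ge 2$ are all sound, and the extraction of $X'$ at the end is fine.

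There is, however, nothing to compare against: the paper does not prove Theorem~\ref{thm:sauershelah} at all. It is quoted as a black box from the references~\cite{Sauer1972,Shelah1972} and then applied once in the proof of Proposition~\ref{prop:lowerbdsplit}. Your write-up therefore supplies a self-contained proof where the paper relies on a citation; the Pajor-inequality route you chose is one of the standard modern proofs of the Sauer--Shelah lemma.
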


\begin{proposition}\label{prop:lowerbdsplit}
For every large enough $n$, there is a split graph on $n$ vertices having mim-width at least $\frac{\log_2 n}{5\log_2 (\log_2 n)}$.
\end{proposition}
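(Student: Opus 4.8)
The plan is to construct a split graph whose clique part $C$ and independent part $I$ encode, via adjacency, a $(0,1)$-matrix all of whose short row patterns are realized, and then argue that any branch-decomposition must produce a balanced cut on which a large induced matching lives. Concretely, I would fix a parameter $t$ and take $|C| = t$ and $|I| = 2^t$, with the adjacency between $I$ and $C$ chosen so that the characteristic vectors of $N(u) \cap C$, for $u \in I$, range over all $2^t$ subsets of $C$ --- equivalently, the incidence matrix $M$ indexed by $I \times C$ has all $2^t$ possible rows. The total number of vertices is $n = t + 2^t$, so $t = \Theta(\log_2 n)$, more precisely $t \ge \log_2 n - \log_2\log_2 n - O(1)$; I will need the slightly weaker bookkeeping bound $t \ge \frac{\log_2 n}{5\log_2(\log_2 n)}$ only in the final line, which follows comfortably from $2^t \le n$ for $n$ large.

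Next I would invoke Lemma~\ref{lem:balancedpartition} with $f = \mimval_G$ and $w = \mimw(G)$: if $\mimw(G) \le w$ then there is a vertex bipartition $(A_1, A_2)$ with $\mimval_G(A_1) \le w$ and each side of size strictly between $n/3$ and $2n/3$. Since $|C| = t$ is tiny compared to $n/3$, the bulk of both $A_1$ and $A_2$ lies inside $I$; say $I_1 = A_1 \cap I$ and $I_2 = A_2 \cap I$ with $|I_1|, |I_2| \ge n/3 - t \ge 2^{t}/4$ (for $n$ large). Now I want to extract from this balanced split of $I$ a large induced matching between $A_1$ and $A_2$ using only vertices of $C$ on one side and vertices of $I$ on the other. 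The key observation is that $C$ is entirely on one side, WLOG $C \subseteq A_1$; then $I_2 \subseteq A_2$, and every vertex of $I_2$ has its neighborhood contained in $C \subseteq A_1$. An induced matching between $A_1$ and $A_2$ now consists of pairs $\{c_j, u_j\}$ with $c_j \in C$, $u_j \in I_2$, such that $c_j u_j \in E$ and no other edges appear among $\{c_1,\dots,c_m,u_1,\dots,u_m\}$; since the $u_j$'s are pairwise non-adjacent (they are in $I$), the only constraint is that $c_i u_j \in E \iff i = j$.

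So the task reduces to: given that the rows of $M$ restricted to $I_2$ still contain many distinct patterns (indeed, since $|I_2| \ge 2^t/4$ and all rows of $M$ are distinct, $M[I_2, C]$ has at least $2^{t}/4$ distinct rows and $|C| = t$ columns), find $C' \subseteq C$ and $U' \subseteq I_2$ with $|C'| = |U'| = t'$ realizing the "identity pattern" --- i.e., an induced matching of size $t'$. This is exactly where the Sauer--Shelah lemma (Theorem~\ref{thm:sauershelah}) enters: with $|I_2| \ge 2^t/4 \ge t^{t'}$ for a suitable $t' = \Theta(t/\log t)$ and $|C| = t$, we obtain $U' \subseteq I_2$, $C' \subseteq C$ with $|U'| = 2^{t'}$, $|C'| = t'$, and $M[U', C']$ containing all $2^{t'}$ length-$t'$ patterns; in particular it contains the $t'$ patterns that are standard basis vectors, and the corresponding $t'$ vertices of $U'$ together with $C'$ form the desired induced matching of size $t'$. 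Hence $\mimw(G) \ge t'$, and after tracking $t = \Theta(\log n)$ and $t' = \Theta(t/\log t) = \Theta(\log n / \log\log n)$ one gets the claimed $\frac{\log_2 n}{5\log_2(\log_2 n)}$ bound with room to spare.

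I expect the main obstacle to be the quantitative arithmetic of making the Sauer--Shelah threshold $|X| \ge |Y|^t$ fit: I must ensure $|I_2| \ge |C'|$-many distinct rows survive after passing to a balanced side and that $\log_2(2^t/4)\big/\log_2 t$ is comfortably at least $t'$ with the constant $5$ in the denominator absorbing the loss from the factor $4$, the $-t$ from discarding $C$, and the difference between $\log_2 n$ and $t = \log_2(n-t)$. None of this is deep, but it has to be set up so that "for every large enough $n$" is honest --- in particular $n$ ranges over all sufficiently large integers, not just those of the form $t + 2^t$, so I would pad $I$ with extra twin vertices (duplicating an existing vertex of $I$) to hit an arbitrary $n$, noting this neither decreases mim-width nor affects the distinct-rows count by more than a constant factor. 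The structural part --- that a balanced cut forces $C$ to one side and thus reduces everything to an identity-pattern search --- is the clean idea, and Sauer--Shelah does the combinatorial heavy lifting.
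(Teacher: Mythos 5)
Your overall strategy---a split graph realizing all neighborhood patterns on the clique, a balanced cut supplied by Lemma~\ref{lem:balancedpartition}, and Sauer--Shelah to extract an identity submatrix, hence an induced matching---is exactly the paper's. But there is one genuine gap: the step you label ``the key observation,'' namely that ``$C$ is entirely on one side, WLOG $C\subseteq A_1$.'' Nothing justifies this. The branch-decomposition is adversarial (mim-width is a minimum over all decompositions), and Lemma~\ref{lem:balancedpartition} hands you a balanced cut taken from \emph{that} decomposition; it may split $C$ arbitrarily, e.g.\ with $t/2$ clique vertices on each side. Your subsequent reduction---every vertex of $I_2$ has all its neighbors in $A_1$, so only the identity-pattern search on $M[I_2,C]$ remains---rests on this false premise, and the count of ``at least $2^t/4$ distinct rows'' surviving in $I_2$ is computed against the full column set $C$, which is no longer available.

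The gap is repairable, and the repair is what the paper actually does. Let $C_1=A_1\cap C$ be the majority side, so $\abs{C_1}\ge t/2$ and $\abs{A_2\cap C}\le t/2$. Since the neighborhoods on all of $C$ are pairwise distinct, at most $2^{\abs{A_2\cap C}}\le 2^{t/2}$ vertices of $I$ can share a given pattern on $C_1$, so at least $\abs{A_2\cap I}/2^{t/2}\ge 2^{t/2}/4$ vertices of $A_2\cap I$ have pairwise distinct neighborhoods on $C_1$. Applying Sauer--Shelah to this $(A_2\cap I)\times C_1$ matrix still yields an identity pattern of size roughly $t/(4\log_2 t)$, and this is still an induced matching of $G[A_1,A_2]$: edges inside $A_1$ (within the clique) and edges from $I_2$ into $A_2\cap C$ are invisible to $\mimval_G(A_1)$, which only sees crossing edges. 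With this fix your arithmetic goes through and the constant $5$ still absorbs the extra factor $2^{t/2}$. One small credit: your twin-padding remark, covering all sufficiently large $n$ rather than only $n$ of the form $t+2^t$, addresses a point the paper's own proof glosses over.
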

\begin{proof}
Let $m\ge 4$ be an integer and let $n:=m+(2^{m}-1)$.
Let $G$ be a split graph on the vertex bipartition $(C,I)$ where
$C$ is a clique of size $m$, $I$ is an independent set of size $2^m-1$, and all vertices in $I$ have pairwise distinct non-empty neighborhoods on $C$.
We claim that every branch-decomposition of $G$ has mim-width at least  $\frac{\log_2 n}{5\log_2 (\log_2 n)}$.

Let $(T,L)$ be a branch-decomposition of $G$.
By Lemma~\ref{lem:balancedpartition}, there is a vertex bipartition $(A_1, A_2)$ of $G$ associated with $e$ satisfies that 
for each $i\in \{1,2\}$, $\frac{n}{3}<\abs{A_i}\le \frac{2n}{3}$.
Without loss of generality, we may assume that $\abs{A_1\cap C}\ge \abs{A_2\cap C}$, and thus we have $\frac{m}{2}\le \abs{A_1\cap C}\le m$ and $\abs{A_2\cap C}\le \frac{m}{2}$.

\begin{claim}
There are at least $\abs{A_1\cap C}^{\frac{m}{4\log_2 m}}$ vertices in $A_2\cap I$ that have pairwise distinct neighbors in $A_1\cap C$. 
\end{claim}
\begin{clproof}
Since $m\ge 4$, we have $\abs{A_2\cap I}=\abs{A_2}-\abs{A_2\cap C}> \frac{n}{3}-\frac{m}{2}\ge \frac{2^m-\frac{m}{2}-1}{3}\ge \frac{1}{4}2^{m}$.
Since $\abs{A_2\cap C}< \frac{m}{2}$, 
there are at least 
\[\frac{1}{4}\cdot \frac{2^m}{2^{\abs{A_2\cap C}}}\ge \frac{1}{4}2^{\frac{m}{2}}\]
vertices in $A_2\cap I$ that have pairwise distinct neighbors on $A_1\cap C$. 
Thus, we have 
\[\frac{1}{4}2^{\frac{m}{2}} \ge m^{\frac{m}{4\log_2 m}} \ge \abs{A_1\cap C}^{\frac{m}{4\log_2 m}},\]
as required.
\end{clproof}

Let $I'\subseteq A_2\cap I$ be a maximal set of vertices that have pairwise distinct neighbors in $A_1\cap C$.
By the Sauer-Shelah lemma, there is an induced matching of size $\frac{m}{4\log_2 m}$ between $A_1\cap C$ and $I'$.
It implies that $(T, L)$ has mim-width at least $\frac{m}{4\log_2 m}$.
As $(T,L)$ was chosen arbitrary, the mim-width of $G$ is at least $\frac{m}{4\log_2 m}\ge \frac{(\log_2 n) -1}{4\log_2 (\log_2 n)}\ge \frac{\log_2 n}{5\log_2 (\log_2 n)}$.
\end{proof}
 
 \subsection{Mim-width of co-comparability graphs}\label{subsec:cocomparability}
 
We observe similar properties for co-comparability graphs.
 
 \begin{theorem}[McConnell and Spinrad~\cite{RossJ1994}]\label{thm:spinrad}
 Given a co-comparability graph $G$, one can output a co-comparability ordering in time $\mathcal{O}(\abs{V(G)}+\abs{E(G)})$.
 \end{theorem}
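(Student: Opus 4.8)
The plan is to reduce the construction of a co-comparability ordering to computing a transitive orientation of the complement, and then to realize that reduction inside a single linear-time modular-decomposition pass so that the dense complement is never materialized. First I would record the exact correspondence. Let $\overline{G}$ be the complement of $G$; given any vertex ordering $v_1,\dots,v_n$, orient each edge $v_av_b\in E(\overline{G})$ with $a<b$ as $v_a\to v_b$. Every arc then points from a smaller index to a larger one, so the digraph is automatically acyclic and is a strict partial order exactly when it is transitive. Transitivity says that for $i<j<k$ with $v_iv_j,v_jv_k\in E(\overline{G})$ we also have $v_iv_k\in E(\overline{G})$; rewriting edges of $\overline{G}$ as non-edges of $G$ and taking the contrapositive recovers precisely the defining condition of a co-comparability ordering from Section~\ref{subsec:graphclass}. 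Hence the co-comparability orderings of $G$ are exactly the linear extensions of transitive orientations of $\overline{G}$, and it suffices to produce one such orientation (a partial order $P$) and output a topological sort of it.

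The obstacle is efficiency: $\overline{G}$ may have $\Theta(\abs{V(G)}^2)$ edges, so running a generic transitive-orientation routine on $\overline{G}$ explicitly, or even writing $\overline{G}$ down or materializing $P$, already exceeds the target $\mathcal{O}(\abs{V(G)}+\abs{E(G)})$. The point is to emit the $\mathcal{O}(\abs{V(G)})$-size output permutation while touching only $G$. To this end I would route everything through the modular decomposition tree, which is identical for $G$ and $\overline{G}$ (complementation only swaps the series and parallel labels and complements each prime quotient) and which can be built in time $\mathcal{O}(\abs{V(G)}+\abs{E(G)})$.

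I then order the children-modules at each node and recurse, concatenating the pieces. At a parallel node of $G$ the children are pairwise completely adjacent in $\overline{G}$, so their quotient is a clique and any linear order of the children gives a valid transitive orientation; at a series node of $G$ the children are pairwise non-adjacent in $\overline{G}$, so no constraints arise and again any order works; at a prime node the complemented quotient is a prime comparability graph, which by the classical theory of comparability graphs (Gallai) admits a transitive orientation unique up to reversal, forcing an essentially unique order of the children. Reading the tree top-down and replacing each module by the recursively produced order of its own children yields a linear extension of a transitive orientation of $\overline{G}$, hence a co-comparability ordering, all without ever assembling $P$ or $\overline{G}$.

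The hard part, and the heart of the cited result, is carrying out the prime-node step — computing the forced orientation through the forcing (implication) classes of the edges — in time linear in $\abs{E(G)}$ rather than in $\abs{E(\overline{G})}$. This is where the partition-refinement machinery of McConnell and Spinrad is essential: it computes and transitively orients the prime quotients without enumerating the dense complement. Assembling those linear-time subroutines with the reduction above gives the claimed $\mathcal{O}(\abs{V(G)}+\abs{E(G)})$ bound.
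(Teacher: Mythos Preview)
The paper does not prove this theorem at all: it is stated as a black-box citation of McConnell and Spinrad~\cite{RossJ1994} and used without proof in the argument for Proposition~\ref{prop:cocomparability}. There is therefore nothing in the paper to compare your proposal against.

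That said, your sketch is a faithful high-level outline of the cited result: the reduction of co-comparability orderings to linear extensions of transitive orientations of $\overline{G}$ is correct, the modular-decomposition strategy (with series/parallel/prime nodes handled separately) is indeed how McConnell and Spinrad proceed, and you correctly identify that the nontrivial content lies in orienting prime quotients via partition refinement without materializing $\overline{G}$. As a proof it remains a sketch rather than a self-contained argument---you defer the hard step to ``the partition-refinement machinery of McConnell and Spinrad''---but since the paper itself treats the result as a citation, that is appropriate here.
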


\begin{proposition}\label{prop:cocomparability}
Given a co-comparability graph $G$, 
one can in time $\mathcal{O}(\abs{V(G)}+\abs{E(G)})$ output a linear branch-decomposition $(T,L)$ with the following property: 
\begin{itemize}
\item $(T,L)$ has linear sim-width at most $1$,
\item $(T,L)$ has linear mim-width at most $t-1$ if $G$ is $(K_t\mat K_t)$-free for some integer $t\ge 2$.
\end{itemize}
\end{proposition}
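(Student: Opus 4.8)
The plan is to mirror the proof of Proposition~\ref{prop:chordal}, replacing the clique-tree of a chordal graph with a co-comparability ordering $v_1,\ldots,v_n$ obtained in linear time via Theorem~\ref{thm:spinrad}. Given such an ordering, I would build a \emph{caterpillar} branch-decomposition $(T,L)$ in the obvious way: take a path with one pendant leaf per vertex, attaching the leaf for $v_i$ before the leaf for $v_{i+1}$, so that each edge $e$ of the spine induces a bipartition $(A,B)$ of the form $A=\{v_1,\ldots,v_k\}$, $B=\{v_{k+1},\ldots,v_n\}$ (a ``prefix-suffix'' cut). This clearly takes $\mathcal{O}(\abs{V(G)}+\abs{E(G)})$ time and is a linear branch-decomposition by construction.

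The core structural claim, playing the role of Claim~\ref{claim:clique}, is: for every prefix-suffix cut $(A,B)$ with $A=\{v_1,\ldots,v_k\}$, either $N_G(A)\cap B$ or $N_G(B)\cap A$ is a clique. I would argue this directly from the co-comparability ordering condition. Suppose $N_G(B)\cap A$ is not a clique, witnessed by non-adjacent $v_i,v_j\in A$ with $i<j\le k$, each having a neighbor in $B$. Pick any $v_\ell\in N_G(v_j)\cap B$, so $\ell>k\ge j>i$; applying the co-comparability axiom to $i<j<\ell$ with $v_iv_\ell$ an edge forces $v_j$ adjacent to $v_i$ or $v_\ell$ — since $v_iv_j$ is a non-edge, we would need $v_i$ adjacent to \emph{that particular} $v_\ell$, which need not hold, so the triple I actually want is the one that pins down $N_G(A)\cap B$. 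More precisely: take any two non-adjacent $v_a,v_b\in N_G(A)\cap B$ with $a<b$; there is $v_i\in A$ with $i<a<b$ and $v_i$ adjacent to $v_b$, and then the axiom on $i<a<b$ gives $v_a$ adjacent to $v_i$ or $v_b$; since $v_av_b$ is a non-edge, $v_a$ is adjacent to some vertex of $A$, i.e.\ $v_a\in N_G(A)\cap B$ already — so that alone is not a contradiction. The clean way is to show that the non-adjacency graph restricted to each side behaves monotonically; I would instead prove the cleaner statement that \emph{if both $N_G(A)\cap B$ and $N_G(B)\cap A$ contain a non-edge, one builds a forbidden configuration}, which is exactly what the mim-width bound below needs, and for sim-width $1$ note that $K_2\mat K_2$ (an induced $C_4$) would arise. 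I expect this dichotomy to follow by a short case analysis on the relative order of the four endpoints of two such non-edges, using the axiom two or three times.

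Granting the dichotomy, the mim-width bound is then the co-comparability analogue of the last claim in Proposition~\ref{prop:chordal}, but now using $K_t\mat K_t$ instead of $K_t\mat S_t$. Suppose some prefix-suffix cut $(A,B)$ has an induced matching $\{a_1b_1,\ldots,a_tb_t\}$ with $a_i\in A$, $b_i\in B$. By the dichotomy, one of $N_G(A)\cap B$, $N_G(B)\cap A$ is a clique; say $C:=N_G(B)\cap A$ is a clique. Then $\{a_1,\ldots,a_t\}\subseteq C$ is a clique. I claim $\{b_1,\ldots,b_t\}$ is \emph{also} a clique: if $b_ib_j$ were a non-edge then, since $N_G(A)\cap B$ contains the non-edge $b_ib_j$ and $N_G(B)\cap A\supseteq\{a_i,a_j\}$ is a clique (hence in particular a non-edge-free set meeting $B$), the dichotomy already told us $N_G(A)\cap B$ is not forced to be a clique — but I can now invoke the co-comparability axiom on the four vertices $a_i,a_j,b_i,b_j$ in their sorted order together with the induced-matching non-edges $a_ib_j$, $a_jb_i$ to derive that $b_i$ and $b_j$ must in fact be adjacent (the matching edge $a_ib_i$ forces $b_i$'s neighbor along the path, and the only available neighbor turns out to be $b_j$). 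Thus $\{a_1,\ldots,a_t,b_1,\ldots,b_t\}$ induces two cliques of size $t$ joined by a perfect matching, i.e.\ a $K_t\mat K_t$, contradicting $(K_t\mat K_t)$-freeness. Hence $\mimval_G(A)\le t-1$ on every cut, giving linear mim-width at most $t-1$; and for $t=2$ this also reproves linear sim-width at most $1$ after checking that a sim-width-$2$ witness (an induced matching with no edges inside either side) would similarly force a $K_2\mat K_2$. The main obstacle is getting the order-of-endpoints case analysis for the dichotomy clean — with four interleaved indices there are a handful of cases, and each must be closed by a correctly chosen application of the co-comparability axiom; once that lemma is in hand, everything else parallels the chordal proof verbatim.
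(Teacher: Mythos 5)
Your choice of decomposition (the caterpillar following a co-comparability ordering obtained via Theorem~\ref{thm:spinrad}) and your final target (the two endpoint sets of a size-$t$ cross induced matching both form cliques, yielding $K_t\mat K_t$) match the paper. However, the lemma you organize the whole argument around --- that for every prefix cut $(A,B)$ either $N_G(A)\cap B$ or $N_G(B)\cap A$ is a clique --- is false for co-comparability orderings, and you also never actually prove it (you explicitly defer the ``order-of-endpoints case analysis''). Concretely, take the $(p\times q)$ column-clique grid with its column-by-column co-comparability ordering (as in the proof of Corollary~\ref{cor:lowercocomp}) and a prefix cut that splits a column: let $A$ consist of columns $1,\dots,j$ together with the first $r$ vertices of column $j+1$, where $1\le r<p$ and $j\le q-2$. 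Then $N_G(B)\cap A$ contains the non-adjacent pair $v_{p,j}$ and $v_{1,j+1}$, while $N_G(A)\cap B$ contains the non-adjacent pair $v_{p,j+1}$ and $v_{1,j+2}$, so neither side is a clique; the weakened version you retreat to (``if both sides contain a non-edge one builds a forbidden configuration'') fails on the same example. The dichotomy is genuinely a feature of the clique-separator structure of chordal graphs (Claim~\ref{claim:clique}) and does not transfer to this setting.

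The good news is that the dichotomy is not needed, and the correct argument is shorter than what you propose. Given an induced matching $\{v_{i_1}v_{j_1},\dots,v_{i_t}v_{j_t}\}$ in $G[A,B]$ with all $i_x\le k<j_x$, apply the ordering axiom directly: for $i_x<i_y$ the triple $i_x<i_y<j_x$ together with the edge $v_{i_x}v_{j_x}$ forces $v_{i_y}$ to be adjacent to $v_{i_x}$ or to $v_{j_x}$, and the latter is excluded because the matching is induced in $G[A,B]$; hence $\{v_{i_1},\dots,v_{i_t}\}$ is a clique, and symmetrically (using $i_y<j_x<j_y$) so is $\{v_{j_1},\dots,v_{j_t}\}$, giving a $K_t\mat K_t$ induced subgraph. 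The sim-width bound is the same two-line computation with $t=2$ but applied to an induced matching in $G$ (not merely in $G[A,B]$): there the conclusion ``$v_{i_2}$ is adjacent to $v_{i_1}$ or $v_{j_1}$'' is already a contradiction on its own, so no $K_2\mat K_2$-freeness assumption is needed --- which matters, since co-comparability graphs may well contain induced $C_4$'s.
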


\begin{proof}
Let $G$ be a co-comparability graph.
Using Theorem~\ref{thm:spinrad}, we can obtain its co-comparability ordering $v_1, \ldots, v_n$ in time $\mathcal{O}(\abs{V(G)}+\abs{E(G)})$.
From this, we take a linear branch-decomposition $(T,L)$ following the co-comparability ordering.
We claim that for each $i\in \{2, \ldots, n-1\}$,  there is no induced matching of size $2$ between $\{v_1, \ldots, v_i\}$ and $\{v_{i+1}, \ldots, v_n\}$.
Suppose there are $i_1, i_2\in \{1, \ldots, i\}$ and $j_1, j_2\in \{i+1, \ldots, n\}$ such that $\{v_{i_1}v_{j_1}, v_{i_2}v_{j_2}\}$ is an induced matching of $G$.
Without loss of generality we may assume that $i_1<i_2$.
Then we have $i_1<i_2<j_1$, and thus by the definition of the co-comparability ordering, $v_{i_2}$ should be adjacent to one of $v_{i_1}$ and $v_{j_1}$, which contradicts to our assumption.
Therefore, there is no induced matching of size $2$ between $\{v_1, \ldots, v_i\}$ and $\{v_{i+1}, \ldots, v_n\}$. 
It implies that $(T,L)$ has sim-width at most $1$.

Now, suppose that $G$ is $(K_t\mat K_t)$-free for $t\ge 2$. 
We prove that for every $i\in \{1, \ldots, n\}$, there are no induced matchings of size $t$ in $G[\{v_1, \ldots, v_i\}, \{v_{i+1}, \ldots, v_n\}]$. 
For contradiction, suppose there is an induced matching $\{v_{i_1}v_{j_1}, \ldots, v_{i_t}v_{j_t}\}$ in $G[\{v_1, \ldots, v_i\}, \{v_{i+1}, \ldots, v_n\}]$ where $i_1, \ldots, i_t\in \{1, \ldots, i\}$. 
We claim that $\{v_{i_1}, \ldots, v_{i_t}\}$ and $\{v_{j_1}, \ldots, v_{j_t}\}$ are cliques.

\begin{claim}\label{claim:twocliques}
 $\{v_{i_1}, \ldots, v_{i_t}\}$ and $\{v_{j_1}, \ldots, v_{j_t}\}$ are cliques.
\end{claim}
\begin{clproof}
Let $x,y\in \{1, \ldots, t\}$.
Assume that $i_x<i_y$. Then $i_x<i_y<j_x$, and therefore, $v_{i_y}$ is adjacent to either $v_{i_x}$ or $v_{j_x}$.
Since $v_{i_y}$ is not adjacent to $v_{j_x}$, $v_{i_y}$ is adjacent to $v_{i_x}$. In case when $i_y<i_x$, we also have that $v_{i_y}$ is adjacent to $v_{i_x}$ by the same reason.
It implies that  $\{v_{i_1}, \ldots, v_{i_t}\}$ is a clique.
By the symmetric argument, we have that $\{v_{j_1}, \ldots, v_{j_t}\}$ is a clique.
\end{clproof}

By Claim~\ref{claim:twocliques}, $\{v_{i_1}, \ldots, v_{i_t}\}$ and $\{v_{j_1}, \ldots, v_{j_t}\}$ are cliques, 
and therefore, $G$ contains $K_t\mat K_t$ as an induced subgraph, contradiction.
We conclude that $(T,L)$ is a linear branch-decomposition of linear mim-width at most $t-1$.
\end{proof}

\begin{corollary}\label{cor:maincocomparability}
Let $t\ge 2$ be an integer.
\begin{enumerate}[(1)]
\item Given an $n$-vertex $(K_t\mat K_t)$-free co-comparability graph and $\sigma, \rho \subseteq \mathbb{N}$,
one can solve \textsc{Min-(or Max-) $(\sigma, \rho)$-domination}  in time $\mathcal{O}(n^{3d(t-1)+4})$ where $d=\max( d(\sigma), d(\rho))$.
\item Given an $n$-vertex $(K_t\mat K_t)$-free co-comparability graph and a $(q\times q)$-matrix $D_q$,
one can solve \textsc{$D_q$-partitioning} in time $\mathcal{O}(qn^{3dq(t-1)+4})$ where $d=\max_{i,j} d(D_q[i,j])$.
\end{enumerate}
\end{corollary}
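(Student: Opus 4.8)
The plan is to combine Proposition~\ref{prop:cocomparability} with Theorem~\ref{thm:mimalgvs}, exactly in the way that Corollary~\ref{cor:mainchordal} was derived from Proposition~\ref{prop:chordal}; essentially no new idea is needed beyond plugging the right width bound into the running-time formulas.

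First I would take the given $n$-vertex $(K_t\mat K_t)$-free co-comparability graph $G$ and apply Proposition~\ref{prop:cocomparability} to compute, in time $\mathcal{O}(\abs{V(G)}+\abs{E(G)})$, a (linear) branch-decomposition $(T,L)$ of mim-width at most $t-1$. Since $t\ge 2$ the bound $t-1$ is a positive integer, so $(T,L)$ is a genuine branch-decomposition of bounded mim-width, and it is produced together with $G$ in polynomial time. This is the only place where the hypotheses (co-comparability, $(K_t\mat K_t)$-free) are used.

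Next I would invoke Theorem~\ref{thm:mimalgvs} with $w:=t-1$ on the decomposition $(T,L)$. For part~(1), \textsc{Min-(or Max-)$(\sigma,\rho)$-domination} runs in time $\mathcal{O}(n^{3dw+4})=\mathcal{O}(n^{3d(t-1)+4})$ with $d=\max(d(\sigma),d(\rho))$; for part~(2), \textsc{$D_q$-partitioning} runs in time $\mathcal{O}(qn^{3dwq+4})=\mathcal{O}(qn^{3dq(t-1)+4})$ with $d=\max_{i,j}d(D_q[i,j])$. Finally I would note that the $\mathcal{O}(\abs{V(G)}+\abs{E(G)})$ cost of building $(T,L)$ is dominated by these expressions, so the total running time matches the claimed bounds, which completes the proof.

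I do not expect any real obstacle here; the statement is a direct corollary, and the only point worth stating carefully is that Proposition~\ref{prop:cocomparability} hands over the decomposition explicitly (so Theorem~\ref{thm:mimalgvs}, which assumes a decomposition is given, applies verbatim), and that the linear bound $t-1$ is exactly the $w$ that appears in the exponents of Theorem~\ref{thm:mimalgvs}.
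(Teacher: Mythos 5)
Your proposal is correct and matches the paper's own (implicit) derivation exactly: the paper obtains Corollary~\ref{cor:maincocomparability} precisely by feeding the linear branch-decomposition of mim-width at most $t-1$ from Proposition~\ref{prop:cocomparability} into Theorem~\ref{thm:mimalgvs} with $w=t-1$. No further comment is needed.
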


Note that permutation graphs do not contain $K_3\mat K_3$ as induced subgraphs, because $K_3\mat K_3$ is the complement of $C_6$, which is not a permutation graph.
Therefore, Corollary~\ref{cor:maincocomparability} generalizes the algorithms for permutation graphs.

In the next, we show that co-comparability graphs have unbounded mim-width.
For positive integers $p$ and $q$, 
the \emph{$(p\times q)$ column-clique grid} is  the graph on the vertex set $\{v_{i,j}:1\le i\le p,1\le j\le q\}$
where 
\begin{itemize}
\item for every $j\in \{1, \ldots, q\}$, $\{v_{1, j}, \ldots, v_{p, j}\}$ is a clique,
\item for every $i\in \{1, \ldots, p\}$ and $j_1, j_2\in \{1, \ldots, q\}$, $v_{i, j_1}$ is adjacent to $v_{i, j_2}$ if and only if $\abs{j_2-j_1}=1$, 
\item for $i_1, i_2\in \{1, \ldots, p\}$ and $j_1, j_2\in \{1, \ldots, q\}$ with $i_1\neq i_2$ and $j_1\neq j_2$, $v_{i_1, j_1}$ is not adjacent to $v_{i_2, j_2}$.
\end{itemize}
We depict an example in Figure~\ref{fig:cliquegrid}.
For each $1\le i\le p$, we call $\{v_{i,1}, \ldots, v_{i,h}\}$ \emph{the $i$-th row} of $G$, and define its columns similarly.

\begin{figure}[t]
\centerline{\includegraphics[scale=0.4]{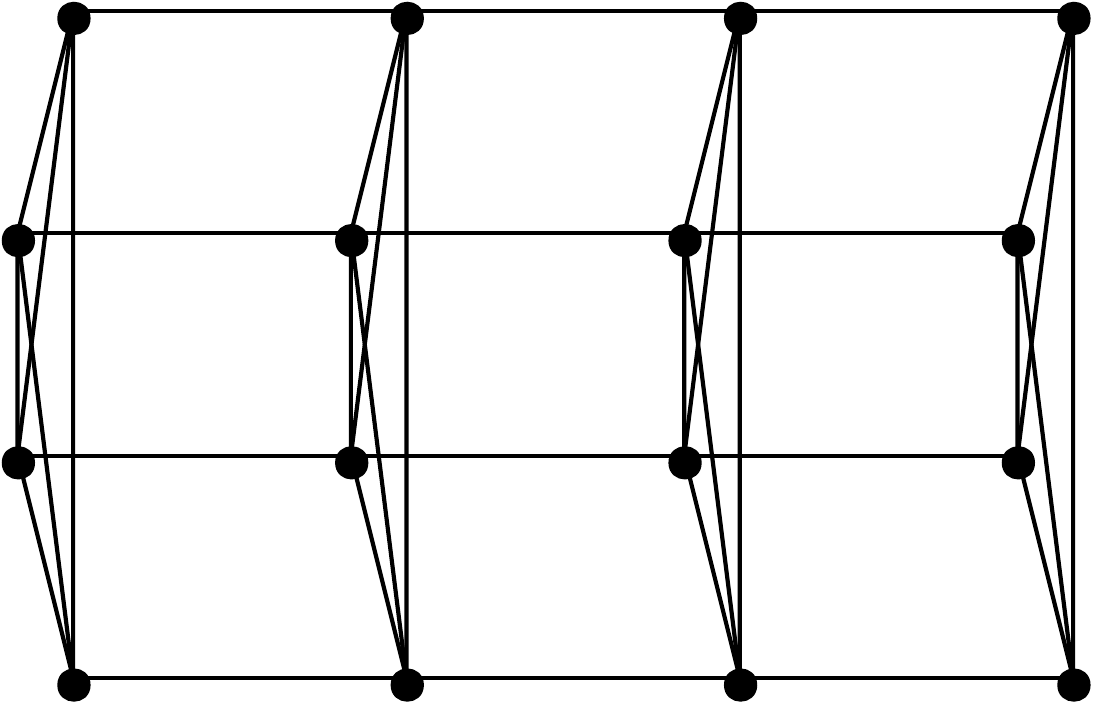}}
\caption{The $(4\times 4)$ column-clique grid.}
\label{fig:cliquegrid}
\end{figure}

\begin{lemma}\label{lem:lowerbdcocomparability}
For  integers $p, q\ge 12$, 
the $(p\times q)$ column-clique grid has mim-width at least $ \min (\frac{p}{4}, \frac{q}{3})$.
\end{lemma}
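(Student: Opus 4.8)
The plan is to show that any branch-decomposition of the $(p\times q)$ column-clique grid $G$ must contain an edge whose associated bipartition $(A, A^c)$ admits a large induced matching, by first finding a balanced cut via Lemma~\ref{lem:balancedpartition} and then extracting an induced matching from it using the grid-like structure. Concretely, suppose $\mimw(G) = w$; by Lemma~\ref{lem:balancedpartition} there is a bipartition $(A, B)$ of $V(G)$ with $\mimval_G(A) \le w$ and both sides of size strictly between $n/3$ and $2n/3$, where $n = pq$.

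\textbf{Finding many split rows or columns.} The first step is a counting argument. For each row $i$, say it is \emph{split} if the $i$-th row meets both $A$ and $B$; similarly for columns. If only few rows are split, then many rows lie entirely in $A$ or entirely in $B$; combined with the analogous statement for columns and the balance condition, I would argue that a constant fraction of rows (at least $p/4$ of them, say) or a constant fraction of columns (at least $q/3$) are split — here is where the hypotheses $p,q\ge 12$ and the exact constants $\frac{p}{4}, \frac{q}{3}$ enter. The rough idea: if at most $p/4$ rows are split and at most $q/3$ columns are split, then the non-split rows and non-split columns each get assigned a side, and one checks that the intersection pattern forces one side to be too small or too large to satisfy the $n/3$-to-$2n/3$ balance, since a row that is entirely in $A$ and a column that is entirely in $B$ must still be consistent at their common vertex. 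Working out which of the two cases ($\ge p/4$ split rows, or $\ge q/3$ split columns) we fall into gives the $\min$ in the statement.

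\textbf{Extracting the induced matching.} In the case of many split columns, I would pick one split vertex from each of $\ge q/3$ distinct columns — but more carefully, from each split column $j$ choose an $A$-vertex $v_{a_j,j}$ and a $B$-vertex $v_{b_j,j}$ that are matched to each other (they are adjacent since they share column $j$, provided $|a_j - b_j|$... no: same column means adjacent regardless). Actually one wants an \emph{induced} matching, so I would instead take, for suitably spaced columns $j_1 < j_2 < \cdots$ (spacing $\ge 2$ to kill the horizontal row-adjacencies), the edge $v_{a_{j_\ell}, j_\ell} v_{b_{j_\ell}, j_\ell}$ inside column $j_\ell$; two such edges from columns at distance $\ge 2$ have no edges between them because inter-column adjacencies only occur between consecutive columns in the same row, and we can choose the row indices to avoid even that. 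This yields an induced matching of size roughly $q/3$ between $A$ and $B$, forcing $w \ge q/3$. The row case is symmetric, using that each row is a single clique-free... no, each row induces a path, so vertices in a row two apart are non-adjacent, giving $\ge p/4$.

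\textbf{Main obstacle.} The delicate part is the balance/counting argument pinning down the precise constants $\frac{p}{4}$ and $\frac{q}{3}$: one must simultaneously juggle split rows, split columns, and the $n/3$ lower and $2n/3$ upper bounds on $|A|$, and argue that few split rows \emph{and} few split columns is impossible. A clean way is to observe that if row $i$ is entirely in $A$ and column $j$ is entirely in $B$ then the vertex $v_{i,j}$ is in both, a contradiction — so the non-split rows all lie on one side, or the non-split columns all lie on one side; combined with the count of non-split rows/columns and the balance bound, this forces either $\ge p/4$ split rows or $\ge q/3$ split columns. After that, converting "split line" into "induced matching edge" and spacing the lines to guarantee the matching is induced is routine but needs the $p,q\ge 12$ slack to absorb the spacing loss.
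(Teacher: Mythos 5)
Your overall strategy coincides with the paper's: take the balanced cut from Lemma~\ref{lem:balancedpartition}, split into a ``many split rows'' case and a ``many split columns'' case, and extract an induced matching from each. However, both of your extraction steps have genuine gaps. In the row case, the obstruction you need to defeat is not within-row adjacency at all: if you pick one horizontal edge $v_{i,a_i}v_{i,b_i}$ (with $|a_i-b_i|=1$) per split row, then for two different rows $i\neq i'$ the vertices $v_{i,a_i}\in A$ and $v_{i',b_{i'}}\in B$ are adjacent exactly when $a_i=b_{i'}$, because \emph{columns are cliques}. Your justification (``vertices in a row two apart are non-adjacent'') addresses a different, irrelevant adjacency and gives no control over this same-column conflict. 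The paper kills it by first passing to at least half the rows so that the $A$-endpoint is always immediately to the left ($b_i=a_i+1$), and then to a parity class of the $a_i$'s, so that $a_i=a_{i'}+1$ is impossible; this two-step loss is precisely where the constant $\frac{p}{4}$ comes from, and it needs \emph{all} $p$ rows to be split (the paper's Case~1 is exactly the event that every row meets both sides). Your threshold of only ``$\geq p/4$ split rows'' would, after the necessary orientation-plus-parity refinement, yield only $p/16$.

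In the column case, your spacing argument also falls short of the claimed constant: keeping only split columns pairwise at distance $\geq 2$ retains at most half of them, so $\geq q/3$ split columns yields an induced matching of size only about $q/6$. Your fallback of ``choosing the row indices to avoid'' conflicts between adjacent split columns is not always available --- e.g.\ if column $j$ has its unique $A$-vertex in row $1$ and column $j+1$ has its unique $B$-vertex in row $1$, the conflict is forced. The paper sidesteps all of this: in its second case some row $R$ lies entirely in $A$, and since $|B|>\frac{pq}{3}$ more than $\frac{q}{3}$ columns contain a $B$-vertex $v_{a_i,i}$; matching each such $v_{a_i,i}$ to the $R$-vertex $v_{r,i}$ of its own column gives an induced matching with \emph{no} loss, because a cross edge would need either a shared column or $a_i=r$, and $a_i\neq r$ since $v_{r,i}\in A$ while $v_{a_i,i}\in B$. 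So while your dichotomy (some row unsplit forces many columns to meet $B$, via the observation that a row in $A$ and a column in $B$ would collide at their common vertex) is essentially the paper's, your matching constructions as described do not deliver the stated bound $\min(\frac{p}{4},\frac{q}{3})$.
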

\begin{proof}
Let $G$ be the $(p\times q)$ column-clique grid, and let $(T,L)$ be a branch-decomposition of $G$.
It is enough to show that $(T, L)$ has mim-width at least $\min (\frac{p}{4}, \frac{q}{3})$.
Let $w$ be the mim-width of $(T,L)$.
By Lemma~\ref{lem:balancedpartition}, 
$G$ admits a vertex bipartition $(A, B)$ where $\mimval_G(A)\le w$ and for each $U\in \{A, B\}$, $\frac{\abs{V(G)}}{3}<\abs{U}\le \frac{2\abs{V(G)}}{3}$.
It is sufficient to show that $\mimval_G(A)\ge \min (\frac{p}{4}, \frac{q}{3})$.

Firstly, assume that for each row $R$ of $G$, $R\cap A\neq \emptyset$ and $R\cap B\neq \emptyset$.
Then there is an edge between $R\cap A$ and $R\cap B$, as $G[R]$ is connected.
For each $i$-th row $R_i$, we choose a pair of vertices $v_{i,a_i}\in R\cap A$ and $v_{i,b_i}\in R\cap B$ that are adjacent.
We choose a subset $X\subseteq \{1, \ldots, p\}$ such that $\abs{X}\ge \frac{p}{2}$ and
 every pair $(v_{i, a_i}, v_{i, b_i})$ in $\{(v_{i, a_i}, v_{i, b_i}):i\in X\}$ satisfies that $a_i+1=b_i$.
By taking the same parity of $a_i$'s, 
we choose a subset $Y\subseteq X$ such that $\abs{Y}\ge \frac{p}{4}$ and all integers in $\{a_i: i\in Y\}$ have the same parity.
Then we can observe that $\{v_{i, a_i}v_{i, b_i}:i\in Y\}$  is an induced matching in $G[A, B]$, as 
all integers in $\{a_i: i\in Y\}$ have the same parity.
Therefore, we have $\mimval_G(A)\ge \frac{p}{4}$.

Now, we assume that there exists a row $R$ such that $R$ is fully contained in one of $A$ and $B$.
Without loss of generality, we may assume that $R$ is contained in $A$.
Since $\abs{B}> \frac{\abs{V(G)}}{3}$, there is a subset $X\subseteq \{1, \ldots, q\}$ such that
$\abs{X}>\frac{q}{3}$ and for each $i\in X$, the $i$-th column contains a vertex of $B$.
For each $i$-th column where $i\in X$, we choose a vertex $v_{a_i, i}$ in $B$.
It is not hard to verify that the edges between $\{v_{a_i, i}:i\in X\}$ and the rows in $R$ form an induced matching of size $\frac{q}{3}$ in $G[A, B]$.

Therefore, we have $d\ge \min (\frac{p}{4}, \frac{q}{3})$.
\end{proof}

\begin{corollary}\label{cor:lowercocomp}
For every large enough $n$, there is a co-comparability graph on $n$ vertices having mim-width at least $\sqrt{\frac{n}{12}}$.
\end{corollary}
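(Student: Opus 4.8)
The plan is to combine Lemma~\ref{lem:lowerbdcocomparability} with the observation that column-clique grids are co-comparability graphs, and then pick the grid dimensions to make the number of vertices roughly $n$ while keeping both $p$ and $q$ as large as possible. First I would show that the $(p\times q)$ column-clique grid $G$ admits a co-comparability ordering. The natural candidate is to order the vertices column by column, and within each column in order of row index: $v_{1,1},\ldots,v_{p,1},v_{1,2},\ldots,v_{p,2},\ldots,v_{1,q},\ldots,v_{p,q}$. I would verify the defining condition: given $a<b<c$ in this order with the endpoints adjacent, the middle vertex must be adjacent to one of them. By the adjacency rules, the only edges are within a column (always present) or between consecutive columns in the same row. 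If the endpoints lie in the same column, every vertex strictly between them in the ordering either lies in that same column (hence adjacent to both) or lies in a column in between — but there are no columns strictly between a single column, so this case forces all intermediate vertices into the same column. If the endpoints lie in consecutive columns and the same row, then any vertex between them lies in one of those two columns (again no column in between), hence is adjacent to at least one endpoint. This handles all cases, so the ordering is a valid co-comparability ordering. (Alternatively one can observe directly that the complement is a comparability graph, but the ordering argument is cleanest.)

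Next I would do the counting. Set $p=q=\lceil\sqrt{n/12}\rceil$ or slightly adjust so that $pq\le n$; then pad with isolated vertices — but isolated vertices keep the graph co-comparability (append them anywhere in the ordering) and do not decrease mim-width, since any branch-decomposition of the padded graph restricts to one of $G$ on the same cuts up to the isolated vertices contributing nothing to any induced matching. Concretely, for $n$ large enough choose $p=q=\lfloor\sqrt{n/12}\rfloor$; then $pq\le n/12\le n$, and both $p,q\ge 12$ once $n\ge 12^3$, so Lemma~\ref{lem:lowerbdcocomparability} applies and gives mim-width at least $\min(p/4,q/3)=q/3=\lfloor\sqrt{n/12}\rfloor/3$. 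This is a bit weaker than the claimed $\sqrt{n/12}$, so I would instead balance the dimensions to favor the weaker bound $\frac{q}{3}$: take $q$ small relative to $p$ so the binding constraint becomes $\frac{q}{3}$, say $p\approx \frac{3}{4}q$ is the balance point where $\frac p4=\frac q3$. Better: simply take $q=3k$ and $p=4k$ with $k=\lfloor\sqrt{n/12}\rfloor$ — then $\min(p/4,q/3)=k$ and $pq=12k^2\le n$, so the number of real vertices is at most $n$, pad up to exactly $n$ with isolated vertices, and the mim-width is at least $k=\lfloor\sqrt{n/12}\rfloor\ge \sqrt{n/12}-1$. For the statement as written one absorbs the additive slack into "large enough $n$", or states the bound as $\lfloor\sqrt{n/12}\rfloor$; I would phrase the corollary's proof to match exactly whichever form is cleanest.

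The only genuinely substantive step is verifying that the column-by-column ordering is a co-comparability ordering, and that verification is short because the column-clique grid has so few edges: the absence of any column "strictly between" adjacent columns makes the three-vertex condition almost automatic. The padding-with-isolated-vertices bookkeeping and the arithmetic of choosing $p,q$ in terms of $n$ are entirely routine. So I do not expect a real obstacle here; the corollary is essentially an application of Lemma~\ref{lem:lowerbdcocomparability} once the class membership is established.
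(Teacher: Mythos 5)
Your proposal is correct and follows essentially the same route as the paper: exhibit the $(4p\times 3p)$ column-clique grid as a co-comparability graph via the column-by-column ordering and invoke Lemma~\ref{lem:lowerbdcocomparability}. The only difference is cosmetic: the paper simply sets $n=12p^2$ and states the bound for those $n$, whereas you pad with isolated vertices (which must be appended at the end of the ordering, not inserted arbitrarily) to cover all $n$ at the cost of a floor in the bound --- a trade-off no worse than the paper's own level of precision.
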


\begin{proof}
Let $p\ge 4$ be an integer, and let $n:=12p^2$.
Let $G$ be the $(4p\times 3p)$ column-clique grid.
It is not hard to see that
\[v_{1,1}, v_{2,1}, \ldots, v_{4p,1}, v_{1, 2}, v_{2, 2}, \ldots, v_{4p-1, 3p}, v_{4p, 3p}  \]
is a co-comparability ordering, as each column is a clique. Thus, $G$ is a co-comparability graph.
By Lemma~\ref{lem:lowerbdcocomparability}, 
$\mimw(G)\ge p=\sqrt{\frac{n}{12}}$.
\end{proof}

We remark that the two classes, $(K_t\mat S_t)$-free chordal graphs and $(K_t\mat K_t)$-free co-comparability graphs, are 
subclasses of the class of graphs of sim-width at most $1$ and having no induced subgraph isomorphic to $K_t\mat K_t$ and $K_t\mat S_t$ for $t\ge 3$.
This is because chordal graphs have no $K_2\mat K_2$ induced subgraph, and co-comparability graphs have no $K_3\mat S_3$ induced subgraphs by Lemma~\ref{lem:obscocom}.
Motivated from it, we will extend these classes to classes graphs of bounded sim-width and having no $K_t\mat K_t$ and $K_t\mat S_t$ as induced subgraphs or induced minors, in Section~\ref{sec:excludingtmatching}.

\section{Graphs of bounded sim-width}\label{sec:excludingtmatching}

In Section~\ref{sec:specialclass}, we proved that graphs of sim-width at most $1$ contain all chordal graphs and all co-comparability graphs.
A classical result on chordal graphs is that \textsc{Minimum Dominating Set} is NP-complete on chordal graphs~\cite{BoothJ1982}.
So, even for this kind of locally-checkable problems, we cannot expect efficient algorithms on graphs of sim-width at most $w$.
Therefore, to obtain a meta-algorithm for graphs of bounded sim-width encompassing many locally-checkable problems, we must impose some restrictions. We approach this problem in a way analogous to 
what has previously been done in the realm of rank-width \cite{FominOT2010}.

It is well known that complete graphs have rank-width at most $1$, but they have unbounded tree-width.
Fomin, Oum, and Thilikos~\cite{FominOT2010} showed that 
if a graph $G$ is $K_r$-minor free, then its tree-width is bounded by $c\cdot \rw(G)$ where $c$ is a constant depending on $r$.
This can be utilized algorithmically, to get a result for graphs of bounded rank-width when excluding a fixed minor, as the class of problems solvable in FPT time is strictly larger when parameterized by tree-width than rank-width \cite{HOSG2006}.

We will do something similar by focusing on the distinction between mim-width and sim-width.
However, $K_r$-minor free graphs are too strong, as one can show that on $K_r$-minor free graphs, 
the tree-width of a graph is also bounded by some constant factor of its sim-width. 
To see this, one can use Lemma~\ref{lem:contraction} and the result on contraction obstructions for graphs of bounded tree-width~\cite{FominGT2011}.

Instead of using minors, we exclude $K_t\mat K_t$ and $K_t\mat S_t$ as induced subgraphs or induced minors. The induced minor operation is rather natural because sim-width does not increase when taking induced minors; we prove this property in Subsection~\ref{subsec:inducedminor}. In Subsection~\ref{subsec:unboundedrank}, 
we prove that chordal graphs having no induced minor isomorphic to $K_3\mat S_3$ have unbounded rank-width.
As chordal graphs have no induced minor isomorphic to $K_3\mat K_3$ and they have sim-width $1$, 
this implies that graphs that have bounded sim-width and have no induced minor isomorphic to $K_t\mat K_t$ or $K_t\mat S_t$ for fixed $t$ have unbounded rank-width. Therefore, such classes still extend classes of bounded rank-width.

 We denote by $R(k,\ell)$ the \emph{Ramsey number}, that is the minimum integer satisfying that every graph with at least $R(k,\ell)$ vertices contains either a clique of size $k$ or an independent set of size $\ell$. By Ramsey's Theorem~\cite{Ramsey1930}, $R(k, \ell)$ exists for every pair of positive integers $k$ and $\ell$.

\subsection{Bounding mim-width}

We show the following.

\begin{proposition}\label{prop:boundingmimwidth}
Every graph with sim-width $w$ and no induced minor isomorphic to $K_t\mat K_t$ and $K_t\mat S_t$ has mim-width at most $8(w+1)t^3-1$.
\end{proposition}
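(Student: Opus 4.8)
The plan is to take a branch-decomposition $(T,L)$ of $G$ witnessing $\simw(G)\le w$, and argue that the same decomposition already has mim-width at most $8(w+1)t^3-1$. So fix an edge $e$ of $T$ with associated bipartition $(A,B)$, and suppose for contradiction that $\mimval_G(A)\ge 8(w+1)t^3$; I must produce an induced minor of $G$ isomorphic to $K_t\mat K_t$ or to $K_t\mat S_t$. Let $\{a_1b_1,\dots,a_mb_m\}$ be an induced matching across $(A,B)$ with $m=8(w+1)t^3$, with $a_i\in A$ and $b_i\in B$.

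The key observation driving the argument is that although $\{a_i\}_{i}$ may induce an arbitrary graph inside $A$ and likewise for $\{b_i\}_i$ inside $B$, the \emph{cross} edges between $\{a_i\}$ and $\{b_j\}$ with $i\ne j$ must be controlled, because $\simval_G(A)\le w$ forbids a large induced matching respecting the cut. Concretely, first I would apply Ramsey's theorem (using $R(\cdot,\cdot)$ as defined just before the proposition) twice: from a matching of size $m$ we can pass to a sub-collection of size roughly $m^{1/4}$ of the index set on which $\{a_i\}$ is either a clique or an independent set in $G[A]$, and simultaneously $\{b_i\}$ is either a clique or an independent set in $G[B]$. The constant $8(w+1)t^3$ is chosen so that after these Ramsey reductions one still retains on the order of $(w+1)t$ or so indices; the precise bookkeeping is routine but this is where the exponent on $t$ and the factor $8(w+1)$ get consumed. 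This splits the analysis into four cases according to whether each side is a clique or an independent set.

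Next, within each case I would handle the cross edges. Consider the auxiliary bipartite "cross-adjacency" pattern on the retained indices, recording for each pair $(i,j)$ with $i\ne j$ whether $a_ib_j\in E(G)$. A second Ramsey/pigeonhole step on this pattern — or rather, a direct argument — lets me find a large sub-collection on which the cross pattern is uniform enough to either (a) extract a genuine induced matching between $A$ and $B$ of size $>w$, contradicting $\simval_G(A)\le w$, or (b) contract each $\eta(i):=\{a_i\}\cup\{b_i\}$ (which is connected via the matching edge $a_ib_i$) to a single vertex. In case (b), after contracting $t$ of these pairs, the resulting induced minor has the $t$ contracted vertices mutually adjacent exactly according to the union of the within-side adjacencies; since within each side we have a clique or an independent set, the contracted graph on those $t$ vertices plus whatever survives is forced to contain $K_t\mat K_t$ or $K_t\mat S_t$ as an induced minor (delete everything else). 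I would also need the elementary fact that the matching edges $a_ib_i$ are preserved as the "perfect matching" of the $\mat$-gadget because the original matching is induced.

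The main obstacle I anticipate is case (b)-versus-(a): ruling out that the cross edges between distinct pairs can conspire to both destroy every large induced matching across the cut \emph{and} prevent the contracted picture from being one of the two forbidden graphs. The clean way to see this is that if no large induced matching survives, then for any large subfamily some index $j$ is cross-adjacent to many $a_i$'s; this "domination" structure, combined with a clique/independent-set side, is precisely what lets the contraction produce the matching-plus-clique or matching-plus-independent-set that we excluded. Getting the Ramsey constants to line up so that $8(w+1)t^3-1$ suffices — three nested Ramsey-type selections (two to fix the within-side structure, one for the cross pattern), each roughly a fourth root or a logarithmic loss absorbed into the cubic-in-$t$ and linear-in-$(w+1)$ slack — is the part requiring care, but it is bookkeeping rather than a conceptual difficulty. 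I would close by noting that since $e$ was arbitrary, $(T,L)$ has mim-width at most $8(w+1)t^3-1$, hence so does $G$.
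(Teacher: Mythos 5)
Your overall architecture (reuse the sim-width-$w$ decomposition, take a maximum induced matching across a cut, and extract a forbidden structure) matches the paper, but two things go wrong. The first is a conceptual confusion about what needs controlling: the ``cross-adjacency pattern'' you propose to analyse is empty by definition. Since $\mimval_G(A)$ counts induced matchings of the \emph{bipartite} graph $G[A,B]$, we already have $a_ib_j\notin E(G)$ for all $i\neq j$; there is no case (a)/(b) dichotomy to run on cross edges. The only uncontrolled structure is the set of edges \emph{within} $\{a_1,\dots,a_m\}$ and within $\{b_1,\dots,b_m\}$, and the only role of $\simval_G(A)\le w$ is to forbid $w+1$ matched pairs that are simultaneously independent on both sides.

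The second problem is fatal for the stated bound: Ramsey's theorem does not give ``roughly $m^{1/4}$'' homogeneous indices. Since $R(k,k)$ grows exponentially in $k$, a set of $m=8(w+1)t^3$ vertices only guarantees a clique or independent set of size $O(\log(wt))$, nowhere near the $t$ or $2(w+1)t$ you need; a Ramsey-based selection is exactly what yields the companion Proposition~\ref{prop:boundingmimwidth2} with its tower bound $R(R(w+1,t),R(t,t))$ (and that version only needs to exclude induced subgraphs). To get the polynomial bound, the paper instead uses the Duchet--Meyniel theorem (Theorem~\ref{thm:largeminor}): every $n$-vertex graph has an independent set of size $k$ or a $K_s$-\emph{minor} with $s\ge n/(2k-1)$ --- a linear trade-off. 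Applied to $G[\{v_1,\dots,v_m\}]$ this gives either an independent set of size $2(w+1)t$ or a $K_{2t^2}$-minor; one contracts the branch sets of that clique minor (each branch set lies entirely on one side of the matching, which is precisely why the hypothesis must be about induced minors) and applies Duchet--Meyniel again to the matched partners on the other side, landing in $K_t\mat K_t$, $K_t\mat S_t$, or a size-$(w+1)$ induced matching contradicting $\simval_G(A)\le w$. Your proposal to contract the matched pairs $\{a_i,b_i\}$ across the cut collapses the two-sided structure instead of exposing it, so even the induced-minor step is aimed at the wrong sets.
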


\begin{proposition}\label{prop:boundingmimwidth2}
Every graph with sim-width $w$ and no induced subgraph isomorphic to $K_t\mat K_t$ and $K_t\mat S_t$ has mim-width at most $R( R(w+1,t), R(t,t))$.
\end{proposition}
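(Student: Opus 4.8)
The plan is to take an optimal branch-decomposition $(T,L)$ witnessing $\simw(G)=w$ and show it already witnesses $\mimw(G)\le R(R(w+1,t),R(t,t))$; that is, for every edge $e$ of $T$ with associated bipartition $(A,B)$ we must bound $\mimval_G(A)$. Suppose for contradiction that $\mimval_G(A)=m$ with $m\ge R(R(w+1,t),R(t,t))$, and fix an induced matching $\{a_1b_1,\dots,a_mb_m\}$ of $G[A,B]$ with $a_i\in A$, $b_i\in B$. The strategy is to find a sub-index-set of size $t$ on which the two sides $\{a_i\}$ and $\{b_i\}$ each induce a clique or an independent set, which will produce one of $K_t\mat K_t$, $K_t\mat S_t$ (or $S_t\mat S_t$, which contains $K_t\mat S_t$ by symmetry --- here I'd want to double-check that $S_t\mat S_t\supseteq K_t\mat S_t$ as an induced subgraph, or rather note $S_t\mat S_t$ is itself excluded or handled, since excluding $K_t\mat S_t$ as an induced subgraph does \emph{not} exclude $S_t\mat S_t$; see the obstacle paragraph).

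The Ramsey bookkeeping goes in two rounds. First I would apply Ramsey's theorem to the graph $G[\{a_1,\dots,a_m\}]$ on the $A$-side: since $m\ge R(R(w+1,t),R(t,t))$, there is an index set $I_1$ with $|I_1|=R(w+1,t)$ such that $\{a_i:i\in I_1\}$ is either a clique or an independent set. Next, apply Ramsey again to $G[\{b_i:i\in I_1\}]$ on the $B$-side: since $|I_1|=R(w+1,t)$, there is $I_2\subseteq I_1$ with $|I_2|=w+1$ such that $\{b_i:i\in I_2\}$ is a clique or an independent set. Wait --- that only gives size $w+1$, not $t$, on the second side. So the correct split is to use the sim-width bound to kill one of the two cases: if $\{a_i:i\in I_2\}$ is a clique \emph{and} $\{b_i:i\in I_2\}$ is a clique (with $|I_2|=w+1$), then $\{a_ib_i:i\in I_2\}$ is an induced matching between $A$ and $B$ in $G$ of size $w+1>w$ --- contradicting $\simval_G(A)\le w$. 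Hence at least one of the two sides, restricted to $I_2$, is an independent set; reorganizing the Ramsey rounds so the \emph{larger} target is whichever side we cannot rule out, we get a set of size $t$ on which one side is a clique-or-independent-set and the other side is an independent set, yielding $K_t\mat S_t$ or $S_t\mat S_t$ as an induced subgraph of $G$. This is the contradiction.

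Concretely the clean version: pick $|I_1|=R(R(w+1,t),R(t,t))$, Ramsey on the $A$-side to get $|I_2|$ of size $R(w+1,t)$ with $\{a_i\}_{I_2}$ homogeneous; Ramsey on the $B$-side of $I_2$ --- but with the right target depending on whether $\{a_i\}_{I_2}$ is a clique or independent set. If $\{a_i\}_{I_2}$ is a clique, target $R(w+1,t)$ again is too big; instead I want: from $|I_2|=R(w+1,t)$ vertices on the $B$-side, Ramsey gives $I_3$ of size... this is where the exact iteration of $R(\cdot,\cdot)$ in the statement must be matched, and I'd set $|I_1|$ large enough that after the $A$-side Ramsey we have $R(w+1,t)$ indices, and after the $B$-side Ramsey we have $\max(w+1,t)$ indices in the worst case --- the stated bound $R(R(w+1,t),R(t,t))$ is exactly what makes this work if the $A$-side Ramsey outputs an independent set of size $R(t,t)$ or a clique of size $R(w+1,t)$, and symmetrically. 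I would write it as: the outer Ramsey splits on clique-of-size-$R(w+1,t)$ vs.\ independent-set-of-size-$R(t,t)$; in the independent-set branch, a further Ramsey on the $B$-side of those $R(t,t)$ indices gives a homogeneous $B$-side of size $t$, and an independent $A$-side of size $t$, done. In the clique branch, a further Ramsey on the $B$-side of those $R(w+1,t)$ indices gives either a clique of size $w+1$ (contradicting sim-width as above) or an independent set of size $t$ (giving $K_t\mat S_t$), done.

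\textbf{Main obstacle.} The delicate point is the asymmetry of the excluded subgraphs: excluding $K_t\mat S_t$ and $K_t\mat K_t$ as induced subgraphs gives us $K_t\mat K_t$ (two cliques), $K_t\mat S_t$ (clique/independent), but the combination ``independent set / independent set'' $=S_t\mat S_t$ is \emph{not} directly excluded. I must check that $S_t\mat S_t$ contains $K_r\mat S_r$ or contains... actually $S_t\mat S_t$ with its perfect matching is just $t$ disjoint edges, which certainly does not contain $K_3\mat S_3$. So the proof \emph{cannot} allow the final homogeneous set to be independent on both sides; the sim-width bound must be what prevents the all-clique case, and one of the two Ramsey rounds must be forced to land on a clique. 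Re-examining: if both sides end up independent, $\{a_ib_i\}$ is an induced matching \emph{between $A$ and $B$} of size $t$, which contradicts $\simval_G(A)\le w$ as soon as $t>w$ --- but we are not told $t>w$. Hmm. So the genuinely careful resolution is: the induced matching $\{a_ib_i:i\in I\}$ across the cut, for \emph{any} $I$, is already an induced matching between $A$ and $B$ of size $|I|$; so in fact $\simval_G(A)\le w$ forces $m\le w$ unless... no. The point is $\mimval$ allows edges inside $A$ or inside $B$ while $\simval$ does not. So $\{a_ib_i\}$ being a $\mimval$-matching means no edges $a_ib_j$ for $i\ne j$, but there may be edges $a_ia_j$ and $b_ib_j$. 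Thus $\{a_ib_i:i\in I\}$ is a $\simval$-matching iff $\{a_i\}_I$ and $\{b_i\}_I$ are \emph{both} independent sets. So the all-independent case is exactly the $\simval$-violation, forcing $|I|\le w$ there, and the proof structure is: outer Ramsey on $A$-side into clique-of-$R(w+1,t)$ or ind-of-$R(t,t)$; inner Ramsey on $B$-side accordingly; the only surviving non-contradictory outcomes are (clique,ind), (ind,clique), (clique,clique) of size $\ge t$, all giving $K_t\mat S_t$ or $K_t\mat K_t$; the (ind,ind) outcome of size $w+1$ contradicts sim-width. Matching this to the exact iterated-Ramsey constant $R(R(w+1,t),R(t,t))$ and getting every branch to the right size is the one place requiring care, but it is pure bookkeeping.
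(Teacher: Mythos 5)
Your overall strategy is exactly the paper's: the paper proves this proposition by rerunning the proof of Proposition~\ref{prop:boundingmimwidth} with Ramsey's theorem in place of the Duchet--Meyniel theorem, i.e., two rounds of Ramsey (first on the $A$-endpoints of a large induced matching of $G[A,B]$, then on the matched $B$-endpoints), where the doubly-independent outcome is eliminated by the sim-width bound and the other three outcomes produce $K_t\mat K_t$ or $K_t\mat S_t$. Your ``Main obstacle'' paragraph identifies precisely the right resolution: an induced matching between $A$ and $B$ \emph{in $G$} requires both endpoint sets to be independent, so only the (independent, independent) outcome is a $\simval$-violation, and $S_t\mat S_t$ never needs to be excluded.

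However, your ``clean version'' assigns the Ramsey targets to the wrong branches, and in doing so relies twice on a false claim: that an $A$-side clique matched to a $B$-side clique of size $w+1$ contradicts $\simval_G(A)\le w$. It does not --- that configuration is $K_{w+1}\mat K_{w+1}$, which is not an induced matching in $G$, exactly by the observation you yourself make afterwards. The correct assignment is the reverse of what you wrote: the outer Ramsey on the $A$-side must yield either a \emph{clique} of size $R(t,t)$ or an \emph{independent set} of size $R(t,w+1)=R(w+1,t)$; these targets are accommodated by $R(R(w+1,t),R(t,t))$ since $R$ is symmetric in its arguments. In the clique branch, the inner Ramsey on the $B$-side gives a clique or an independent set of size $t$, yielding $K_t\mat K_t$ or $K_t\mat S_t$ respectively. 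In the independent branch, the inner Ramsey gives either a clique of size $t$ (yielding $K_t\mat S_t$) or an independent set of size $w+1$, which is the genuine sim-width contradiction. With your assignment, the (clique, clique) outcome has size only $w+1$ and the (independent, independent) outcome has size only $t$, and neither is a contradiction without assuming a relation between $t$ and $w$. Once the branches are swapped, your argument is complete and coincides with the paper's.
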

We remark that sometimes this Ramsey number can go down to a polynomial function depending on the underlying graphs. 
See discussions in Belmonte et al~\cite{BelmonteHHRS2014}.

We first prove Proposition~\ref{prop:boundingmimwidth}.
We use the following result. Notice that the optimal bound of Theorem~\ref{thm:largeminor} has been slightly improved by Fox~\cite{Fox2010}, and then by Balogh and Kostochka~\cite{BaloghK2011}.

\begin{theorem}[Duchet and Meyniel~\cite{DuchetM1982}]\label{thm:largeminor}
For positive integers $k$ and $n$, every $n$-vertex graph contains either an independent set of size $k$ or 
a $K_t$-minor where $t\ge \frac{n}{2k-1}$. 
\end{theorem}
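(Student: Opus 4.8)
The plan is to prove the stronger, now-classical bound that every graph $G$ on $n$ vertices satisfies $h(G)\ge \frac{n}{2\alpha(G)-1}$, where $h(G)$ denotes the largest $t$ admitting a $K_t$-minor (the Hadwiger number) and $\alpha(G)$ is the independence number. The stated theorem then follows immediately: if $G$ has no independent set of size $k$ then $\alpha(G)\le k-1$, so $2\alpha(G)-1\le 2k-1$ and hence $t:=h(G)\ge \frac{n}{2\alpha(G)-1}\ge \frac{n}{2k-1}$. Note this produces an ordinary (not induced) clique minor, matching the statement.

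I would rest the whole argument on one lemma: \emph{every connected graph $H$ admits, through any prescribed vertex $y$, a connected dominating set $W\ni y$ with $\abs{W}\le 2\alpha(H)-1$}. I would prove this by induction on $\alpha(H)$. The base case $\alpha(H)=1$ is a complete graph, where $W=\{y\}$ works. For the step, delete the closed neighbourhood $N[y]$ and let $D_1,\dots,D_k$ be the components of $H-N[y]$. Each $D_j$ is connected with $\alpha(D_j)<\alpha(H)$, and since $H$ is connected and $D_j$ avoids $N[y]$, it is joined to $N(y)$ by some edge $x_j'y_j'$ with $x_j'\in N(y)$ and $y_j'\in D_j$. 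Applying the induction hypothesis to $D_j$ with prescribed vertex $y_j'$ gives a connected dominating set $W_j\ni y_j'$ of $D_j$ with $\abs{W_j}\le 2\alpha(D_j)-1$, and I would set $W=\{y\}\cup\{x_1',\dots,x_k'\}\cup\bigcup_j W_j$. This set is connected using only the single connector $x_j'$ per component, precisely because $y_j'\in W_j$ lets $x_j'$ attach directly to $W_j$; it dominates $H$ because $y$ dominates $N[y]$ and each $W_j$ dominates $D_j$. The counting then collapses: $\abs{W}\le 1+k+\sum_j(2\alpha(D_j)-1)=1+2\sum_j\alpha(D_j)\le 2\alpha(H)-1$, where $\sum_j\alpha(D_j)\le\alpha(H)-1$ because an independent set chosen inside $\bigcup_j D_j$ together with $y$ is independent in $H$.

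Given the lemma, I would prove $h(G)\ge \frac{n}{2\alpha(G)-1}$ by induction on $n$. The base case $\alpha(G)=1$ is a complete graph, where $h(G)=n=\frac{n}{2\cdot 1-1}$. Next I reduce to connected $G$: over the components $G_1,\dots,G_c$, additivity of $\alpha$ and the mediant inequality give $h(G)=\max_i h(G_i)\ge \frac{\sum_i n_i}{\sum_i(2\alpha(G_i)-1)}=\frac{n}{2\alpha(G)-c}\ge \frac{n}{2\alpha(G)-1}$. For connected $G$ with $\alpha(G)\ge 2$, take a connected dominating set $W$ with $\abs{W}\le 2\alpha(G)-1$ from the lemma. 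Since $W$ dominates $V(G)\setminus W$, it is adjacent to every branch set of any clique minor of $G-W$, so adjoining $W$ as one further branch set shows $h(G)\ge h(G-W)+1$. As $\alpha(G-W)\le\alpha(G)$ and $\abs{W}\le 2\alpha(G)-1$, the induction hypothesis yields $h(G)\ge \frac{n-\abs{W}}{2\alpha(G)-1}+1\ge \frac{n-(2\alpha(G)-1)}{2\alpha(G)-1}+1=\frac{n}{2\alpha(G)-1}$, closing the induction.

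The main obstacle is the lemma, and specifically pushing the connected dominating set down to $2\alpha-1$ rather than the $3\alpha-2$ that a naive recursion yields. The temptation is to spend two vertices per component of $H-N[y]$ — a connector in $N(y)$ together with a separate attachment point inside $D_j$ — which destroys the bound. The remedy is to strengthen the induction hypothesis so that the recursively built $W_j$ is forced to contain the prescribed attachment vertex $y_j'$; this makes the single per-component connector $x_j'$ exactly cancel the $-1$ in each term $2\alpha(D_j)-1$. Everything downstream of the lemma is routine bookkeeping.
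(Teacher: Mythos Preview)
The paper does not prove this theorem; it merely cites it as a known result of Duchet and Meyniel and then applies it as a black box in the proof of Proposition~\ref{prop:boundingmimwidth}. Your argument is correct and is in fact the classical Duchet--Meyniel proof: build a connected dominating set of size at most $2\alpha-1$ by recursing into the components of $H-N[y]$, then peel off such sets repeatedly to assemble the clique minor. The one point worth tightening is the phrasing of the outer induction: you say ``by induction on $n$'' but then give $\alpha(G)=1$ as the base case. What you really mean is induction on $n$ with the trivial base $n=1$ (or $n\le 2\alpha(G)-1$), and the case $\alpha(G)=1$ handled separately so that the lemma applies with $\alpha(G)\ge 2$ in the inductive step; the content is fine, only the presentation blurs the two. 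You should also note explicitly that if $G-W$ is empty then $n\le\abs{W}\le 2\alpha(G)-1$ and the bound holds trivially, so the induction hypothesis is only invoked when $G-W$ is nonempty.
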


\begin{proof}[Proof of Proposition~\ref{prop:boundingmimwidth}]
Let $G$ be a graph with sim-width $w$ and no induced minor isomorphic to $K_t\mat K_t$ and $K_t\mat S_t$.
Let $(T,L)$ be a branch-decomposition of $G$ of sim-width $w$.
Let $e$ be an edge of $T$ and $(A,B)$ be the vertex bipartition of $G$ associated with $e$.
We claim that $\mimval_G(A)\le 8(w+1)t^3-1$.

Suppose for contradiction that there is an induced matching $\{v_1w_1, \ldots, v_mw_m\}$ in $G[A,B]$ where $v_1, \ldots, v_m\in A$, $w_1, \ldots, w_m\in B$, and $m\ge 8(w+1)t^3$.
Let $f$ be the bijection from $\{v_1, \ldots, v_m\}$ to $\{w_1, \ldots, w_m\}$ such that $f(v_i)=w_i$ for each $i\in \{1, \ldots, m\}$.
As $m\ge 8(w+1)t^3$, by Theorem~\ref{thm:largeminor},
the subgraph $G[\{v_1, \ldots, v_m\}]$ contains either an independent set of size $2(w+1)t$, or a $K_{2t^2}$-minor.

First assume that $G[\{v_1, \ldots, v_m\}]$ contains a $K_{2t^2}$-minor. 
Then there is a minor model of $K_{2t^2}$ in $G[\{v_1, \ldots, v_m\}]$, that is, 
there exist pairwise disjoint subsets $S_1, \ldots, S_{2t^2}$ of $\{v_1, \ldots, v_m\}$ such that
\begin{itemize}
\item for each $i\in \{1, \ldots, 2t^2\}$, $G[S_i]$ is connected, and
\item for two distinct integers $i,j\in \{1, \ldots, 2t^2\}$, there is an edge between $S_i$ and $S_j$.
\end{itemize}
For each $i\in \{1, \ldots, 2t^2\}$, we choose a representative $d_i$ in each $f(S_i)$ and contract each $S_i$ to a vertex $c_i$. Let $G_1$ be the resulting graph.
Then $G_1[\{c_1, \ldots, c_{2t^2}\}, \{d_1, \ldots, d_{2t^2}\}]$ is an induced matching of size $2t^2$, and $\{c_1, \ldots, c_{2t^2}\}$ is a clique in $G_1$.
By the same procedure on $\{d_1, \ldots, d_{2t^2}\}$, 
the subgraph $G_1[\{d_1, \ldots, d_{2t^2}\}]$ contains either an independent set of size $t$, or a $K_{t}$-minor. In both cases, 
one can observe that $G_1$ contains an induced minor isomorphic to $K_t\mat K_t$ or $K_t\mat S_t$, hence we obtain a contradiction.

Now assume that $G[\{v_1, \ldots, v_m\}]$ contains an independent set $\{c_1, \ldots, c_{2(w+1)t}\}$, and for each $i\in \{1, \ldots, 2(w+1)t\}$, let $d_i:=f(c_i)$.
By Theorem~\ref{thm:largeminor}, $G[\{d_1, \ldots, d_{2(w+1)t}\}]$ contains either an independent set of size $w+1$ or a $K_t$-minor.
In the former case, we obtain an induced matching of size $w+1$ between $A$ and $B$ in $G$, contradicting to the assumption that $\simval_{G}(A)\le w$.
In the latter case, we obtain an induced minor isomorphic to $K_t\mat S_t$, contradiction.

We conclude that $\mimval_{G}(A)\le 8(w+1)t^3-1$. Since $e$ is arbitrary, $G$ has mim-width at most $8(w+1)t^3-1$.
\end{proof}

In a similar manner we can prove Proposition~\ref{prop:boundingmimwidth2}.

\begin{proof}[Proof of Proposition~\ref{prop:boundingmimwidth2}]
One can easily modify from the proof of Proposition~\ref{prop:boundingmimwidth} by replacing the application of Theorem~\ref{thm:largeminor} with the Ramsey's Theorem to find a clique or an independent set.
\end{proof}

As a corollary, we obtain the following.
In general, we do not have a generic algorithm to find a decomposition, and thus we assume that the decomposition is given as an input.

\begin{corollary}\label{cor:main3}
Let $w\ge 1$ and $t\ge 2$ be integers.

\begin{enumerate}[(1)]
\item Given an $n$-vertex graph of sim-width at most $w$ and having induced minor isomorphic to neither $K_t\mat K_t$ nor $K_t\mat S_t$, 
 and $\sigma, \rho \subseteq \mathbb{N}$, one can solve \textsc{Min-(or Max-) $(\sigma, \rho)$-domination}  in time $\mathcal{O}(n^{3dt'+4})$ where $d=\max( d(\sigma), d(\rho))$ and $t'=8(w+1)t^3-1$.
\item Given an $n$-vertex graph of sim-width at most $w$ and having induced minor isomorphic to neither $K_t\mat K_t$ nor $K_t\mat S_t$, 
and a $(q\times q)$-matrix $D_q$,
one can solve \textsc{$D_q$-partitioning} in time $\mathcal{O}(qn^{3dqt'+4})$ where $d=\max_{i,j} d(D_q[i,j])$ and $t'=8(w+1)t^3-1$.
\item Given an $n$-vertex graph of sim-width at most $w$ and having induced subgraph isomorphic to neither $K_t\mat K_t$ nor $K_t\mat S_t$, 
 and $\sigma, \rho \subseteq \mathbb{N}$, one can solve \textsc{Min-(or Max-) $(\sigma, \rho)$-domination}  in time $\mathcal{O}(n^{3dt'+4})$ where $d=\max( d(\sigma), d(\rho))$ and $t'=R( R(w+1,t), R(t,t))$.
\item Given an $n$-vertex graph of sim-width at most $w$ and having induced subgraph isomorphic to neither $K_t\mat K_t$ nor $K_t\mat S_t$, 
and a $(q\times q)$-matrix $D_q$,
one can solve \textsc{$D_q$-partitioning} in time $\mathcal{O}(qn^{3dqt'+4})$ where $d=\max_{i,j} d(D_q[i,j])$ and $t'=R( R(w+1,t), R(t,t))$.
\end{enumerate}
\end{corollary}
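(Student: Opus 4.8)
The plan is to derive Corollary~\ref{cor:main3} directly by combining the mim-width bounds from Propositions~\ref{prop:boundingmimwidth} and~\ref{prop:boundingmimwidth2} with the algorithmic result of Theorem~\ref{thm:mimalgvs}. The only subtlety is that Theorem~\ref{thm:mimalgvs} takes as input a branch-decomposition of bounded mim-width, whereas here we are only handed a branch-decomposition of sim-width at most $w$; so the first thing to observe is that the \emph{same} branch-decomposition $(T,L)$ that witnesses $\simw(G)\le w$ is, by the proofs of Propositions~\ref{prop:boundingmimwidth} and~\ref{prop:boundingmimwidth2}, already a branch-decomposition of mim-width at most $t'$, with $t'=8(w+1)t^3-1$ in the induced-minor case and $t'=R(R(w+1,t),R(t,t))$ in the induced-subgraph case. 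Indeed, those two propositions are proved by fixing an arbitrary edge $e$ of $(T,L)$ with associated bipartition $(A,B)$ and bounding $\mimval_G(A)$ in terms of $\simval_G(A)\le w$ and the excluded structures; no modification of $(T,L)$ is needed. Hence no extra decomposition-finding step is required, which is exactly why the corollary is stated with the decomposition given as part of the input.

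First I would state that, given $G$ and the branch-decomposition $(T,L)$ of sim-width at most $w$, we invoke the relevant proposition to conclude that $(T,L)$ has mim-width at most $t'$. Then I would plug $(T,L)$ into Theorem~\ref{thm:mimalgvs}: part~(1) of that theorem gives, for the \textsc{Min-(or Max-)$(\sigma,\rho)$-domination} problem, a running time of $\mathcal{O}(n^{3dw'+4})$ where $w'$ is the mim-width of the decomposition and $d=\max(d(\sigma),d(\rho))$; substituting $w'=t'$ yields $\mathcal{O}(n^{3dt'+4})$, which is items (1) and (3) of the corollary (with the two values of $t'$). Likewise, part~(2) of Theorem~\ref{thm:mimalgvs} gives $\mathcal{O}(qn^{3dw'q+4})$ for \textsc{$D_q$-partitioning} with $d=\max_{i,j}d(D_q[i,j])$; substituting $w'=t'$ gives $\mathcal{O}(qn^{3dqt'+4})$, which is items (2) and (4). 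This is essentially a one-line reduction once the observation about $(T,L)$ is made explicit.

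There is no real obstacle here, since the heavy lifting — both the structural bound on mim-width and the dynamic-programming algorithm — has already been done; the only things to be careful about are bookkeeping details: that $w\ge 1$ and $t\ge 2$ are in the regimes where the propositions apply (the proofs of Propositions~\ref{prop:boundingmimwidth} and~\ref{prop:boundingmimwidth2} work for all such $w,t$, since Theorem~\ref{thm:largeminor} and Ramsey's theorem hold for all positive integer parameters), that we correctly match the two cases of $t'$ to the induced-minor versus induced-subgraph hypotheses, and that we do not accidentally claim a polynomial-time \emph{construction} of the decomposition — it is assumed given. So the proof I would write is just: ``By Proposition~\ref{prop:boundingmimwidth} (resp.\ Proposition~\ref{prop:boundingmimwidth2}), the given branch-decomposition has mim-width at most $t'$; now apply Theorem~\ref{thm:mimalgvs}.''
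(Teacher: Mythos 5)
Your proposal is correct and matches the paper's (implicit) proof exactly: the paper derives Corollary~\ref{cor:main3} by noting that the given sim-width-$w$ decomposition is itself a mim-width-$t'$ decomposition via Propositions~\ref{prop:boundingmimwidth} and~\ref{prop:boundingmimwidth2}, and then feeding it to Theorem~\ref{thm:mimalgvs}. Your observation that the decomposition must be assumed as input is precisely the caveat the paper states before the corollary.
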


\subsection{Induced minors}\label{subsec:inducedminor}

We show that the sim-width of a graph does not increase when taking an induced minor.
This is one of the main motivations to consider this parameter.

\begin{lemma}\label{lem:contraction}
The sim-width of a graph does not increase when taking an induced minor.
\end{lemma}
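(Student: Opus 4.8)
The plan is to show that the two operations generating induced minors---deleting a vertex and contracting an edge---do not increase sim-width, and since every induced minor is obtained by a finite sequence of these, the result follows by induction. For vertex deletion the argument is routine: given an optimal branch-decomposition $(T,L)$ of $G$ with $\simw(G)=\simw(T,L)$ and a vertex $v$ to be deleted, we remove the leaf $L(v)$ from $T$ and smooth the resulting degree-$2$ node to obtain a branch-decomposition $(T',L')$ of $G-v$. Every cut of $(T',L')$ is of the form $(A\setminus\{v\}, B\setminus\{v\})$ for some cut $(A,B)$ of $(T,L)$, and any induced matching between $A\setminus\{v\}$ and $B\setminus\{v\}$ in $G-v$ is also an induced matching between $A$ and $B$ in $G$ (deleting $v$ removes no non-edges and creates no new ones), so $\simval_{G-v}(A\setminus\{v\})\le \simval_G(A)\le \simw(G)$. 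Hence $\simw(G-v)\le\simw(G)$.

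The substantive case is edge contraction. Let $e=xy\in E(G)$ and let $G/e$ have the new vertex $z$ replacing $x$ and $y$. Starting from an optimal branch-decomposition $(T,L)$ of $G$, I would build a branch-decomposition $(T',L')$ of $G/e$ as follows: delete the leaf $L(y)$ from $T$, smooth the exposed degree-$2$ node, and relabel $L(x)$ as $z$; thus the leaf formerly holding $x$ now holds $z$. For an edge $e'$ of $T'$ inducing a cut $(A',B')$ of $G/e$, the corresponding edge of $T$ induces a cut $(A,B)$ of $G$ with, say, $z\in A'$ corresponding to $\{x,y\}\subseteq A$ and $A'\setminus\{z\}=A\setminus\{x,y\}$, $B'=B$. (The other position of $y$ relative to the cut is symmetric; note contracting $e$ only matters when $x$ and $y$ land on the same side of the original cut after the leaf-deletion, which they do by construction since we reused the leaf of $x$.) The key step is then: given an induced matching $\{a_1b_1,\dots,a_mb_m\}$ between $A'$ and $B'$ in $G/e$ (so $a_i\in A'$, $b_i\in B'$, with no edges among the $a_i$'s, no edges among the $b_i$'s, and no $a_ib_j$ edge for $i\ne j$), I must produce an induced matching of size $\ge m$ between $A$ and $B$ in $G$. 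If none of the $a_i$ equals $z$, the same edge set works verbatim in $G$. If some $a_k=z$, I replace $a_k$ by whichever of $x,y$ is actually adjacent in $G$ to $b_k$ (at least one is, since $z b_k\in E(G/e)$ means $b_k$ was adjacent to $x$ or $y$); call this choice $a_k^\ast\in\{x,y\}$. I claim $\{a_1b_1,\dots,a_k^\ast b_k,\dots,a_mb_m\}$ is still an induced matching between $A$ and $B$ in $G$: it is a matching and its endpoints are distinct; there is no edge $a_k^\ast b_j$ for $j\ne k$ because such an edge would survive contraction as $z b_j$, contradicting inducedness in $G/e$; there is no edge $a_k^\ast a_i$ for $i\ne k$ for the same reason ($a_k^\ast a_i$ would give $z a_i\in E(G/e)$); and the $b_i$'s and the remaining $a_i$'s are unaffected. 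Hence $\simval_G(A)\ge m$, giving $\simval_{G/e}(A')\le\simval_G(A)\le\simw(G)$, so $\simw(G/e)\le\simw(G)$.

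I would then conclude by induction on the number of delete/contract operations: if $H$ is an induced minor of $G$, write $H$ as the result of such a sequence applied to $G$, and apply the two cases above one step at a time. The main obstacle is purely bookkeeping in the contraction case---being careful that (i) $x$ and $y$ end up on the same side of the modified decomposition so that the cut of $G/e$ genuinely corresponds to a cut of $G$ with $\{x,y\}$ together, and (ii) the substitution $z\mapsto a_k^\ast$ preserves \emph{all three} types of non-adjacency (among $A$-vertices, among $B$-vertices, and across the matching) rather than just the matching structure. Everything else, including the linear/caterpillar-decomposition remark, is immediate since deleting a leaf and smoothing preserves the caterpillar shape, so the same argument shows $\lsimw$ is also non-increasing under induced minors, though we only need the stated claim.
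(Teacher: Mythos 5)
Your overall strategy coincides with the paper's: handle deletion trivially, and for a contraction $e=xy$ delete the leaf of $y$, smooth, relabel $x$'s leaf as $z$, and repair any matching that uses $z$. The vertex-deletion case and the ``same side'' half of the contraction case are fine. But there is a genuine gap in the contraction case, located exactly at the parenthetical remark you use to dismiss it. An edge $e'$ of $T'$ other than the smoothed edge is also an edge of $T$, and the only cut of $G$ with a known bound on $\simval_G$ is the one that edge induces in $(T,L)$ --- in which $L(y)$ sits wherever it originally sat. Reusing $x$'s leaf for $z$ does nothing to move $y$ onto $x$'s side; so for many edges of $T'$ the comparison cut $(A,B)$ of $G$ has $x\in A$ and $y\in B$, and your assertion that $\{x,y\}\subseteq A$ and $B'=B$ ``by construction'' is false. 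In that configuration your repair step breaks: if the matched partner $b_k$ of $z$ is adjacent in $G$ only to $y$, then the substituted edge $a_k^\ast b_k=yb_k$ has both endpoints in $B$ and does not cross the cut, so you have not produced an induced matching of size $m$ between $A$ and $B$.

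The paper closes this case with a different repair, and it is the one place where the definition of sim-width (as opposed to mim-width of the cut graph) is genuinely exploited: discard $b_k$ altogether and take the contracted edge $xy$ itself as the $k$-th matching edge. It crosses the cut ($x\in A$, $y\in B$), and neither $x$ nor $y$ is adjacent in $G$ to any other matched vertex, because the neighbourhood of $z$ in $G/e$ is the union of the neighbourhoods of $x$ and $y$, and $z$ had no edges to $\{a_i,b_i\}_{i\ne k}$ in the induced matching of $G/e$. This restores both the size $m$ and inducedness. (Your substitution does work in the split case whenever $b_k$ happens to be adjacent to $x$, but you cannot guarantee that.) One further piece of bookkeeping you skip: the edge of $T'$ created by smoothing has no single corresponding edge of $T$; its cut must be compared against one of the two edges of $T$ incident to the deleted leaf's neighbour, after which the same two-case analysis applies.
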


\begin{proof}
Clearly, the sim-width of a graph does not increase when removing a vertex.
We prove for contractions.

Let $G$ be a graph, $v_1v_2\in E(G)$, and let $(T, L)$ be a branch-decomposition of $G$ of sim-width $w$ for some positive integer $w$.
Let $z$ be the contracted vertex in $G/v_1v_2$.
We claim that $G/v_1v_2$ admits a branch-decomposition of $G$ of sim-width at most $w$.
We may assume that $G$ is connected and has at least $3$ vertices.
For $G/v_1v_2$, we obtain a branch-decomposition $(T',L')$ as follows:
\begin{itemize}
\item Let $T'$ be the tree obtained from $T$ by removing $L(v_2)$, and smoothing its neighbor. 
\item Let $L'$ be the function from $V(G/v_1v_2)$ to the set of leaves of $T'$ such that 
$L'(v)=L(v)$ for $v\in V(G/v_1v_2)\setminus \{z\}$ and $L'(z)=L(v_1)$.
\end{itemize}
Let $e_1$ and $e_2$ be the two edges of $T$ incident with the neighbor of $L(v_2)$, but not incident with $L(v_2)$.
Let $e_{cont}$ be the edge of $T'$ obtained by smoothing. By construction, all edges of $E(T')\setminus \{e_{cont}\}$ are contained in $T$.

\begin{claim}\label{claim:contraction}
For each $e\in E(T')\setminus \{e_{cont}\}$, the {$\simval_{G/v_1v_2}$-width} of $e$ in $(T', L')$ is at most the {$\simval_{G}$-width} of $e$ in $(T,L)$.
\end{claim}
\begin{clproof}
Let $e\in E(T')\setminus \{e_{cont}\}$.
Let $(A, B)$ be the vertex bipartition of $G/v_1v_2$ associated with $e$.
Without loss of generality, we may assume $z\in A$.
Suppose  there exists
an induced matching $\{a_1b_1, \ldots, a_mb_m\}$ in $G/ v_1v_2$ with $a_1, \ldots, a_m\in A$ and $b_1,\ldots, b_m\in B$.
Let $(A', B')$ be the vertex bipartition of $G$ associated with $e$ where $v_1\in A'$.
We will show that there is also an induced matching in $G$ of same size between $A'$ and $B'$. 

We have  
either $v_2\in A'$ or $v_2\in B'$.
If $z\notin \{a_1, \ldots, a_m\}$, then 
$\{a_1b_1, \ldots, a_mb_m\}$ is also an induced matching between $A'$ and $B'$ in $G$.
Without loss of generality, we may assume that $z=a_1$.

\medskip
\noindent\textbf{Case 1. $v_2\in A'$.}
\begin{proof}
In $G$, one of $v_1$ and $v_2$, say $v'$, is adjacent to $b_1$.
Also, $v_1$ and $v_2$ are not adjacent to any of $\{a_2, \ldots, a_m, b_1, \ldots, b_m\}$.
Therefore, $\{v'b_1, a_2b_2, \ldots, a_mb_m\}$ is an induced matching in $G$ between $A'$ and $B'$, as required.
\end{proof}

\medskip
\noindent\textbf{Case 2. $v_2\in B'$.}
\begin{proof}
In this case, $\{v_1v_2, a_2b_2, \ldots, a_mb_m\}$ is an induced matching between $A'$ and $B'$, because $v_1$ and $v_2$ are  not adjacent to 
any of $\{a_2, \ldots, a_m, b_2, \ldots, b_m\}$.
\end{proof}

It shows that the {$\simval_{G/v_1v_2}$-width} of $e$ in $(T', L')$ is at most the {$\simval_{G}$-width} of $e$ in $(T,L)$.
\end{clproof}

In a similar manner, we can show that the $\simval_{G/v_1v_2}$-width of $e_{cont}$ in $(T', L')$ is at most the minimum of the $\simval_{G/v_1v_2}$-width of $e_1$ and $e_2$ in $(T,L)$.
Thus, we conclude that $\simw(G/v_1v_2)\le \simw(G)$.
\end{proof}

 \subsection{Unbounded rank-width}\label{subsec:unboundedrank}

We show that the Hsu-clique chain graph depicted in Figure~\ref{fig:hsuclique} is chordal, but does not contain $K_2\mat K_2$ or $K_3\mat S_3$ as an induced minor. 
Belmonte and Vatshelle~\cite[Lemma 16]{BelmonteV2013} showed that a $(p\times q)$ Hsu-clique chain graph has rank-width at least $\frac{p}{3}$ when $q=3p+1$.
So, our algorithmic applications based on Proposition~\ref{prop:boundingmimwidth} or Proposition~\ref{prop:boundingmimwidth2} are beyond algorithmic applications of graphs of bounded rank-width.

We formally define Hsu-clique chain graphs.
For positive integers $p,q$, 
the \emph{$(p\times q)$ Hsu-clique chain grid} is  the graph on the vertex set $\{v_{i,j}:1\le i\le p,1\le j\le q\}$
where 
\begin{itemize}
\item for every $j\in \{1, \ldots, q\}$, $\{v_{1, j}, \ldots, v_{p, j}\}$ is a clique,
\item for every $i_1, i_2\in \{1, \ldots, p\}$ and $j\in \{1, \ldots, q-1\}$, $v_{i_1, j}$ is adjacent to $v_{i_2, j+1}$ if and only if $i_1\le i_2$, 
\item for $i_1, i_2\in \{1, \ldots, p\}$ and $j_1, j_2\in \{1, \ldots, q\}$, $v_{i_1, j_1}$ is not adjacent to $v_{i_2, j_2}$ if $\abs{j_1-j_2}>1$.
\end{itemize}

\begin{figure}[t]
\centerline{\includegraphics[scale=0.4]{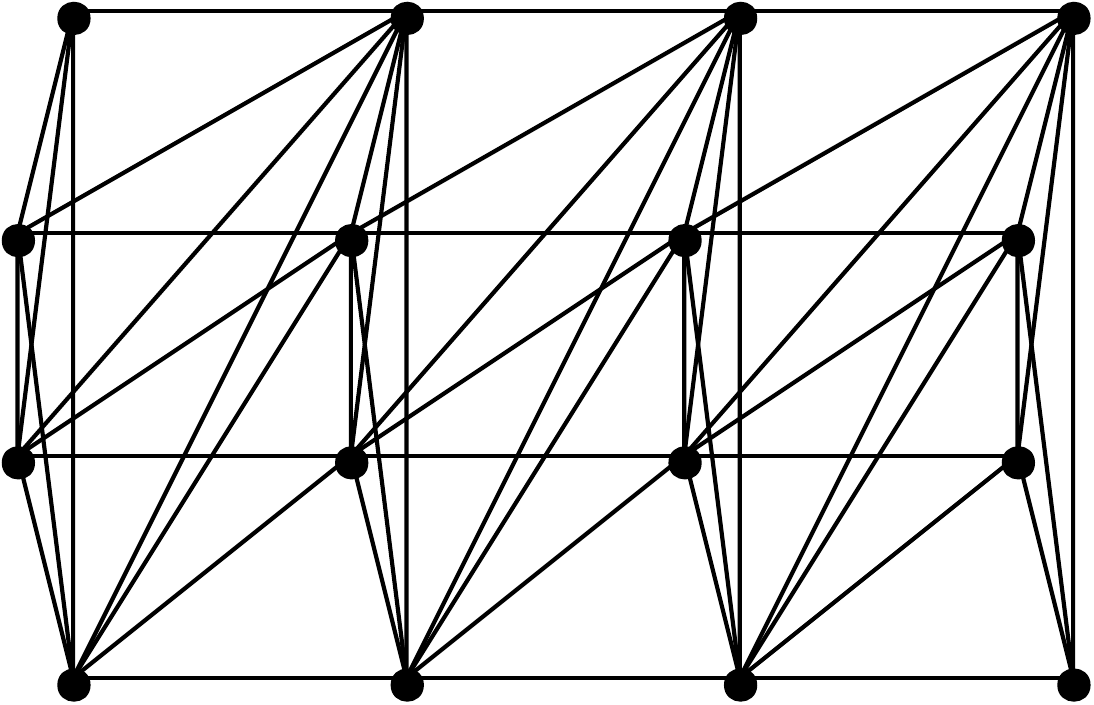}}
\caption{The $(4\times 4)$ Hsu-clique chain graph.}
\label{fig:hsuclique}
\end{figure}

\begin{proposition}\label{prop:nok3s3inducedminor}
Let $w\ge 1$ and $t\ge 2$ be integers.
The class of graphs of sim-width $w$ and having induced minor isomorphic to neither $K_t\mat S_t$ nor $K_t\mat K_t$ has unbounded rank-width.
\end{proposition}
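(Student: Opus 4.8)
The plan is to show that the $(p \times q)$ Hsu-clique chain grid $G_{p,q}$ witnesses unboundedness: it is chordal (hence has sim-width at most $1 \le w$ by Proposition~\ref{prop:chordal}), it contains no induced minor isomorphic to $K_2 \mat K_2$ or to $K_3 \mat S_3$ (hence, since the relevant forbidden patterns for $t \ge 2$ all contain one of these two as induced minors, $G_{p,q}$ has induced minor isomorphic to neither $K_t\mat K_t$ nor $K_t\mat S_t$), and by the cited bound of Belmonte and Vatshelle~\cite[Lemma 16]{BelmonteV2013} it has rank-width at least $p/3$ when $q = 3p+1$. Letting $p \to \infty$ then gives the result. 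Since $\simw \le 1$ is preserved under induced minors by Lemma~\ref{lem:contraction}, and the induced-minor-freeness is monotone under induced minors essentially by definition, the only genuine content is the chordality of $G_{p,q}$ and the absence of the two small induced minors.

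\textbf{Step 1: Chordality.} First I would verify $G_{p,q}$ is chordal. This follows by exhibiting a clique-tree or, more directly, a perfect elimination ordering: order the vertices by reading columns left to right and within each column by decreasing index $i$. One checks that when a vertex $v_{i,j}$ is eliminated, its remaining neighbors (those in column $j$ with smaller index, plus those $v_{i',j+1}$ with $i' \ge i$, plus those $v_{i',j-1}$ still present) form a clique, using the nested structure of the between-column adjacencies ($v_{i_1,j}\sim v_{i_2,j+1}$ iff $i_1 \le i_2$). Alternatively, one can present the path of cliques $\{v_{1,j},\dots,v_{p,j}\} \cup \{v_{i,j+1} : \text{some range}\}$ directly as bags of a tree-decomposition along the $j$-axis.

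\textbf{Step 2: No small induced minor.} Since $G_{p,q}$ is chordal it has no induced $C_4$, and $K_2\mat K_2 = C_4$; moreover chordality is closed under induced minors (an induced minor of a chordal graph is chordal — contracting an edge inside a clique-structured graph cannot create an induced cycle of length $\ge 4$), so $G_{p,q}$ has no induced minor isomorphic to $K_2 \mat K_2$. For $K_3 \mat S_3$: I would argue that $K_3 \mat S_3$ is not a co-comparability graph (this is exactly Lemma~\ref{lem:obscocom}) while the class of graphs having no $K_3\mat S_3$ induced minor is what we want, so instead I would give a direct structural argument. Suppose for contradiction $\eta$ is an induced minor model of $K_3 \mat S_3$ in $G_{p,q}$, with clique-branch-sets $C_1,C_2,C_3$ and independent-branch-sets $I_1,I_2,I_3$, where $C_a$ is adjacent only to $I_a$ among the $I$'s. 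Each branch-set, being connected in $G_{p,q}$, spans a contiguous interval of columns; using that any two columns at distance $\ge 2$ are non-adjacent, the supports of $C_1,C_2,C_3$ (pairwise adjacent) must all lie within two consecutive columns, say columns $j, j+1$. Then each $I_a$, being adjacent to $C_a$, must also meet columns $j$ or $j+1$. Now the key point: within columns $j \cup (j+1)$ the non-edges are exactly the pairs $v_{i_1,j}, v_{i_2,j+1}$ with $i_1 > i_2$, and this relation is a "staircase" with no induced $S_3$-type obstruction — more carefully, I would show that any three pairwise non-adjacent vertices in columns $\{j,j+1\}$ can be linearly ordered so that each has a prescribed neighborhood, contradicting the structure $K_3\mat S_3$ requires (three independent vertices each dominated privately by one of three mutually-adjacent vertices). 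This is the step that needs care.

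\textbf{Main obstacle.} The hard part will be Step~2's $K_3\mat S_3$ argument: I need to rule out the induced minor globally, not just the induced subgraph, so I must control how branch-sets can spread across columns and contract. The confinement of all six branch-sets to two consecutive columns (forced by the clique-triple $C_1,C_2,C_3$ needing pairwise adjacency and by the "distance $\ge 2$ columns are anticomplete" property) is the linchpin; once everything lives in columns $\{j,j+1\}$, the between-column adjacency being a threshold/nested relation makes a $K_3\mat S_3$ pattern impossible, because $K_3\mat S_3$ is not even an induced subgraph of any such two-column graph (the three private-domination constraints on three independent vertices are incompatible with the nested neighborhood structure). I would isolate this two-column non-embeddability as a small self-contained claim and prove it by a short case analysis on the relative positions of the three independent branch-sets.
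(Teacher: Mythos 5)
Your overall strategy is the paper's: use the $(p\times q)$ Hsu-clique chain grid, which is chordal (so has sim-width at most $1\le w$), invoke \cite[Lemma 16]{BelmonteV2013} for the rank-width lower bound, and show the grid has no induced minor isomorphic to $K_2\mat K_2$ or $K_3\mat S_3$. Steps 1 and the $K_2\mat K_2$ part are fine. But the linchpin of your Step 2 — that the three clique branch-sets $C_1,C_2,C_3$, being pairwise adjacent, ``must all lie within two consecutive columns'' — is false, and you yourself flag it as the step everything hinges on. Pairwise adjacency of connected branch-sets only forces their column-intervals to pairwise come within distance one of each other; it does not bound their lengths. Concretely, each row $\{v_{i,1},\dots,v_{i,q}\}$ is a path (since $v_{i,j}$ is adjacent to $v_{i,j+1}$), so taking $C_a=\{v_{a,j}: 1\le j\le q\}$ for $a=1,2,3$ gives three connected, pairwise adjacent branch-sets (they meet in every column, which is a clique) that span \emph{all} $q$ columns. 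Hence the reduction to a two-column subgraph collapses, and your ``two-column non-embeddability'' claim, while true, does not apply.

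The paper's proof works around exactly this. It lets $I$ be the set of column indices met by the three clique branch-sets, with leftmost column $\ell$ and rightmost column $r$, and chooses $x$ to be the clique branch-set reaching column $\ell$ with the extremal (highest) row index there, and $y$ the one reaching column $r$ with the extremal (lowest) row index there. Using the threshold adjacency $v_{i_1,j}\sim v_{i_2,j+1}\iff i_1\le i_2$, it shows $I_x\cup I_y=I$, i.e., these two extremal branch-sets together cover every column used by the clique triple. Then the independent branch-set $\eta(v^2_\ell)$ matched to the third clique vertex must send an edge into columns of $I$, and every such vertex is forced to have a neighbor in $V(\eta(x))\cup V(\eta(y))$ — contradicting the induced-minor-model requirement that $\eta(v^2_\ell)$ be anticomplete to $\eta(x)$ and $\eta(y)$. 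Some global argument of this kind, controlling branch-sets that stretch across many columns, is genuinely needed; your current proposal does not supply it. (A separate, smaller issue, shared with the paper's own statement: for $t=2$ the graph $K_2\mat S_2$ is $P_4$, which the grid certainly contains as an induced minor, so the construction only serves $t\ge 3$ for the $K_t\mat S_t$ exclusion; your parenthetical that the forbidden patterns for all $t\ge2$ contain one of the two small graphs as induced minors is not correct for $K_2\mat S_2$.)
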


\begin{proof}
For this, we provide that Hsu-clique chain grids are chordal and having no induced minor isomorphic to $K_t\mat S_t$ or $K_t\mat K_t$.
Since Hsu-clique chain grids have unbounded rank-width~\cite{BelmonteV2013}, it proves the proposition.

Let $G$ be the $(p\times q)$ Hsu-clique chain graph for some positive integers $p$ and $q$.

First show that $G$ is chordal. Suppose for contradiction that $G$ contains an induced cycle $C=c_1c_2 \cdots c_mc_1$ where $m\ge 4$.
Since each column of $G$ is a clique, all vertices of $C$ are contained in two consecutive columns.
But also, since each column contains at most $2$ vertices, we have $m=4$, and two consecutive vertices of $C$ are contained in one column.
Without loss of generality, we assume $c_1, c_2$ are in the $i$-th column for some $i\in \{1, \ldots, q-1\}$, 
and $c_3, c_4$ are in the $(i+1)$-th column. This implies that there is an induced matching of size $2$ between two columns if we ignore edges in each column. However, this is not possible from the construction. Therefore, $G$ is chordal.

If $G$ contains an induced minor isomorphic to $K_2\mat K_2$, which is an induced cycle of length $4$, 
then $G$ contains an induced subgraph isomorphic to an induced cycle of length $\ell$ for some $\ell\ge 4$.
This contradicts to the fact that $G$ is chordal.
So, $G$ has no induced minor isomorphic to $K_2\mat K_2$.

We claim that $G$ has no induced minor isomorphic to $K_3\mat S_3$.
For contradiction, suppose there is an induced minor model $\eta$ of $K_3\mat S_3$ in $G$.
Let $H:=K_3\mat S_3$.
For each $v\in V(H)$, let $I_v=\{j:v_{i,j}\in V(\eta(v))\}$. 
Let $I:=I_{v^1_1}\cup I_{v^1_2}\cup I_{v^1_3}$, and let $\ell$ and $r$ be the smallest and greatest integers in $I$, respectively.
Let $x,y\in \{v^1_1, v^1_2, v^1_3\}$ such that
\begin{enumerate}
\item $V(\eta(x))$ contains a vertex in the $\ell$-th column, but for $z\in \{v^1_1, v^1_2, v^1_3\}\setminus \{x\}$, $V(\eta(z))$ has no vertex whose row index is higher than all vertices in $V(\eta(x))$ in the $\ell$-th column, 
\item similarly, $V(\eta(y))$ contains a vertex in the $r$-th column, but for $z\in \{v^1_1, v^1_2, v^1_3\}\setminus \{x\}$, $V(\eta(z))$ has no vertex whose row index is lower than all vertices in $V(\eta(y))$ in the $r$-th column. 
\end{enumerate}
As $\{v^1_1, v^1_2, v^1_3\}$ is a clique, it is easy to observe that $I_x\cup I_y=I$.
Let $\ell$ be the integer such that $v^1_{\ell}\in \{v^1_1, v^1_2, v^1_3\}\setminus \{x,y\}$.
By the choice of $x$ and $y$, every vertex in $V(\eta(v^2_{\ell}))$ has a neighbor in $V(\eta(x))\cup V(\eta(y))$.
Thus, it contradicts to the assumption that $G$ contains $H$ as an induced minor.
\end{proof}

\section{Lower bound of sim-width on circle graphs}\label{sec:circlegraph}
In this section, we construct a set of circle graphs with unbounded sim-width and no triangle, using the approach similar to the one by Mengel~\cite{Mengel}. As $\simw(G) \leq \mimw(G)$ for every graph $G$, the example also implies that the class of circle graphs has unbounded mim-width.

\begin{theorem}\label{thm:unbddsimw}
Circle graphs have unbounded sim-width, and thus have unbounded mim-width.
\end{theorem}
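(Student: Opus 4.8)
The plan is to exhibit, for every integer $k$, a triangle-free circle graph $G_k$ with $\simw(G_k)\ge k$, which together with Lemma~\ref{lem:inequality} gives unbounded mim-width as well. Following Mengel's approach, the natural candidate is a bipartite-like incidence construction: take a ground set that is large enough so that, regardless of how a branch-decomposition splits the vertex set into a balanced bipartition $(A,B)$ (using Lemma~\ref{lem:balancedpartition} with $w$ the putative sim-width), one can always locate a large induced matching \emph{with no extra edges among the $A$-endpoints and no extra edges among the $B$-endpoints}. Since we want the graph triangle-free, every induced matching across a cut is automatically a \emph{semi-induced} matching in the $\simval$ sense except possibly for edges within one side; so the key is to design $G_k$ so that large matchings across \emph{any} balanced cut can be chosen avoiding same-side edges.

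First I would fix a concrete circle representation. A clean choice is to start from a grid-like or ``pattern'' family of chords: index chords by pairs, say a set $X$ of ``row'' chords and a set $Y$ of ``column'' chords arranged around the circle so that chord $x\in X$ crosses chord $y\in Y$ according to a prescribed pattern matrix $M\in\{0,1\}^{X\times Y}$, while two row chords never cross and two column chords never cross (this is what forces triangle-freeness: the graph is bipartite with parts $X$ and $Y$). One checks that whether a $0/1$ matrix $M$ is realizable as a crossing pattern of two families of chords placed on disjoint arcs is governed by a simple combinatorial condition, and in particular a full ``shift'' or ``staircase'' pattern is realizable. Then I would pick $M$ so that for every bipartition of $X\cup Y$ into two roughly-balanced classes $(A,B)$, the submatrix of $M$ indexed by $(A\cap X, B\cap Y)$ (or $(B\cap X, A\cap Y)$) contains a large \emph{permutation submatrix} — a $k\times k$ identity up to row/column permutation. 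A permutation submatrix of size $k$ with rows in $X$ and columns in $Y$ yields exactly an induced matching of size $k$ that is semi-induced across the cut, because no two $X$-vertices are adjacent and no two $Y$-vertices are adjacent, so the only edges among the $2k$ endpoints are the $k$ matching edges themselves.

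The heart of the argument is therefore the combinatorial/Ramsey-type claim: there is a pattern $M$ on $n\times n$ (with $n$ large in terms of $k$) such that \emph{every} partition of the index set into two classes, each of size between $n/3$ and $2n/3$, induces a $k\times k$ permutation submatrix across the classes. I would obtain such $M$ either by a random construction and a union bound over partitions, or — more in the spirit of the earlier sections — by an explicit ``universal'' pattern (e.g.\ a suitably padded identity/staircase, or the pattern used for the lower bounds in Lemma~\ref{lem:lowerbdcocomparability} and Proposition~\ref{prop:lowerbdsplit}, adapting the Sauer–Shelah argument). The point is that in a balanced split, one side contains $\ge n/6$ of the rows of $X$ and the other contains $\ge n/6$ of the columns of $Y$ (after possibly swapping roles of $A$ and $B$), and in a rich enough pattern those $n/6$ rows and $n/6$ columns must cross in a way containing a large permutation submatrix — essentially an extremal/Ramsey statement about $0/1$ matrices. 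I then need to verify that the resulting graph is genuinely a circle graph, i.e.\ that the chosen pattern really is realizable by chords on a circle with $X$ on one arc and $Y$ on a disjoint arc; this is the place where the construction must be done carefully rather than abstractly.

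The main obstacle I anticipate is precisely this tension: making the pattern $M$ simultaneously (i) realizable as a circle-chord crossing pattern of two non-crossing families, and (ii) so ``spread out'' that every balanced bipartition of $X\cup Y$ still exposes a $k$-permutation submatrix across the cut. A pattern that is easy to realize (like a monotone staircase) may be too structured, letting an adversary split it so that across-cut submatrices are small; a pattern rich enough to defeat all balanced splits may fail the circle-realizability test. I expect the resolution to mirror Mengel's: choose the staircase/shift pattern — which \emph{is} circle-realizable — and prove directly that a balanced cut cannot destroy all large monotone substructures, using a pigeonhole/Dilworth-type argument on the interval endpoints, so that a $\Omega(n)$-size semi-induced matching always survives; scaling $n$ with $k$ then finishes. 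If a clean direct bound is elusive, the fallback is the probabilistic construction together with an explicit circle realization of a random bipartite permutation pattern, at the cost of a non-constructive $G_k$, which is still enough for the theorem.
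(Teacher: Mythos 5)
Your plan attacks sim-width head-on: build a triangle-free circle graph in which \emph{every} balanced cut contains a large induced matching whose endpoints are independent on both sides. The paper never does this, and the step you yourself flag as the ``main obstacle'' is a genuine gap rather than a technicality. If both chord families $X$ and $Y$ are required to be pairwise non-crossing, then $X$ forms a laminar family of arcs, and the set of $X$-chords crossed by a given $Y$-chord is the symmetric difference of two chains in that laminar forest, i.e.\ a path in a forest; the realizable crossing patterns are therefore node-versus-path incidence patterns of a forest, which is far too rigid for your purposes. The monotone staircase you suggest is defeated by exactly the adversarial cut you describe (rows $1,\dots,n/3$ against columns $2n/3,\dots,n$ meet in an all-zero submatrix), and a ``random bipartite permutation pattern'' is not realizable in this model, so the probabilistic fallback does not exist either. (Also, taken literally, placing the $X$-endpoints on one arc and the $Y$-endpoints on a disjoint arc yields no crossings at all.) No pattern satisfying both your conditions (i) and (ii) is produced, and none is known.

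The paper's proof sidesteps all of this by never exhibiting semi-induced matchings. Lemma~\ref{lem:construction} builds a triangle-free circle graph $G_k$ containing the $1$-subdivision $H$ of the $(k\times k)$-grid as a spanning subgraph, where only the grid vertices $A$ are pairwise non-crossing; the subdivision vertices $B$ are \emph{allowed to cross each other}, which is precisely what makes a circle realization possible. Then $\mimw(H)\ge (k+1)/9$ follows from degeneracy and tree-width (Lemma~\ref{lem:mimwtw}), $\mimw(G_k)\ge \mimw(H)/2$ from Mengel's lemma on adding edges inside one side of a bipartition (Lemma~\ref{lem:bddmw}), and finally --- this is the key step your proposal is missing --- Proposition~\ref{prop:boundingmimwidth2} with $t=3$ converts the mim-width lower bound into a sim-width lower bound: a triangle-free graph contains neither $K_3\mat K_3$ nor $K_3\mat S_3$ as an induced subgraph, so $\mimw(G_k)\le f(\simw(G_k))$ for a fixed increasing $f$, forcing $\simw(G_k)\to\infty$. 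I would redirect your argument along these lines rather than trying to certify large semi-induced matchings across every balanced cut directly.
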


For an integer $d \geq 0$, a graph $G$ is \emph{$d$-degenerate} if every subgraph of $G$ contains a vertex of degree at most $d$.
The following lemma asserts that every $d$-degenerate graph with large treewidth also has a large mim-width.

\begin{lemma}[Vatshelle's Thesis~\cite{VatshelleThesis}; See also Mengel~\cite{Mengel}]\label{lem:mimwtw}
For every $d$-degenerate graph $G$, $\mimw(G) \geq \frac{\tw(G)+1}{3(d+1)}$.
\end{lemma}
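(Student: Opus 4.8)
The plan is to route through the maximum-matching width $\mmw$, which the preliminaries note has the same modelling power as treewidth. I rely on the known comparison $\tw(G)+1\le 3\,\mmw(G)$ established in Vatshelle's thesis~\cite{VatshelleThesis}, and complement it with a \emph{per-cut} comparison between $\mmw$ and $\mimw$ that is valid on $d$-degenerate graphs, namely $\mmw_G(A)\le (d+1)\,\mimval_G(A)$ for every $A\subseteq V(G)$. Multiplying the two, applied to one fixed branch-decomposition, yields $\tw(G)+1\le 3(d+1)\,\mimw(G)$, which is exactly the claimed inequality.

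The heart of the argument is the extraction claim: in a $d$-degenerate graph, a matching of size $s$ across a cut $(A,B)$ contains an induced matching of the bipartite graph $G[A,B]$ of size at least $s/(d+1)$. Write the matching as $\{a_1b_1,\dots,a_sb_s\}$ with $a_i\in A$ and $b_i\in B$, and form the \emph{conflict graph} $H$ on the vertex set $\{e_1,\dots,e_s\}$, where $e_ie_j\in E(H)$ whenever $a_ib_j\in E(G)$ or $a_jb_i\in E(G)$; an independent set of $H$ is precisely an induced matching of $G[A,B]$. The only crossing edges that create conflicts lie in the bipartite subgraph on $\{a_i\}\cup\{b_i\}$, which inherits $d$-degeneracy from $G$. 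I would fix a degeneracy ordering of these $2s$ matched vertices, orient every crossing edge toward its earlier endpoint, and charge each conflict of $H$ to a bounded neighbourhood in this ordering; since each matched vertex has at most $d$ relevant neighbours earlier in the order, this controls the relevant degrees of $H$, and a greedy selection in the degeneracy order then extracts an independent set of the required size, losing only the factor $d+1$.

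Finally I would take a branch-decomposition $(T,L)$ attaining $\mimw(G)$; by the extraction claim its $\mmw$-width is at most $(d+1)\,\mimw(G)$, so $\mmw(G)\le (d+1)\,\mimw(G)$, and combining with $\tw(G)+1\le 3\,\mmw(G)$ gives $\mimw(G)\ge \frac{\tw(G)+1}{3(d+1)}$. I expect the main obstacle to be the constant in the extraction claim rather than the reduction to $\mmw$: a naive edge count only bounds the average degree of $H$ by a constant multiple of $d$, because each matched edge has two endpoints and each can spawn conflicts, which would give a factor like $4d$ instead of $d+1$. Obtaining the clean $d+1$ requires the charging scheme to pay for the conflicts of each matched edge through only one suitably chosen endpoint, so the careful bookkeeping within the degeneracy ordering is where the real work lies.
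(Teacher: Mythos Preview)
The paper does not prove this lemma; it is quoted from Vatshelle's thesis and Mengel with no argument given, so there is no in-paper proof to compare against. Your high-level route --- combine the known $\tw(G)+1\le 3\,\mmw(G)$ with a per-cut bound $\mm_G(A)\le (d+1)\,\mimval_G(A)$ applied to a branch-decomposition realising $\mimw(G)$ --- is the natural one and is plausibly what the cited sources do.

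However, your proposed proof of the per-cut bound has a genuine error, not just a missing detail. You reduce it to the \emph{extraction claim}: every matching $M$ of size $s$ in a $d$-degenerate bipartite graph contains an induced matching of size at least $s/(d+1)$. This is false already for $d=2$. Take $A=\{a_1,\dots,a_4\}$, $B=\{b_1,\dots,b_4\}$, the perfect matching $M=\{a_ib_i:1\le i\le4\}$, and add the six cross edges $a_1b_2,\,a_2b_3,\,a_3b_4,\,a_4b_1,\,a_1b_3,\,a_2b_4$. The resulting bipartite graph is $2$-degenerate (repeatedly delete a vertex of degree at most $2$, starting with $a_3$ or $a_4$), yet every pair of edges of $M$ is joined by a cross edge, so your conflict graph $H$ is $K_4$ and the largest induced matching inside $M$ has size $1<4/3$. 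Taking $k$ disjoint copies gives $s=4k$ with largest induced sub-matching of size $k<4k/3$, so the failure is not a rounding artefact. No charging scheme on $H$ can rescue the constant $d+1$ here, because the target statement is simply false.

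What the per-cut inequality actually requires is the weaker assertion that the \emph{maximum} induced matching of $G[A,B]$ --- not one contained in a prescribed maximum matching --- is at least $\mm_G(A)/(d+1)$. In the example above the maximum induced matching has size $2$ (for instance $\{a_1b_2,\,a_3b_4\}$), and indeed $4\le 3\cdot 2$, so the per-cut bound is not contradicted. But it cannot be obtained by extracting from a fixed matching; any correct argument must be free to pick edges outside $M$, which your conflict-graph setup does not allow.
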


Given a bipartite graph with large mim-width, we do not lose much of mim-width after adding edges to one part of the bipartition.

\begin{lemma}[Mengel~\cite{Mengel}]\label{lem:bddmw}
Let $H$ be a bipartite graph with a bipartition $(A,B)$, and let $G$ be a graph obtained from $H$ by adding some edges in $H[A]$. Then $\mimw(G) \geq \frac{\mimw(H)}{2}$.
\end{lemma}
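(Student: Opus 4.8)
The plan is to transfer a single branch-decomposition between the two graphs and to control the loss cut by cut. Since $V(H)=V(G)$, any branch-decomposition $(T,L)$ of $G$ is also a branch-decomposition of $H$, and every vertex bipartition $(X,Y)$ arising from an edge of $T$ is identical in both graphs. Thus it suffices to prove the local inequality $\mimval_H(X)\le 2\,\mimval_G(X)$ for every vertex bipartition $(X,Y)$ of $V(G)$. Taking $(T,L)$ to be an optimal decomposition for $G$ and applying this bound on each edge would give $\mimw(H)\le 2\,\mimw(G)$, which is exactly the claim $\mimw(G)\ge \mimw(H)/2$.

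To prove the local inequality, fix a bipartition $(X,Y)$ and a maximum induced matching $\{a_1b_1,\dots,a_mb_m\}$ of $H[X,Y]$, so that $m=\mimval_H(X)$. Because $H$ is bipartite with parts $(A,B)$ and $G$ differs from $H$ only by edges added inside $A$, I would first observe that each matching edge has exactly one endpoint in $A$; label the vertices so that $a_i\in A$ and $b_i\in B$ for all $i$. The key step is to identify which of these edges can fail to form an induced matching once we pass to $G[X,Y]$. A potential crossing edge of $G[X,Y]$ among the $2m$ endpoints is of one of three types: an $a_ib_j$ edge, which runs between $A$ and $B$ and hence is unchanged from $H$, so it is absent since the matching was induced in $H$; a $b_ib_j$ edge, which cannot exist as no edges were ever added inside $B$; or an $a_ia_j$ edge, the only type that can be newly created in $G$.

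Hence the only obstructions are new $A$\nobreakdash-$A$ edges, and these only matter when $a_i$ and $a_j$ lie on opposite sides of the cut. The plan is to eliminate them by a side split: let $S_X=\{i:a_i\in X\}$ and $S_Y=\{i:a_i\in Y\}$. Within $S_X$ all $A$-endpoints lie on the same side $X$, so no pair $a_ia_j$ is a crossing edge of $G[X,Y]$; combined with the absence of $a_ib_j$ and $b_ib_j$ crossing edges, this makes $\{a_ib_i:i\in S_X\}$ an induced matching of $G[X,Y]$, and symmetrically for $S_Y$. Since $|S_X|+|S_Y|=m$, one of the two sets has size at least $m/2$, yielding an induced matching of $G[X,Y]$ of size at least $\mimval_H(X)/2$, so $\mimval_G(X)\ge \mimval_H(X)/2$.

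I do not expect a genuine obstacle here; the one subtlety to get right is that $\mimval$ is computed in the bipartite graph $G[X,Y]$ of crossing edges only, so edges added inside $A$ that remain on one side of the cut are invisible and cause no harm — only those straddling the cut can damage an induced matching, and the two-sided split neutralizes exactly these. Once the local inequality is established, substituting an optimal decomposition of $G$ and maximizing over the edges of $T$ closes the argument.
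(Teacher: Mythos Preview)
Your proof is correct. The paper does not supply its own proof of this lemma; it is quoted from Mengel~\cite{Mengel} and used as a black box. Your argument---transferring an optimal branch-decomposition of $G$ to $H$ and establishing the cut-by-cut inequality $\mimval_H(X)\le 2\,\mimval_G(X)$ by splitting the $A$-endpoints of a maximum induced matching of $H[X,Y]$ according to which side of the cut they lie on---is the natural one and is essentially how the cited result is proved.
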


For an integer $k \geq 1$ and a graph $H$, a graph $G$ is a \emph{$k$-subdivision} of $H$ if $G$ can be built from $H$ by replacing every edge of $H$ with a path of length $k+1$ such that those paths replacing edges of $H$ are internally vertex-disjoint.

Now we prove the main lemma of this section.

\begin{figure}[t]
\centerline{\includegraphics[scale=0.65]{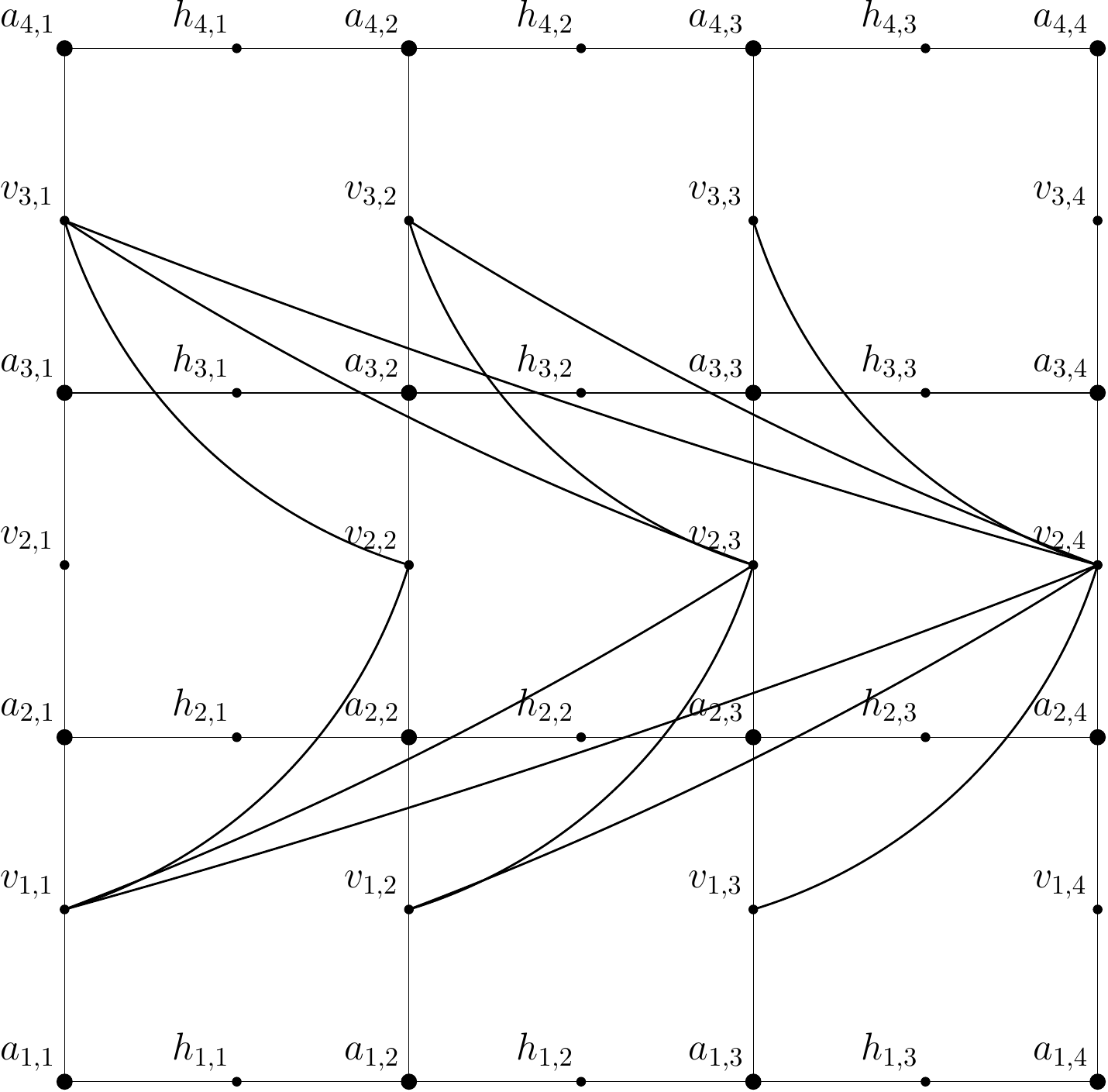}}
\caption{The circle graph $G_4$ of Lemma~\ref{lem:construction}}
\label{fig:circlegraph}
\end{figure}

\begin{lemma}\label{lem:construction}
For every integer $k \geq 2$, there is a circle graph $G_k$ satisfying the following.
\begin{enumerate}
\item $G_k$ contains no $K_3$ as a subgraph.
\item $G_k$ contains a 1-subdivision of a $(k \times k)$-grid as a subgraph. In particular, $\tw(G_k) \geq k$.
\item $\mimw(G_k) \geq \frac{k+1}{18}$.
\end{enumerate}
\end{lemma}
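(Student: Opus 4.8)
The plan is to construct $G_k$ explicitly as a circle graph via a chord diagram, verify the three properties, and then combine the three lemmas stated just above (Lemma~\ref{lem:mimwtw}, Lemma~\ref{lem:bddmw}, and the degeneracy bound) to extract the mim-width lower bound. The natural starting point is the $(k\times k)$-grid $\Gamma_k$: I would take its $1$-subdivision $\Gamma_k'$, which is bipartite with parts $A$ (the original grid vertices) and $B$ (the subdivision vertices, one per grid edge), and is triangle-free since it is a subdivision. The task is to realize a \emph{supergraph} of $\Gamma_k'$ as a circle graph in which the only added edges lie inside $A$ — this is exactly the shape required by Lemma~\ref{lem:bddmw}. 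Concretely, I would design a system of chords in a circle: lay out the $k$ rows and $k$ columns of the grid as long ``backbone'' chords, and for each grid edge introduce a short chord (the corresponding $B$-vertex) that crosses exactly the two backbone chords of its endpoints and nothing else. This is the standard trick: intersection graphs of such systems are circle graphs by construction. One must check that two $B$-chords never cross each other (achieved by making them short and placing their endpoints in disjoint arcs) and that the resulting extra adjacencies among backbone chords are confined to the ``row'' chords and ``column'' chords — i.e. to $A$ — never creating a $K_3$. Triangle-freeness is the delicate point: a triangle would need three mutually crossing chords, and in this layout any triangle must involve at most one $B$-chord (two $B$-chords don't cross), so it would be a triangle among backbone chords; I would choose the cyclic order of chord endpoints so that row-chords pairwise cross, column-chords pairwise cross, but a row-chord, a column-chord, and any third chord cannot all pairwise cross. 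A clean way to guarantee this is to make all ``row'' backbone chords pairwise \emph{non}-crossing (nested) and likewise all ``column'' chords nested, so the only crossings among backbone chords are row-vs-column, which cannot produce a triangle among backbone chords either.

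The verification of property~(2) is then essentially immediate from the construction: the $B$-chords, together with the sub-arcs of the backbone chords between consecutive crossing points, literally trace out a $1$-subdivision of $\Gamma_k$ as a subgraph of $G_k$ (deleting the superfluous $A$-$A$ edges). Since the $(k\times k)$-grid has treewidth $k$ and taking a subdivision does not decrease treewidth, $\tw(G_k)\ge k$. Property~(1) is the triangle-freeness argument above.

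For property~(3), I would argue as follows. Let $H$ be the subgraph of $G_k$ consisting of the $1$-subdivision $\Gamma_k'$ — a bipartite triangle-free graph with bipartition $(A,B)$ where $A$ is an independent set in $H$. Every vertex of $\Gamma_k'$ has degree at most $4$ (grid vertices have grid-degree $\le 4$, subdivision vertices have degree $2$), so $\Gamma_k'$ is $4$-degenerate; hence by Lemma~\ref{lem:mimwtw}, $\mimw(\Gamma_k') \ge \frac{\tw(\Gamma_k')+1}{3\cdot 5} = \frac{\tw(\Gamma_k')+1}{15} \ge \frac{k+1}{15}$. Now $G_k[A\cup B]$ is obtained from the bipartite graph $\Gamma_k'$ by adding edges only inside $A$ (this is exactly how the construction was set up — all extra adjacencies are between backbone ``row/column'' chords, which are the $A$-vertices), so Lemma~\ref{lem:bddmw} gives $\mimw(G_k[A\cup B]) \ge \frac{1}{2}\mimw(\Gamma_k') \ge \frac{k+1}{30}$. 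Since mim-width is monotone under taking induced subgraphs, $\mimw(G_k)\ge \mimw(G_k[A\cup B]) \ge \frac{k+1}{30}$. To hit the stated bound $\frac{k+1}{18}$ I would instead apply Lemma~\ref{lem:mimwtw} with the sharper degeneracy: one can arrange $H$ to be $2$-degenerate by building it from a slightly different base (or by observing the subdivided grid, after peeling degree-$2$ vertices, reduces to a graph of degeneracy $2$), giving $\mimw(H)\ge \frac{k+1}{9}$ and then $\frac{k+1}{18}$ after Lemma~\ref{lem:bddmw}.

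The main obstacle will be the chord-diagram construction itself — exhibiting an explicit cyclic sequence of chord endpoints that simultaneously (i) realizes every edge of $\Gamma_k'$ as a crossing, (ii) introduces crossings \emph{only} inside $A$ (never $A$-$B$ or $B$-$B$ beyond what $\Gamma_k'$ prescribes), and (iii) avoids three pairwise-crossing chords so that $G_k$ stays triangle-free. All three constraints pull on the cyclic ordering at once, so getting them consistent (and presentable) is the real work; the degeneracy bookkeeping and the two applications of the Mengel-style lemmas are then routine. A secondary subtlety is pinning down the exact constant $18$, which depends on the precise degeneracy one can certify for the base bipartite graph before invoking Lemma~\ref{lem:mimwtw}; I would be prepared to engineer the base graph (e.g. using a subdivision where enough vertices have degree $2$) so that the $2$-degenerate bound applies cleanly.
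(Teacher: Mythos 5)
Your overall strategy is exactly the one the paper uses: take the $1$-subdivision $H$ of the $(k\times k)$-grid, note it is $2$-degenerate with treewidth at least $k$ so that Lemma~\ref{lem:mimwtw} gives $\mimw(H)\ge\frac{k+1}{9}$, realize a circle graph that equals $H$ plus edges confined to one side of the bipartition, and finish with Lemma~\ref{lem:bddmw}. (Your detour through $4$-degeneracy is unnecessary: any $1$-subdivision is $2$-degenerate outright, since a subgraph containing a subdivision vertex has a vertex of degree at most $2$ and a subgraph avoiding all of them is edgeless.) The problem is that the chord-diagram construction, which is the entire technical content of the lemma, is not actually carried out, and the sketch you give is internally inconsistent. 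You describe ``$k$ rows and $k$ columns'' as $2k$ long backbone chords, but the part $A$ consists of the $k^2$ grid vertices; a circle graph on $2k+2k(k-1)=2k^2$ chords cannot contain the subdivided grid (which has $3k^2-2k$ vertices) as a subgraph, and the appeal to ``sub-arcs of the backbone chords between consecutive crossing points'' as tracing out the subdivision confuses the planar drawing of the grid with the intersection graph of the chords --- a sub-arc of a chord is not a vertex of the circle graph. If instead each of the $k^2$ grid vertices is its own long chord, then the nesting recipe (``rows nested, columns nested, only row-vs-column crossings'') no longer applies as stated, and you must still exhibit a cyclic order in which every degree-$\le 4$ grid-vertex chord is crossed by its (pairwise non-crossing) incident $B$-chords and by no other $B$-chord; none of this is verified.

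There is also a concrete hole in the triangle-freeness argument. From ``two $B$-chords never cross'' you conclude that any triangle ``would be a triangle among backbone chords,'' but a triangle can consist of one $B$-chord together with the two $A$-chords it crosses, provided those two $A$-chords cross each other. Since every $B$-chord crosses precisely the chords of the two endpoints of its grid edge, your construction must additionally guarantee that no two \emph{adjacent} grid vertices receive crossing chords --- a constraint you never impose. The paper sidesteps both difficulties by flipping the roles: the $k^2$ grid-vertex chords are short, pairwise disjoint chords placed around the circle in a boustrophedon order, the subdivision-vertex chords are the long ones, and all extra crossings land inside $B_v$ between consecutive levels of vertical subdivision vertices, where one checks directly that the endpoints of any new edge have no common neighbour. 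As written, your proposal identifies the right lemmas and the right target but leaves the construction --- and with it properties (1) and (2) --- unestablished.
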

\begin{proof}
Let $H$ be a 1-subdivision of a $(k \times k)$-grid. It is easy to see that $H$ is 2-degenerate and has tree-width at least $k$, since it contains a $(k \times k)$-grid as a minor. By Lemma~\ref{lem:mimwtw}, $\mimw(H) \geq \frac{k+1}{9}$.

For $i,j \in \{1, \ldots, k\}$, let $a_{i,j}$ be the vertex on the $i$-th row and $j$-th column of the $(k \times k)$-grid. 
For $i\in \{1, \ldots, k\}$ and $j\in \{1, \ldots, k-1\}$, let $h_{i,j}$ be the vertex of degree $2$ adjacent to $a_{i,j}$ and $a_{i,j+1}$ in $H$. 
For $i\in \{1, \ldots, k-1\}$ and $j\in \{1, \ldots, k\}$, let $v_{i,j}$ be the vertex of degree $2$ adjacent to $a_{i,j}$ and $a_{i+1,j}$ in $H$.
Let us define
$A:=\{a_{i,j} : i,j\in \{1, \ldots, k\} \}$, 
$B_h:=\{ h_{i,j} : i\in \{1, \ldots, k\}, j\in \{1, \ldots, k-1\} \}$, and
$B_v:=\{ v_{i,j} : i\in \{1, \ldots, k-1\}, j\in \{1, \ldots, k\} \}$.
Then $H$ is a bipartite graph with the bipartition $(A, B_h\cup B_v)$. Let $B = B_h \cup B_v$. 

Before we describe in detail, we briefly explain how we will construct a circle representation of $H$.
For two points $a$ and $b$ on a circle, we denote by $\overline{ab}$ the chord whose end points are $a$ and $b$.
Let $p_1 , \ldots , p_{2k^2}$ be distinct $2k^2$ points on the circle in clockwise order. 
We regard each vertex in $A$ as a chord joining two points $p_{2i}$ and $p_{2i-1}$ for some $i\in \{1, \ldots, k^2\}$. 
Since every vertex in $B$ has a degree 2 in $H$, the chord representing a vertex in $B$ will be represented by a chord intersecting two disjoint chords $\overline{p_{2i} p_{2i-1}}$ and $\overline{p_{2j} p_{2j-1}}$ for some $i\neq j$ that represents its neighbors. 
Since the chords representing vertices in $B$ may cross each other, the resulting circle graph will be a graph obtained from $H$ by adding some edges in $B$. 
By Lemma~\ref{lem:bddmw}, this graph has mim-width at least $\frac{\mimw(H)}{2} \geq \frac{k+1}{18}$.

We explain how we can assign chords representing vertices in $A$ not to induce $K_3$ in $B$.

Let $C$ be a circle and $c_1 , \dots , c_{6k^2}$ be distinct $6k^2$ points on $C$ in clockwise order. 
For $i, j\in \{1, \ldots, k\}$, $a_{i,j}$ is represented as follows:
\begin{itemize}
\item if $i$ is odd, then $a_{i,j}$ is represented by a chord intersecting $c_{6(i-1)k + 6j-5}$ and $c_{6(i-1)k + 6j}$,  
\item if $i$ is even, then $a_{i,j}$ is represented by a chord intersecting $c_{6ik - 6j + 1}$ and $c_{6ik - 6j + 6}$. 
\end{itemize}
Remark that we assign chords representing vertices 
$a_{1,1}, \ldots , a_{1,k}, a_{2,k}, \ldots , a_{2,1}, \ldots$ in clockwise order.
We assign chords for vertices in $B_h$ as follows.
\begin{itemize}
\item For an odd integer $i\in \{1, \ldots, k\}$ and an integer $j\in \{1, \ldots, k-1\}$, a chord connecting $c_{6(i-1)k + 6j-1}$ and $c_{6(i-1)k + 6j+2}$ represents a vertex $h_{i,j} \in B_h$.
\item For an even integer $i\in \{1, \ldots, k\}$ and an integer $j\in \{1, \ldots, k-1\}$, a chord connecting $c_{6ik - 6j + 2}$ and $c_{6ik - 6j - 1}$ represents a vertex $h_{i,j} \in B_h$.
\end{itemize}
We assign chords for vertices in $B_v$ as follows.
Note that we assign chords to vertices in $B_v$ so that for $i\in \{1, \ldots, k-2\}$ and $j_1, j_2\in \{1, \ldots, k\}$, 
chords representing two vertices $v_{i,j_1}$ and $v_{i+1,j_2}$ cross if and only if either $i$ is odd and $j_1 < j_2$ or $i$ is even and $j_1 > j_2$.

\begin{itemize}
\item For an odd integer $i\in \{1, \ldots, k-1\}$ and an integer $j\in \{1, \ldots, k\}$, a chord connecting $c_{6(i-1)k + 6j-2}$ and $c_{6ik + 6(k-j)+3}$ represents a vertex $v_{i,j} \in B_v$.
\item For an even integer $i\in \{1, \ldots, k-1\}$ and an integer $j\in \{1, \ldots, k\}$, a chord connecting $c_{6ik - 6j + 4}$ and $c_{6ik + 6j - 3}$ represents a vertex $v_{i,j} \in B_v$.
\end{itemize}

Let $G_k$ be the circle graph obtained by this procedure. See Figure~\ref{fig:circlegraph} which describes $G_4$. Since every new edge of $G_k$ joins $v_{i,j_1}$ and $v_{i+1,j_2}$ for some $i\in \{1, \ldots, k-2\}$ and $j_1, j_2\in \{1, \ldots, k\}$, it is easy to see that $G_k$ contains no $K_3$. 
Note that $G_k$ is a graph obtained from $H$ by adding some edges in $B$.
By Lemma~\ref{lem:bddmw}, the resulting circle graph has mim-width at least $\frac{\mimw(H)}{2} \geq \frac{k+1}{18}$.
\end{proof}

\begin{proof}[Proof of Theorem~\ref{thm:unbddsimw}]
By Proposition~\ref{prop:boundingmimwidth2}, there is an increasing function $f : \mathbb{N} \to \mathbb{N}$ such that every graph $G$ with no induced subgraph isomorphic to $K_3 \mat K_3$ and $K_3 \mat S_3$ has mim-width at most $f(\simw(G))$. Since the circle graph $G_k$ of Lemma~\ref{lem:construction} contains no $K_3$ as a subgraph, we have $f(\simw(G_k)) \geq \mimw(G_k) \geq \frac{k+1}{18}$. It follows that $\simw(G_k)$ should be large for sufficiently large $k$.
\end{proof}

\section{Concluding remarks}\label{sec:conclusion}

In this paper, we showed that  every fixed LC-VSVP problem can be solved in XP time parameterized by $t$ on $(K_t\mat S_t)$-free chordal graphs and $(K_t\mat K_t)$-free co-comparability graphs.
We further give a mim-width bound on the class of graphs that have sim-width at most $w$ and do not contain $K_t \mat S_t$ and $K_t\mat K_t$ as induced minors or induced subgraphs.. 

Vatshelle~\cite{VatshelleThesis} asked whether there is a function $f$ such that given a graph $G$, one can in XP time parameterized by $\mimw(G)$ compute a branch-decomposition of mim-width at most $f(k)$.
This still remains as an open problem.  We asked the same question for sim-width.

\begin{QUE}
Does there exist a function $f$ such that, given a graph $G$, one can in XP time parameterized by the sim-width $k$ of $G$ compute a decomposition of sim-width $f(k)$?
\end{QUE}

We observed that one cannot expect XP algorithms for \textsc{Minimum Dominating Set} parameterized by sim-width.
However, we know that one can solve \textsc{Maximum Independent Set} or \textsc{$3$-Coloring} in polynomial time on both chordal graphs and co-comparability graphs, which are all known classes of constant sim-width. We ask whether those problems are NP-complete on graphs of sim-width $1$ or not.

\begin{QUE}
Is \text{Maximum Independent Set} NP-complete on graphs of sim-width at most $1$?
Also, is \text{$3$-Coloring} NP-complete on graphs of sim-width at most $1$?
\end{QUE}

It would be interesting to find more classes having constant sim-width, but unbounded  mim-width. We propose some possible classes, that are also presented in Figure~\ref{fig:diagram}.

\begin{QUE}
Do weakly chordal graphs or AT-free graphs have constant sim-width? 
\end{QUE}

We showed that \textsc{Minimum Dominating Set} can be solved in time $n^{\mathcal{O}(t)}$ on $(K_t\mat S_t)$-free chordal graphs, but we could not obtain an FPT algorithm. We ask whether it is W[1]-hard or not. 

\begin{QUE}
Is \textsc{Minimum Dominating Set} on chordal graphs W[1]-hard parameterized by the maximum $t$ such that the given graph has an $K_t\mat S_t$ induced subgraph? 
\end{QUE}

\end{document}